\numberwithin{equation}{section}
\newcommand{\beqa}{\begin{eqnarray}}
\newcommand{\eeqa}{\end{eqnarray}}
\newtheorem{theorem}{Theorem}[section]
\newtheorem{proposition}{Proposition}[section]
\newtheorem{corollary}{Corollary}[section]
\theoremstyle{remark}
\newtheorem{rem}{Remark}[section]}
\newtheorem{identity}{Identity}
\newcommand{\tr}{\operatorname{tr}}
\newcommand{\End}{\operatorname{End}}
\newcommand{\bra}[1]{\langle\,#1\,|}
\newcommand{\ket}[1]{|\,#1\,\rangle}
\newcommand{\moy}[1]{\langle\,#1\,\rangle}
\def\sul{\sum\limits}
\def\pl{\prod\limits}
\newcommand{\la}{\lambda}
\begin{document}

\begin{flushright}
LPENSL-TH-06/16
\end{flushright}

\bigskip

\bigskip

\begin{center}
\textbf{{\Large The open XXX  spin chain in the SoV framework:}}

\textbf{{\Large scalar product of separate states}}

\vspace{45pt}

\begin{large}

{\bf N.~Kitanine}\footnote[1]{IMB UMR5584, CNRS, Univ. Bourgogne Franche-Comt\'e, F-21000 Dijon, France; Nikolai.Kitanine@u-bourgogne.fr},~~
{\bf J.~M.~Maillet}\footnote[2]{Univ Lyon, Ens de Lyon, Univ Claude Bernard, CNRS, 
Laboratoire de Physique, F-69342 Lyon, France;
 maillet@ens-lyon.fr},~~
{\bf G. Niccoli}\footnote[3]{Univ Lyon, Ens de Lyon, Univ Claude Bernard, CNRS, 
Laboratoire de Physique, F-69342 Lyon, France; giuliano.niccoli@ens-lyon.fr},

{\bf V.~Terras}\footnote[4]{LPTMS, CNRS, Univ. Paris-Sud, Universit\'e Paris-Saclay, 91405 Orsay, France;\\ veronique.terras@lptms.u-psud.fr}
\end{large}

\vspace{45pt}

\today

\end{center}

\vspace{45pt}

\begin{abstract}
We consider the XXX open spin-1/2 chain with the most general non-diagonal boundary terms, that we solve by means of the quantum separation of variables (SoV) approach. We compute the scalar products of separate states, a class of states which notably contains all the eigenstates of the model. As usual for models solved by SoV, these scalar products can be expressed as some determinants with a non-trivial dependance in terms of the inhomogeneity parameters that have to be introduced for the method to be applicable. We show that these determinants can be transformed into alternative ones in which the homogeneous limit can easily be taken. These new representations can be considered as generalizations of the well-known determinant representation for the scalar products of the Bethe states of the periodic chain. In the particular case where a constraint is applied on the boundary parameters, such that the transfer matrix spectrum and eigenstates can be characterized in terms of polynomial solutions of a usual $T$-$Q$ equation, the scalar product that we compute here corresponds to the scalar product between two {\it off-shell} Bethe-type states. If in addition one of the states is an eigenstate, 
the determinant representation can be simplified, hence leading in this boundary case to direct analogues of algebraic Bethe ansatz determinant representations of the scalar products for the periodic chain.  
\end{abstract}

\newpage

\section{Introduction}

The integrable quantum spin chains with boundary fields have been  attracting an increasing attention for several years \cite{AlcBBBQ87,Skl88,GhoZ94,JimKKKM95,Nep02,Nep04,CaoLSW03,YanZ07,Bas06,BasK07,KitKMNST07,KitKMNST08,CraRS10,CraRS11,FraSW08,FraGSW11,CaoYSW13b,Nic13,FalKN14,KitMN14,BelC13,Bel15,BelP15,AvaBGP15,BelP16}. In the most general case, they have long remained a paradoxal example of quantum integrable models for which no exact solution was known, and therefore were constituting a very challenging problem from the mere point of view of quantum integrability. Moreover, these systems happen to have interesting physical applications. They can notably be used to model out-of-equilibrium and transport properties in the spin chains \cite{Pro11}, hence leading to numerous applications in condensed matter physics. They are also related to widely studied classical stochastic models such as ASEP \cite{deGE05}. 

The algebraic framework for these open spin chains was formulated by Sklyanin \cite{Skl88} as an extension, based on the reflection equation formerly  introduced by Cherednik \cite{Che84}, of the  quantum inverse scattering method (QISM)\cite{FadST79}.
In this framework, the boundary fields are encoded into two boundary matrices which correspond to scalar solutions of the reflection equation  \cite{deVG93,GhoZ94}.
In the particular case in which both boundary matrices are diagonal, the spin chain Hamiltonian can be diagonalized by means of coordinate Bethe ansatz \cite{AlcBBBQ87}, or by means of (a boundary version of) algebraic Bethe ansatz (ABA) \cite{Skl88}.

The case in which the boundary matrices are non-diagonal turned out to be much more involved. Whereas the first stage of QISM, {\it i.e.}  the algebraic setting leading to the identification of the commuting conserved charges, does not differ for diagonal and non-diagonal  cases, the effective construction of the eigenstates is more difficult, and even 
seems to be not possible in the framework of the usual ABA approach 
for the completely generic non-diagonal case (there exist however some degenerate cases where ABA is still efficient \cite{BelCR13}).  There are in fact two essential difficulties for the ABA approach: there is no evident reference state and the transfer matrix mixes all four monodromy matrix entries.

The first successful description of the spectrum of a spin chain with non-diagonal boundary terms \cite{Nep02} was obtained using the fusion procedure \cite{KulRS81}. It was shown that, provided the left and right boundary parameters are related by some particular constraint, the transfer matrix spectrum  in the roots of unity points can be characterized in terms of polynomial solutions of a $T$-$Q$ equation of Baxter's type \cite{Bax82L}. Later \cite{Nep04}, the root of unity requirement turned out to be unnecessary. However, the boundary constraint remained essential for the description of the spectrum. The same constraint appeared naturally in the first ABA-like construction of (some of) the eigenstates \cite{CaoLSW03}. The latter were obtained using a local gauge transformation, following a procedure similar to the ABA solution of the XYZ spin chain \cite{Bax73a,FadT79}. This approach was later completed by the identification of a second reference state  \cite{YanZ07}. Several other methods (coordinate Bethe ansatz with elements of matrix product ansatz \cite{CraRS10,CraRS11}, 
$q$-Onsager algebra \cite{Bas06,BasK07} etc.) led to the same constraint as a necessary condition to characterize the spectrum  in terms of (polynomial) solutions of a $T$-$Q$ equation.
In  \cite{CaoYSW13b} it was suggested to circumvent this problem by allowing the $T$-$Q$ equation to admit some inhomogeneous term: hence, the spectrum in the unconstrained case can {\em a priori} still be described by polynomial solutions of such an equation provided the inhomogeneous term is adequately chosen.
A construction of the eigenstates, leading to the same inhomogeneous $T$-$Q$ equation, was also proposed in the framework of the so-called modified algebraic Bethe ansatz \cite{BelC13,Bel15,BelP15,AvaBGP15}. 

Open spin chains with non-diagonal boundary conditions have also been studied by means of the quantum Separation of Variables (SoV) \cite{FraSW08,FraGSW11,Nic12,Nic13,FalKN14,FalN14}.
This method, first introduced by Sklyanin in the QISM framework as an alternative to ABA for solving models in which a reference state cannot be identified \cite{Skl85}, has recently shown to be applicable to a large class of models \cite{NicT10,Nic10a,Nic11,GroN12,Nic13a,NicT15,LevNT16,NicT16}. It has in particular permitted to construct the complete set of eigenstates for the spin chains with the most general (unconstrained) boundary terms \cite{Nic12,FalKN14,FalN14}. It should be mentioned at this point that, within the SoV approach, the completeness of the eigenstates construction is intrinsic to the method:  this has to be compared to ABA in which the completeness of Bethe states is usually very difficult to prove (as notably for open spin chains with non-diagonal boundary terms). One of the inconvenience of the SoV approach, however, is that it applies to completely inhomogeneous versions of the model, and that the characterization of the transfer matrix spectrum and eigenstates {\em a priori} depends on the inhomogeneity parameters in a way which makes the homogeneous limit not so easy to recover. A reformulation in terms of solutions of a functional $T$-$Q$ equation has therefore to be worked out (see for instance \cite{NicT15,LevNT16}).  In the case of the spin chains with the most general (unconstrained) boundary terms, such a reformulation is presently not known, at least in terms of a usual $T$-$Q$ equation of Baxter's type. It is nevertheless possible, as shown in \cite{KitMN14}, to alternatively characterize the SoV spectrum and eigenstates of the model in terms of the solutions of the aforementioned inhomogeneous $T$-$Q$ equation, which also proves the completeness of this description. 


\medskip
In this paper, we pursue the study of the XXX open spin chain with non-diagonal boundary conditions in the SoV framework. Our aim is here to compute the scalar products of the so-called {\it separate states}, a class of states which notably includes all the eigenstates of the transfer matrix. This should be the first step towards the investigation of interesting physical quantities such as form factors and correlation functions as shown recently in a SoV settings in \cite{GroMN12} and \cite{Nic12,Nic13}. It is useful at this point to recall that, for models solved by ABA for which such a formula is known (such as for instance the closed spin chain with periodic boundary conditions), this determinant representation for the scalar products of Bethe states (or more precisely for the scalar product of one on-shell and one off-shell Bethe states) \cite{Sla89} has played a crucial role in the computation of form factors and correlation functions \cite{KitMT99,KitMT00}. In particular, the determinant representations for the form factors \cite{KitMT99} that follow using the resolution of the quantum inverse scattering problem \cite{KitMT99,MaiT00} and the determinant formula of the scalar products have been essential for the asymptotic analysis of the correlation functions \cite{KitKMST11b,KozT11,KitKMST12,KitKMT14,DugGK13} and for the numerical study of the dynamical structure factors \cite{CauHM05,CauM05,PerSCHMWA06} which leads to direct applications in condensed matter physics. 

In the case of open spin chains with diagonal boundaries, a determinant representation similar to \cite{Sla89} has also been obtained for the scalar products of Bethe states \cite{Wan02}, and has been used for the computation of the correlation functions  \cite{KitKMNST07,KitKMNST08}.  Unfortunately, the attempts to generalize this result to more general non-diagonal boundaries have so far remained unsuccessful, even in the case with a constraint which is {\em a priori} still solvable by Bethe ansatz:
the main problem here comes from the fact that the dual Bethe states could not be constructed in the usual ABA form.
Hence, the determinant representations that have been obtained in this context  \cite{FilK11,YanCFHHSZ11} are valid only for very special states in an over-constrained case (two boundary constraints instead of one). There are nevertheless strong reasons to expect that such a determinant representation for the scalar products should exist even for the most general boundary terms. Let us mention in particular \cite{DuvP15}, in which a determinant representation was obtained for the scalar products of two Bethe states in the case of a semi-infinite chain, and \cite{BelP16} in which such a type of formula, based on the construction of Bethe states through modified algebraic Bethe ansatz and on the description of the spectrum through an inhomogeneous $T$-$Q$ equation, was conjectured for the XXX chain with generic boundaries from the study of particular lattices with one or two sites.

As shown in the present paper, this problem can actually be solved for the open XXX chain by considering the model in the SoV framework. In fact, another considerable advantage of this method (apart from the established completeness of the eigenstate characterization) is that it leads by construction to determinant representations for the scalar product of two separate states, which can be seen in this context as the SoV counterpart of the scalar product of two (off-shell) Bethe states. However, these representations are, in their initial form, quite different from the aforementioned ABA type formula. In particular, they once again strongly depend on the (unphysical) inhomogeneity parameters that had to be introduced in the model to enable its solution by SoV, in a way that makes the consideration of the homogeneous (physical) limit not so obvious. They therefore need to be transformed into formulas in which the homogeneous limit can be taken explicitly. Ideally, these new representations should also, as the ABA type formula, be well adapted to the consideration of the thermodynamic limit: their rewriting should for instance involve in a natural way the roots of the $Q$-solution to the functional $T$-$Q$ equation which is used (at least in the constrained case when it can be clearly identified) to characterize the eigenstates. Let us mention here that such a reformulation has recently been performed in the case of the anti-periodic XXX chain \cite{KitMNT16}, for which an equivalence between the initial SoV representation of the scalar products and a Slavnov-type formula \cite{Sla89} (or a generalized version of it  \cite{FodW12}) has been established.

In this paper we consider the scalar products for the XXX chain with the most general boundary terms
in the framework of the SoV approach. 
We show that the scalar products of two generic separate states can be written as the determinant of a matrix which has a natural homogeneous limit. The obtained representation appears as some generalization of \cite{Sla89,FodW12}. In the case with the constraint, the corresponding separate states can be identified with some off-shell Bethe states (the SoV construction proving in that case the completeness of the ABA description), hence leading to an off-shell generalization of \cite{Sla89,FodW12}
\footnote{Although it was unfortunately not clearly written in that way in our previous paper \cite{KitMNT16}, it is worth noticing that such a generalization of the scalar product formula for two {\em off-shell} separate states can also be straightforwardly deduced from the SoV study of the anti-periodic XXX chain and the algebraic identities used in  \cite{KitMNT16}.}. This representation simplifies when one of the corresponding states becomes on-shell, {\it i.e.} when one of the separate states is associated with a polynomial $Q$-solution of the associated $T$-$Q$ equation.

The paper is organised  as follows.
In Section~\ref{sec-XXXopen} we introduce the open XXX spin-1/2 chain with the most general integrable boundary terms in the framework of the representation theory of the reflexion algebra.
In Section~\ref{sec-SOV} we  present the SoV construction of the eigenstates in the non-diagonal case, and explain how to reformulate the complete description of the spectrum and eigenstates that we obtain in this framework in terms of the $Q$-solutions of some inhomogeneous (in the unconstrained case) or homogeneous (in the constrained case) functional $T$-$Q$ equation. We introduce also  the notion of separate states which is central to this paper, and make the connexion with off-shell Bethe states.
In Section~\ref{sec-sp} we compute the scalar products of separate states. We introduce several useful identities which permit us to rewrite the initial SoV determinant representations in a much more convenient form (of Izergin type or Slavnov type) for the consideration of the homogeneous and thermodynamic limits.
Finally, we give in the first appendix a detailed description of the SoV basis, and in the second appendix a proof of the main identity enabling us to reformulate the scalar product of two separate states as a generalized version of \cite{Sla89,FodW12}.

\section{The open XXX spin chain}
\label{sec-XXXopen}

The open spin-1/2 XXX quantum spin chain with the most general non-diagonal
integrable boundary terms has the following Hamiltonian: 
\begin{multline} \label{H-XXX-Non-D}
H =\sum_{i=1}^{N-1}\Big[\sigma _{i}^{x}\sigma _{i+1}^{x}+\sigma _{i}^{y}\sigma_{i+1}^{y}+\sigma _{i}^{z}\sigma _{i+1}^{z}\Big]
+\frac{\eta }{\zeta _{-}}\Big[\sigma_{1}^{z}+2\kappa _{-}\left(e^{\tau _{-}}\sigma _{1}^{+}+e^{-\tau _{-}}\sigma
_{1}^{-}\right)\Big]  \\
  +\frac{\eta}{\zeta _{+}} \Big[\sigma _{N}^{z}+2\kappa _{+}\left(e^{\tau_{+}}\sigma _{N}^{+}+e^{-\tau _{+}}\sigma _{N}^{-}\right)\Big].
\end{multline}
Here $\sigma _{i}^{\alpha}$, $\alpha=x,y,z$, are local spin-$1/2$ operators (Pauli matrices) acting on the local quantum space $\mathcal{H}_i\simeq\mathbb{C}^2$ at site $i$, $\eta$ is a fixed arbitrary parameter, and the six complex boundary parameters $\zeta _{\pm }$, $\kappa _{\pm }$ and $\tau _{\pm }$ parametrize the coupling of the spin operators at site $1$ and $N$ with two arbitrary boundary magnetic fields.

The study of the Hamiltonian \eqref{H-XXX-Non-D} can be performed in the framework of the representation theory of the reflection algebra \cite{Skl88}.
The latter is an associative algebra defined by the generators $\mathcal{U}_{ij}(\lambda)$, $i,j=1,\ldots,n$, considered as the elements of a $n\times n$ square matrix $\mathcal{U}(\lambda)$, and by the relations 
\begin{equation}
R_{12}(\lambda-\mu)\,\mathcal{U}_{1}(\lambda)\,R_{12}(\lambda+\mu)\,\mathcal{U}_{2}(\mu)=\mathcal{U}_{2}(\mu)\,R_{12}(\lambda+\mu)\,\mathcal{U}_{1}(\lambda)\,R_{12}(\lambda-\mu),  \label{refl-eq}
\end{equation}
where $R(\lambda)\in\End(\mathbb{C}^n\otimes\mathbb{C}^n)$ is an $R$-matrix solution of the Yang-Baxter equation.
In \eqref{refl-eq}, the indices label as usual the space(s) of the tensor product $\mathbb{C}^n\otimes\mathbb{C}^n$ on which the corresponding matrix acts, {\it i.e.} for instance $\mathcal{U}_1(\lambda)=\mathcal{U}\otimes \mathrm{Id}$. The relation \eqref{refl-eq} is called reflection equation \cite{Che84}, or boundary Yang-Baxter equation.

In the present case, the $R$-matrix of the model,
\begin{equation}
R(\lambda )=\left( 
\begin{array}{cccc}
\lambda +\eta & 0 & 0 & 0 \\ 
0 & \lambda & \eta & 0 \\ 
0 & \eta & \lambda & 0 \\ 
0 & 0 & 0 & \lambda +\eta
\end{array}
\right)\
 \in\End(\mathbb{C}^2\otimes\mathbb{C}^2),  \label{Rmatrix}
\end{equation}
is the 6-vertex polynomial solution of the Yang-Baxter equation. The most general non-diagonal scalar solution $K(\lambda)\in\End(\mathbb{C}^2)$ of the reflection equation \eqref{refl-eq} associated with the $R$-matrix \eqref{Rmatrix} is \cite{deVG93,GhoZ94}
\begin{equation}
K(\lambda;\zeta ,\kappa ,\tau )=\frac{1}{\zeta }\left( 
\begin{array}{cc}
\zeta +\lambda & 2\kappa e^{\tau }\lambda \\ 
2\kappa e^{-\tau }\lambda & \zeta -\lambda
\end{array}
\right),  \label{Kxxx}
\end{equation}
which depends, in addition to the spectral parameter $\lambda$, on 3 different parameters $\zeta$, $\kappa$ and $\tau$.
Using this scalar solution one  can construct two different classes of
solutions to the reflection equation \eqref{refl-eq} in the 2$^{N}$-dimensional representation space
$\mathcal{H}=\otimes _{n=1}^{N}\mathcal{H}_{n}$ which corresponds to the physical space of states of the Hamiltonian \eqref{H-XXX-Non-D}.
More precisely, defining the two boundary matrices
\begin{equation}\label{Kpm}
K_{-}(\lambda )=K(\lambda-\eta/2 ;\zeta _{-},\kappa _{-},\tau _{-}),\qquad 
K_{+}(\lambda )=K(\lambda +\eta/2 ;\zeta _{+},\kappa _{+},\tau _{+}),
\end{equation}
where $\zeta _{\pm },\kappa _{\pm },\tau _{\pm }$ are the boundary
parameters appearing in \eqref{H-XXX-Non-D}, one can introduce the following boundary monodromy matrices:
\begin{align}
 &\mathcal{U}_{-}(\lambda ) =M_{0}(\lambda )\, K_{-}(\lambda )\, \hat{M}_{0}(\lambda )
=\left( 
\begin{array}{cc}
\mathcal{A}_{-}(\lambda ) & \mathcal{B}_{-}(\lambda ) \\ 
\mathcal{C}_{-}(\lambda ) & \mathcal{D}_{-}(\lambda )
\end{array}
\right) \in \text{End}(\mathcal{H}_{0}\otimes \mathcal{H}), 
  \label{U-}\\
&\mathcal{U}_{+}^{t_{0}}(\lambda ) 
=M_{0}^{t_{0}}(\lambda)\, K_{+}^{t_{0}}(\lambda )\, \hat{M}_{0}^{t_{0}}(\lambda )
=\left( 
\begin{array}{cc}
\mathcal{A}_{+}(\lambda ) & \mathcal{C}_{+}(\lambda ) \\ 
\mathcal{B}_{+}(\lambda ) & \mathcal{D}_{+}(\lambda )
\end{array}
\right) \in \text{End}(\mathcal{H}_{0}\otimes \mathcal{H}).
\label{U+}
\end{align}
Here $\mathcal{H}_0=\mathbb{C}^2$ is called auxiliary space, and $X^{t_0}$ denotes the transpose of the matrix $X$ in $\mathcal{H}_0$.
$M_{0}(\lambda )\in \End(\mathcal{H}_{0}\otimes \mathcal{H})$ stands for the
bulk monodromy matrix which, for an inhomogeneous chain of length $N$ with inhomogeneity parameters $\xi_1,\ldots,\xi_N\in\mathbb{C}$, is defined as the following ordered product of $R$-matrices,
\begin{equation}\label{mon}
M_{0}(\lambda )=R_{0N}(\lambda -\xi _{N}-\eta /2)\dots
R_{01}(\lambda -\xi _{1}-\eta /2)=\left( 
\begin{array}{cc}
A(\lambda ) & B(\lambda ) \\ 
C(\lambda ) & D(\lambda )
\end{array}
\right),
\end{equation}
and which satisfies the quadratic relation:
\begin{equation}\label{RTT}
R_{12}(\lambda -\mu )\, M_{1}(\lambda )\, M_{2}(\mu )=M_{2}(\mu )\, M_{1}(\lambda)\, R_{12}(\lambda -\mu ).
\end{equation}
In \eqref{U-}-\eqref{U+} we have also used  the notation:
\begin{equation}
\hat{M}_0(\lambda )=(-1)^{N}\,\sigma _{0}^{y}\,M_0^{t_0}(-\lambda)\,\sigma _{0}^{y}.
\end{equation}
Then, the two matrices $\mathcal{V}_-(\lambda)= \mathcal{U}_{-}(\lambda+\eta/2 )$ and $\mathcal{V}_{+}(\lambda )=\mathcal{U}_{+}^{t_{0}}(-\lambda-\eta/2 )$ are two solutions of the
reflection equation \eqref{refl-eq}.
They enable one to define a one-parameter family of transfer matrices,
\begin{align}
\mathcal{T}(\lambda )
 &=\tr_{0}\{K_{+}(\lambda )\,M(\lambda)\,K_{-}(\lambda )\, \hat{M}(\lambda )\}
   \nonumber\\
 &=\tr_{0}\left\{ K_{+}(\lambda)\, \mathcal{U}_{-}(\lambda )\right\} 
   =\tr_{0}\left\{ K_{-}(\lambda )\, \mathcal{U}_{+}(\lambda )\right\} \in \text{End}(\mathcal{H}),
\label{transfer}
\end{align}
which are commuting operators on $\mathcal{H}$.
The Hamiltonian \eqref{H-XXX-Non-D} of the XXX spin-1/2 Heisenberg chain with boundary conditions given by $\zeta _{\pm }$, $\kappa _{\pm }$ and $\tau _{\pm }$ can then be obtained in the
homogeneous limit ($\xi _{m}=0$ for $m=1,\ldots ,N$) as the
following derivative of the transfer matrix \eqref{transfer}:
\begin{equation}
H=\frac{2\,\eta ^{1-2N}}{\tr\{K_{+}(\eta /2)\}\,\tr\{K_{-}(\eta/2)\}}\,
\frac{d}{d\lambda }\mathcal{T}(\lambda )_{\,\vrule height13ptdepth1pt\>%
{\lambda =\eta /2}\!}+\text{constant.}  \label{Htxxx}
\end{equation}

We finally recall the inversion relation for the boundary monodromy matrix $\mathcal{U}_-(\lambda )$ \eqref{U-},
\begin{equation}\label{invert-U-}
   \mathcal{U}_-(\lambda +\eta/2)\ \mathcal{U}_-(-\lambda +\eta/2)
   = \frac{\mathrm{det}_{q}\,\mathcal{U}_-(\lambda )}{2\lambda-2\eta},
\end{equation}
where $\mathrm{det}_{q}\,\mathcal{U}_-(\lambda )$ is the quantum determinant, which is a central element of the corresponding boundary algebra: $\big[ \mathrm{det}_{q}\,\mathcal{U}_-(\lambda ), \mathcal{U}_-(\mu )\big]=0$.
It can be expressed  as
\begin{equation}\label{detqU-}
   \mathrm{det}_{q}\,\mathcal{U}_-(\lambda )=\mathrm{det}_{q} M(\lambda)\, \mathrm{det}_{q}M(-\lambda)\,\mathrm{det}_{q}K_-(\lambda).
\end{equation}
Here
\begin{equation}\label{detqM}
   \mathrm{det}_q M(\lambda)= a(\lambda+\eta/2)\, d(\lambda-\eta/2),
\end{equation}
denotes the bulk quantum determinant, with
\begin{equation}\label{a-d}
a(\lambda )\equiv \prod_{n=1}^{N}(\lambda -\xi _{n}+\eta /2),
\qquad 
d(\lambda )\equiv \prod_{n=1}^{N}(\lambda -\xi _{n}-\eta /2),
\end{equation}
whereas $\mathrm{det}_{q}K_-(\lambda)$ denote the quantum determinant of the scalar boundary matrix $K_-(\lambda)$. Similar relations hold for the boundary monodromy matrix $\mathcal{U}_+(\lambda )$ \eqref{U+} and for its quantum determinant $\mathrm{det}_{q}\,\mathcal{U}_+(\lambda )$, which can easily be deduced from the previous ones using the fact that 
$\mathcal{U}_{+}^{t_{0}}(-\lambda )$ satisfies the same algebra as $\mathcal{U}_-(\lambda )$.
The quantum determinants of the scalar boundary matrices $K_\mp(\lambda)$ \eqref{Kpm} are explicitly given as
\begin{equation}\label{detqK-}
    \mathrm{det}_{q}K_\mp(\lambda)
    = \mp 2(\lambda\mp\eta)\left(\frac{1+4\kappa_\mp^2}{\zeta_\mp^2}\,\lambda^2-1\right).
\end{equation}
%
%
%

\section{Spectrum and eigenstates by Separation of Variables}
\label{sec-SOV}

For generic values of the inhomogeneity parameters $\xi_n$, $1\le n\le N$, and of the boundary parameters $\zeta_\pm$, $\kappa_\pm$ and $\tau_\pm$, the spectrum and eigenstates of the transfer matrix \eqref{transfer} can be completely characterized in the framework of the quantum version of the Separation of Variables approach \cite{Skl85,Skl90,Skl92}. This method can be directly applied if one of the boundary matrices is triangular  \cite{Nic12}. However due to the $GL(2,\mathbb{C})$ symmetry of the $R$-matrix \eqref{Rmatrix} the most general solutions  \eqref{Kpm} of the reflection equation \eqref{refl-eq} can always be reduced to a case with (at least) one triangular boundary matrix (see also \cite{FraSW08}). 

\subsection{Reduction to a triangular case}

The $R$-matrix \eqref{Rmatrix} of the XXX spin-1/2 chain satisfies, for any invertible $2\times 2$ scalar matrix $W$,  the
following symmetry property (scalar Yang-Baxter equation):
\begin{equation}
R_{12}(\lambda )\,W_{1}\,W_{2}=W_{2}\,W_{1}\,R_{12}(\lambda ).  \label{GL2-Sym}
\end{equation}
%
For any such matrix $W\in GL(2,\mathbb{C})$, we can apply the corresponding global gauge  transformation to the boundary monodromy matrices:
\begin{equation}
   \mathcal{\bar{U}}_{\mp }(\lambda)
   =W_{0}\, \Gamma_W\, \mathcal{U}_{\mp }(\lambda )\,\Gamma_W^{-1}\, W_{0}^{-1}
   =\left( 
\begin{array}{cc}
\mathcal{\bar{A}}_{\mp}(\lambda ) & \mathcal{\bar{B}}_{\mp}(\lambda ) \\ 
\mathcal{\bar{C}}_{\mp}(\lambda ) & \mathcal{\bar{D}}_{\mp}(\lambda )
\end{array}
\right) ,
\label{GU1}
\end{equation}
where $W_0$ acts on the auxiliary space, whereas $\Gamma_W \equiv \otimes_{n=1}^N W_n$ acts on the quantum space of states. Due to \eqref{GL2-Sym}, the matrices $\mathcal{\bar{U}}_{\mp }(\lambda)$ satisfy the same reflexion equations as $\mathcal{U}_{\mp }(\lambda)$. They can be expressed in terms of the gauge transformed scalar boundary matrices
\begin{equation}
\bar{K}_{\mp }(\lambda )=W_{0} \,K_{\mp }(\lambda )\,W_{0}^{-1}
=\left( 
\begin{array}{cc}
\bar{a}_{\mp }(\lambda ) & \bar{b}_{\mp }(\lambda ) \\ 
\bar{c}_{\mp }(\lambda ) & \bar{d}_{\mp }(\lambda )
\end{array}
\right) ,
\label{XXX-Similarity}
\end{equation}
in the following form
\begin{align}
\mathcal{\bar{U}}_{-}(\lambda ) 
&=M(\lambda )\,\bar{K}_{-}(\lambda )\,\hat{M}(\lambda ) , 
\\
\mathcal{\bar{U}}_{+}(\lambda ) 
&=\left( M^{t_{0}}(\lambda )\,\bar{K}_{+}^{t_{0}}(\lambda )\,\hat{M}^{t_{0}}(\lambda )\right) ^{t_{0}} .
\end{align}
The corresponding gauge transformed  transfer matrix $\mathcal{\bar{T}}(\lambda )=\Gamma_W\, \mathcal{T}(\lambda )\,  \Gamma_W^{-1}$ is then given as
\begin{equation}\label{gauge-transfer}
\mathcal{\bar{T}}(\lambda )=\tr_{0}\left\{\bar{K}_{+}(\lambda )\,M(\lambda)\,\bar{K}_{-}(\lambda )\,\hat{M}(\lambda )\right\},
\end{equation}
or, in terms of the matrix elements of the gauged transformed boundary monodromy matrix $\mathcal{\bar{U}}_{-}(\lambda ) $, as
\begin{equation}
  \mathcal{\bar{T}}(\lambda )= \bar{a}_+(\lambda ) \, \mathcal{\bar{A}}_{-}(\lambda )
  +\bar{b}_+(\lambda ) \, \mathcal{\bar{C}}_{-}(\lambda ) + \bar{c}_+(\lambda ) \, \mathcal{\bar{B}}_{-}(\lambda )
  +\bar{d}_+(\lambda ) \, \mathcal{\bar{D}}_{-}(\lambda ).
\end{equation}

The $GL(2,\mathbb{C})$ symmetry can be used to transform the boundary matrices into triangular ones.
%
%
So as to obtain an algebraic framework directly solvable by SoV, we can for instance choose a gauge transformation such that the new boundary matrix $\bar{K}_+(\lambda)$ \eqref{XXX-Similarity} is lower triangular, {\it i.e.}   $\bar{b}_{+}(\lambda )=0$, the other entries of  $\bar{K}_-(\lambda)$ and  $\bar{K}_+(\lambda)$ being allowed to take any  form provided that $\bar{b}_{-}(\lambda )\neq 0$ \cite{Nic12}.
To this aim, we consider the following $2\times 2$ invertible matrix,
\begin{equation}\label{W}
W\equiv W_{\epsilon_+,\epsilon_-}
=\left( 
\begin{array}{cc}
 1 & -\frac{1-\epsilon_+\sqrt{1+4\kappa_+^2} }{2\kappa_+ e^{-\tau_{+}} }\\ 
 \frac{1-\epsilon_-\sqrt{1+4\kappa_-^2} }{2\kappa_- e^{\tau _{-}} } & 1
\end{array}
\right) ,
\end{equation}
%
%
%
for a given
choice of $(\epsilon_+,\epsilon_-)\in\{-1,1\}^2$.
Then the gauge transformed boundary matrices \eqref{XXX-Similarity} are of the form
\begin{equation}
\bar{K}_+(\lambda )=\mathrm{I}+\frac{\la+\eta/2}{\bar{\zeta}_+}(\sigma^z+\bar{\mathsf{c}}_{+}\sigma^-),\qquad \bar{K}_-(\lambda )=\mathrm{I}+\frac{\la-\eta/2}{\bar{\zeta}_-}(\sigma^z+\bar{\mathsf{b}}_{-}\sigma^+),
\label{triangKnobar}
\end{equation}
with
\begin{align}
   &\bar{\zeta}_{\pm}=  \epsilon_\pm\, \frac{\zeta_\pm  }{ \sqrt{1+4\kappa_\pm^2}  }\, ,
\label{zeta-bar}\\
  &\bar{\mathsf{c}}_{+}
  = \epsilon_+\, \frac{ 2\kappa_+ e^{-\tau_+} }{ \sqrt{1+4\kappa_+^2} }
     \left[ 1
     +\frac{ \big(1+\epsilon_+\sqrt{1+4\kappa_+^2} \big)\big(1-\epsilon_-\sqrt{1+4\kappa_-^2} \big)}
               {4\kappa_+\kappa _{-}e^{\tau _- -\tau_+}  } \right],
\\
 & \bar{\mathsf{b}}_{-}
  = \epsilon_-\, \frac{ 2\kappa_- e^{\tau_-} }{ \sqrt{1+4\kappa_-^2} }
     \left[ 1
     +\frac{ \big(1-\epsilon_+\sqrt{1+4\kappa_+^2} \big)\big(1+\epsilon_-\sqrt{1+4\kappa_-^2} \big)}
               {4\kappa_+\kappa _{-}e^{\tau _- -\tau_+}  } \right]. 
                \label{Tri-Tri-gauge}
\end{align}
Note that the  quantum determinants \eqref{detqK-} can be expressed in terms of the new parameters \eqref{zeta-bar} as
\begin{equation}\label{detqKtri}
    \mathrm{det}_{q}K_\pm(\lambda)=  \mathrm{det}_{q}\bar{K}_\pm(\lambda)
    = \pm 2(\lambda\pm\eta)\left(\frac {\lambda^2}{\bar{\zeta}_\pm^2}\,-1\right).
\end{equation}

Due to the symmetry of the $R$-matrix \eqref{GL2-Sym}, the transfer matrix of the original problem $\mathcal{T}(\lambda )$ has the same spectrum  as the gauge transformed transfer matrix $\mathcal{\bar{T}}(\lambda )$ \eqref{gauge-transfer}. After the global gauge transformation the eigenstates can be constructed in terms of  the separate variables diagonalizing $\mathcal{\bar{B}}_{-}(\lambda )$ provided that $\bar{\mathsf{b}}_{-}\not=0$, or alternatively   by introducing the separate variables  diagonalizing $\mathcal{\bar{C}}_{+}(\lambda )$ provided that $\bar{\mathsf{c}}_+\not=0$. If both coefficients are zero there is no need to apply the separation of variables as the model is equivalent to a case with diagonal boundary terms which can be solved by algebraic Bethe ansatz. In all other cases, due to the free choice of the two sign parameters  $\epsilon_\pm$, we can always  choose a gauge transformation in such a way that $\bar{\mathsf{b}}_{-}\not=0$.

The eigenstates of the initial problem $\ket{\Psi}$ can be therefore expressed in terms of the eigenstate of the new triangular case $\ket{\bar{\Psi}}$ using the global gauge transformation $\Gamma_W$,
\begin{equation}
\ket{\Psi}=\Gamma_W^{-1}\,\ket{\bar{\Psi}},\qquad\bra{\Psi} =\bra{\bar{\Psi}}\,\Gamma_W.
\end{equation}
The aim of this paper is the computation of scalar products of the off-shell and on-shell states. It is evident that for such computations the global gauge transformation $\Gamma_W$ will be always canceled out.


\subsection{
Construction of the eigenstates in the SoV framework}

We shall suppose from now on that  $\bar{\mathsf{b}}_{-}\not=0$ and that the inhomogeneity parameters are generic, or more precisely that they satisfy the condition 
\begin{equation}
\xi_j,\xi _j \pm \xi _k\notin\{0,-\eta,\eta \},
\quad
\forall j,k\in \{1,\ldots,N\},\ j\neq k.
  \label{cond-inh}
\end{equation}
%
Under these hypotheses we can, as in \cite{Nic12}, construct a basis
\begin{equation}\label{right-basis}
   \big\{\, \ket{\mathbf{h}_-}, \ \mathbf{h}\equiv(h_1,\ldots,h_N)\in\{0,1\}^N\, \big\}
\end{equation}
of the space of states and a basis
\begin{equation}\label{left-basis}
   \big\{\, \bra{\mathbf{h}_-}, \ \mathbf{h}\equiv(h_1,\ldots,h_N)\in\{0,1\}^N\, \big\}
\end{equation}
of the dual space of states which diagonalize the operator family $\bar{\mathcal{{B}}}_{-}(\lambda )$ independently of the spectral parameter $\lambda$ (see Appendix~\ref{app-Beigen} for details).
The two bases \eqref{right-basis} and \eqref{left-basis} as constructed in Appendix~\ref{app-Beigen} are orthogonal with respect to the canonical scalar product in the spin basis:
\begin{equation}\label{norm}
     \moy{\mathbf{h}'_- \mid\mathbf{h}_-}
     = \delta_{\mathbf{h},\mathbf{h}'}\,\frac{  N_{\boldsymbol{\xi},-} }{\widehat{V}\big( \xi_1^{(h_1)} ,\ldots,\xi_N^{(h_N)}\big)}\,  .
\end{equation}
According to the normalization chosen in Appendix~\ref{app-Beigen}, the normalization factor $N_{\boldsymbol{\xi},-}$, which depends on the $N$-tuple $\boldsymbol{\xi}\equiv(\xi_1,\ldots,\xi_N)$ and on the parameters ${\bar{\zeta}}_-$, ${\bar{\mathsf{b}}}_{-}$ of the boundary matrix $\bar{K}_-$ \eqref{triangKnobar},  is
\begin{equation}\label{norm-factor}
   N_{\boldsymbol{\xi},-} =\widehat{V}( \xi_1,\ldots, \xi_N)\,
    \frac{\widehat{V}( \xi_1^{(0)},\ldots, \xi_N^{(0)})}{\widehat{V}( \xi_1^{(1)},\ldots, \xi_N^{(1)})}
    \,\prod_{n=1}^N\frac{\xi_n\,{\bar{\mathsf{b}}}_{-}}{\xi_n-{\bar{\zeta}}_-}.
\end{equation}
In \eqref{norm} and \eqref{norm-factor} we have used the following shorthand notations, that we will use throughout the whole paper:
\begin{equation}\label{xi-h}
    \xi_n^{(h)}=\xi_n+\eta/2-h\eta, \qquad 1\le n\le N, \quad h\in\{0,1\},
\end{equation}
for shifted inhomogeneity parameters, and
\begin{equation}\label{VDM}
\widehat{V}( x_1,\ldots, x_N)=\det_{1\le i,j\le N}\big[ x_i^{2(j-1)}\big]=\pl_{j<k}( x_k^2- x_j^2),
\end{equation}
for the Vandermonde determinant of a $N$-tuple of square variables $(x_1^2,\ldots, x_N^2)$.

It follows from the algebraic construction of the transfer matrix $\mathcal{T}(\lambda )$  \eqref{transfer} that it is a polynomial function  of degree $N+1$ in the variable $\lambda ^{2}$. Its leading coefficient is given by,
\begin{equation*}
  t_{N+1}\,\lambda^{2(N+1)}\, \mathrm{Id},
  \quad \text{with}
  \quad t_{N+1}= \frac{2}{\zeta_+\zeta_-}[1+4\kappa_+\kappa_-\cosh(\tau_+-\tau_-)]
                        =\frac{2+\bar{\mathsf{b}}_{-}\bar{\mathsf{c}}_{+}}{\bar{\zeta}_+\,\bar{\zeta}_-}.
\end{equation*}
By using the SoV basis \eqref{right-basis} and \eqref{left-basis}, it is possible to completely characterize its spectrum and eigenstates in terms of solutions of a set of discrete equations, as stated in the following theorem.

\begin{theorem}\label{th-Sp-0}
Let  the inhomogeneity parameters be generic   \eqref{cond-inh} and  ${\bar{\mathsf{b}}}_{-}\not=0$.

Then, the spectrum $\Sigma _{\mathcal{T}}$ of $\mathcal{T}(\lambda )$ is simple and
coincides with the set of functions of the form
\begin{multline}\label{form-t}
t(\lambda ) 
=t_{N+1} \left( \lambda ^{2}- (\eta/2)^2 \right)
\prod_{b=1}^{N}\big(\lambda^{2}-\xi_{b}^{2} \big)  
 +2 (-1)^N\mathrm{det}_q M(0)
 \prod_{b=1}^{N}\frac{\lambda ^{2}-\xi_{b}^{2}}{\big(\eta /2\big)^{2}-\xi_{b}^{2}}
 \\
 +\sum_{a=1}^{N}\frac{4\lambda ^{2}-\eta ^{2}}{4 \xi_{a}^{2}-\eta ^{2}}\,
 \prod_{\substack{ b=1  \\ b\neq a}}^{N}
 \frac{\lambda ^{2}-\xi_{b}^{2}}{\xi_{a}^{2}- \xi_{b}^{2}}\,
t(\xi _{a}),
\end{multline}
which satisfy the discrete system of equations
\begin{equation}
\det \left( 
\begin{array}{cc}
t\big(\xi _{n}^{(0)}\big) & -\mathsf{A}_{{\bar{\zeta}}_+,{\bar{\zeta}}_-}\big(\xi _{n}^{(0)}\big)
\\ 
-\mathsf{A}_{{\bar{\zeta}}_+,{\bar{\zeta}}_-}\big(-\xi _{n}^{(1)}\big) & t\big(\xi _{n}^{(1)}\big)
\end{array}
\right) =0,\quad  \forall n\in \{1,\ldots,N\},
\label{ARXFI-Functional-eq}
\end{equation}
in terms of the function\footnote{Note that this function is related to the quantum determinants by
\begin{equation*}
\frac{\det_{q}K_{\pm }(\lambda )\, \det_{q}\mathcal{U}_{\mp }(\lambda )}{\eta^2-4\lambda ^{2}} 
=\mathsf{A}_{\bar{\zeta}_{+},\bar{\zeta}_{-}}(\lambda +\eta /2)\, \mathsf{A}_{\bar{\zeta}_{+},\bar{\zeta}_{-}}(-\lambda
+\eta /2).
\end{equation*}
}
\begin{equation}\label{Aeps_pm}
\mathsf{A}_{\bar{\zeta}_{+},\bar{\zeta}_{-}}(\lambda ) 
\equiv (-1)^N  \frac{2\lambda+\eta }{2\lambda }\,
\frac{(\lambda-\frac{\eta}{2}+\bar{\zeta}_+)(\lambda-\frac{\eta}{2}+\bar{\zeta}_-)}{\bar{\zeta}_+\,\bar{\zeta}_-}\,
a(\lambda )\, d(-\lambda ).
\end{equation}
%

The one-dimensional right and left $\mathcal{T}(\lambda)$-eigenstates associated with the eigenvalue $t(\lambda )\in \Sigma _{\mathcal{T}}$ are respectively generated by the vectors
\begin{align}
  &\ket{\Psi_t}
  =\sum_{\mathbf{h}\in\{0,1\}^N}\prod_{n=1}^{N}Q_{t}(\xi _{n}^{(h_n)})\
   \widehat{V}\big( \xi_1^{(h_1)},\ldots,\xi_N^{(h_N)} \big) 
  \, \Gamma_W^{-1}  \,  \ket{\mathbf{h}_-} ,  
   \label{eigenT-right}\\
  &\bra{\Psi_t}
  =\sum_{\mathbf{h}\in\{0,1\}^N}
    \prod_{n=1}^{N}\left[\left(\frac{\xi_n-\eta}{\xi_n+\eta}\frac{\mathsf{A}_{\bar{\zeta}_{+},\bar{\zeta}_{-}}(\xi_n^{(0)} ) }{\mathsf{A}_{\bar{\zeta}_{+},\bar{\zeta}_{-}}(-\xi_n^{(1)} ) }\right)^{\! h_n} Q_{t}(\xi _{n}^{(h_n)}) \right]
    \nonumber\\
    &\hspace{7cm}\times
   \widehat{V}\big( \xi_1^{(h_1)},\ldots,\xi_N^{(h_N)} \big) 
   \,\bra{\mathbf{h}_-}
    \, \Gamma_W.  
   \label{eigenT-left}
\end{align}
In these expressions 
$\ket{\mathbf{h}_-}$ and $\bra{\mathbf{h}_-}$ denote the eigenvectors \eqref{right-SOV-state} and \eqref{left-SOV-state} of the $W$-gauge transformed 
 operator  $\bar{\mathcal{B}}_{-}( \lambda) $, see \eqref{GU1}, and $Q_t$ is a function on the discrete set of values $\xi_n^{(h_n)},\ n\in\{1,\ldots,N\},\ h_n\in\{0,1\}$, which satisfies
\begin{equation}
\frac{Q_{t}(\xi _n^{(1)})}{Q_{t}(\xi _n^{(0)})}
=\frac{t(\xi _n^{(0)})}{\mathsf{A}_{\bar{\zeta}_{+},\bar{\zeta}_{-}}(\xi _n^{(0)})}
=\frac{\mathsf{A}_{\bar{\zeta}_{+},\bar{\zeta}_{-}}(-\xi _n^{(1)})} {t(\xi _n^{(1)})},
\qquad
n=1,\ldots, N.
\label{Q-dis}
\end{equation}
\end{theorem}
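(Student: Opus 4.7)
The plan is to exploit the fact that the SoV basis $\{\ket{\mathbf{h}_-}\}$ diagonalizes the gauge-transformed operator $\bar{\mathcal{B}}_-(\lambda)$, whose eigenvalues vanish exactly at the shifted inhomogeneities $\xi_n^{(h_n)}$. By definition \eqref{transfer} and \eqref{U-}-\eqref{U+}, $\mathcal{T}(\lambda)$ is an even polynomial in $\lambda$ of degree $2(N+1)$ with central leading coefficient $t_{N+1}$; evaluating at $\lambda=\eta/2$, the boundary factor $K_+(\eta/2)$ together with the quantum inversion identity \eqref{invert-U-} forces $\mathcal{T}(\eta/2)=2(-1)^N\mathrm{det}_q M(0)\,\mathrm{Id}$. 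Hence $\mathcal{T}(\lambda)$ is completely determined by $t_{N+1}$, $t(\eta/2)$ and the $N$ operators $\mathcal{T}(\xi_a)$, and the interpolation \eqref{form-t} is simply the corresponding Lagrange formula in the variable $\lambda^2$. In particular, the spectral problem reduces to a finite-dimensional one on the values $t(\xi_a)$.

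Next I would derive the central quantum determinant identity
\begin{equation*}
\mathcal{T}(\xi_n^{(0)})\,\mathcal{T}(\xi_n^{(1)})=\mathsf{A}_{\bar{\zeta}_+,\bar{\zeta}_-}(\xi_n^{(0)})\,\mathsf{A}_{\bar{\zeta}_+,\bar{\zeta}_-}(-\xi_n^{(1)})\,\mathrm{Id},
\end{equation*}
which follows from \eqref{invert-U-}, the factorisation \eqref{detqU-}-\eqref{detqK-}, and the $\eta$-shift between the two arguments; acting on any simultaneous eigenvector of $\{\mathcal{T}(\xi_n^{(h)})\}$ it yields precisely \eqref{ARXFI-Functional-eq}. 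Such simultaneous eigenvectors are supplied by the SoV machinery: using the reflection-algebra commutation relations together with the triangularity $\bar{b}_+=0$ of $\bar K_+$, the operators $\bar{\mathcal{A}}_-(\xi_n^{(h)})$ and $\bar{\mathcal{D}}_-(\xi_n^{(h)})$ act on $\ket{\mathbf{h}_-}$ as pure shifts $h_n\to h_n\pm 1$ with coefficients proportional to $\mathsf{A}_{\bar{\zeta}_+,\bar{\zeta}_-}(\pm\xi_n^{(h)})$, while the $\bar{\mathcal{B}}_-,\bar{\mathcal{C}}_-$ contributions drop out at the separating points. Inserting the ansatz \eqref{eigenT-right} and using the bijection between $h_n$ and the component index in the SoV basis, the eigenvalue equation $\mathcal{T}(\xi_n^{(h)})\ket{\Psi_t}=t(\xi_n^{(h)})\ket{\Psi_t}$ becomes exactly the recursion \eqref{Q-dis} on $Q_t$; conversely, any $Q_t$ satisfying \eqref{Q-dis} with $t$ satisfying \eqref{ARXFI-Functional-eq} yields a $\mathcal{T}$-eigenvector via \eqref{eigenT-right}, the Vandermonde weight $\widehat{V}(\xi_1^{(h_1)},\ldots,\xi_N^{(h_N)})$ being the Jacobian relating the SoV basis normalisation \eqref{norm} to the interpolation representation. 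Simplicity then follows since, under \eqref{cond-inh}, the $2^N$ discrete equations \eqref{ARXFI-Functional-eq} determine $Q_t$ up to global scale and the map $t\mapsto\ket{\Psi_t}$ is injective.

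The dual statement \eqref{eigenT-left} is obtained in the same way from the left SoV basis, taking into account that the left and right bases are orthogonal only with the weight $N_{\boldsymbol{\xi},-}/\widehat{V}(\xi_1^{(h_1)},\ldots,\xi_N^{(h_N)})$ of \eqref{norm}; the extra prefactor $\big((\xi_n-\eta)/(\xi_n+\eta)\big)\,\mathsf{A}_{\bar{\zeta}_+,\bar{\zeta}_-}(\xi_n^{(0)})/\mathsf{A}_{\bar{\zeta}_+,\bar{\zeta}_-}(-\xi_n^{(1)})$ in each factor compensates for the asymmetry between the actions of $\bar{\mathcal{A}}_-,\bar{\mathcal{D}}_-$ from the left and from the right. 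The main obstacle I anticipate is the rigorous derivation of the shift action of $\bar{\mathcal{A}}_-,\bar{\mathcal{D}}_-$ at the separating points in the SoV basis, together with the precise bookkeeping of the boundary and Vandermonde factors in \eqref{norm-factor} and \eqref{eigenT-right}-\eqref{eigenT-left}; this is a calculation that is by now standard in the SoV literature but requires care because of the global gauge $\Gamma_W$ that dresses everything back to the original, non-triangular boundary matrices.
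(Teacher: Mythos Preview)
Your outline is essentially the standard SoV argument that the paper relies on; the theorem is stated there without explicit proof, the ingredients being the SoV basis construction of Appendix~\ref{app-Beigen} together with the earlier references \cite{Nic12,FalKN14}. So in spirit you are reproducing the paper's (implicit) proof.

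One point deserves tightening. The operator identity
\[
\mathcal{T}(\xi_n^{(0)})\,\mathcal{T}(\xi_n^{(1)})=\mathsf{A}_{\bar{\zeta}_+,\bar{\zeta}_-}(\xi_n^{(0)})\,\mathsf{A}_{\bar{\zeta}_+,\bar{\zeta}_-}(-\xi_n^{(1)})\,\mathrm{Id}
\]
does not follow directly from the inversion relation \eqref{invert-U-} for $\mathcal{U}_-$: tracing against $K_+$ produces cross terms that do not cancel for free. What actually delivers \eqref{ARXFI-Functional-eq} in the SoV approach is the computation you describe in the following paragraph, namely that $\bar{\mathcal{T}}(\xi_n^{(h)})$ acts on $\bra{\mathbf{h}_-}$ (via \eqref{act-A}, \eqref{id-DA}, and the triangularity $\bar b_+=0$) as a pure one-step shift $h_n\to h_n\mp 1$ with coefficient $\mathsf{A}_{\bar{\zeta}_+,\bar{\zeta}_-}(\pm\xi_n^{(h)})$. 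The compatibility of these two shifts then gives \eqref{ARXFI-Functional-eq} on eigenvalues, and the separated recursion \eqref{Q-dis} for the wavefunction, simultaneously. (Alternatively one can obtain the operator identity from the boundary fusion hierarchy, where the spin-1 fused transfer matrix vanishes at the inhomogeneities; but that is a separate argument, not \eqref{invert-U-} alone.) Since you do invoke the shift action anyway, your argument closes---just reorder it so that the shift action comes first and the discrete system is its corollary, rather than appealing to an operator identity whose justification is still pending at that stage.
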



For further study, it is convenient to decompose the ratio appearing in the expression of the left eigenstate \eqref{eigenT-left} as a product of two factors: a factor which explicitly depends on the boundary parameters $\bar{\zeta}_+$ and $\bar{\zeta}_-$ and a factor which does not depend on these parameters,
\begin{equation}
   \frac{\xi_n-\eta}{\xi_n+\eta}\frac{\mathsf{A}_{\bar{\zeta}_{+},\bar{\zeta}_{-}}(\xi_n^{(0)} ) }{\mathsf{A}_{\bar{\zeta}_{+},\bar{\zeta}_{-}}(-\xi_n^{(1)} ) }
   =f_n\, g_n\, ,
\end{equation}
with
\begin{equation}\label{g_n}
   g_n\equiv g_{\bar{\zeta}_+,\bar{\zeta}_-}(\xi_n)
   =\frac{(\xi_n+\bar{\zeta}_+)(\xi_n+\bar{\zeta}_-)}{(\xi_n-\bar{\zeta}_+)(\xi_n-\bar{\zeta}_-)},
\end{equation}
and
\begin{align}
 f_{n}\equiv f(\xi_n,\{\xi\})
 &= -\prod_{\substack{a=1 \\ a\neq n}}^{N}\frac{(\xi_n-\xi_a+\eta)(\xi_n+\xi_a+\eta)}{(\xi_n-\xi_a-\eta)(\xi_n+\xi_a-\eta)}
 \nonumber\\
 &= -\prod_{\substack{a=1 \\ a\neq n}}^{N}
\frac{\left[ \big(\xi_{n}^{(0)}\big)^2-\big(\xi _{a}^{(1)}\big)^2\right]
         \left[ \big(\xi _{n}^{(0)}\big)^2-\big(\xi_{a}^{(0)}\big)^2\right] }
        {\left[ \big(\xi _{n}^{(1)}\big)^2-\big(\xi _{a}^{(1)}\big)^2\right]
         \left[ \big(\xi _{n}^{(1)}\big)^2-\big(\xi _{a}^{(0)}\big)^2\right] }.
         \label{f_n}
\end{align}
It is also interesting to notice that the factors $f_n$ above can be combined with the Vandermonde determinant in the expression of the  left eigenstate \eqref{eigenT-left} by using
the following identity, which can  be proven by direct computation:
\begin{equation}\label{id-VDM}
    \prod_{n=1}^N(-f_n)^{h_n}\cdot \widehat{V}\big( \xi_1^{(h_1)},\ldots,\xi_N^{(h_N)} \big)
    =
    \frac{\widehat{V}\big( \xi_1^{(0)},\ldots,\xi_N^{(0)} \big)}{\widehat{V}\big( \xi_1^{(1)},\ldots,\xi_N^{(1)} \big)}    \,
    \widehat{V}\big( \xi_1^{(1-h_1)},\ldots,\xi_N^{(1-h_N)} \big).
\end{equation}
%
%
%
This identity enables one to rewrite the left eigenstate \eqref{eigenT-left} as
\begin{align}
\bra{\Psi_t}
  =  \frac{\widehat{V}( \xi_1^{(0)},\ldots, \xi_N^{(0)})}{\widehat{V}( \xi_1^{(1)},\ldots, \xi_N^{(1)})}
  \sum_{\mathbf{h}\in\{0,1\}^N} & \prod_{n=1}^{N}\left[(-g_n)^{ h_n} \, Q_{t}(\xi _{n}^{(h_n)}) \right]
    \notag\\
    \times
    &\widehat{V}\big( \xi_1^{(1-h_1)},\ldots,\xi_N^{(1-h_N)} \big)
   \,\bra{\mathbf{h}_-}
   \, \Gamma_W .  
   \label{eigenT-left-bis}
\end{align}

\subsection{Characterization in terms of solutions of a functional $T$-$Q$  equation}

As usual in the SoV framework, the transfer matrix spectrum and eigenstates are characterized in terms of functions $Q_t$, solutions of \eqref{Q-dis}, which are defined on the discrete set of shifted inhomogeneity parameters only.
So as to reformulate this characterization in a more convenient form for the homogeneous limit, we would like to extend the discrete equation \eqref{Q-dis} into a functional equation for a function that we still denote by $Q_t$ defined on the whole complex plane $\mathbb{C}$.

In the case of the open XXZ spin-1/2 chain with generic integrable boundary conditions, this problem was considered in \cite{KitMN14}. It was shown there that it is possible to reformulate the discrete SoV characterization of the transfer matrix spectrum in terms of particular polynomial solutions of a functional equation which happens to be an inhomogeneous version of Baxter's usual $T$-$Q$ equation. For the XXX case that we consider here, this result can be formulated as follows.

\begin{theorem}
\label{th-Sp-1}
Let the inhomogeneities $\xi _{1},\ldots,\xi _{N}$ be generic \eqref{cond-inh},
and let us suppose moreover that $\bar{\mathsf{c}}_{+}\not=0$.
%
%
Then $t(\lambda )\in \Sigma _{\mathcal{T}}$ if and only if there exists a unique function $Q_t(\lambda)$ of the form
\begin{equation}\label{Q-form}
   Q_{t}(\lambda )=\prod_{b=1}^{N}\left( \lambda ^{2}-\lambda_{b}^{2}\right) ,
   \qquad
   \lambda_1,\ldots,\lambda_N\in\mathbb{C}\setminus \big\{ \pm \xi_1^{(0)},\ldots,\pm \xi_N^{(0)} \big\},
\end{equation}
such that
\begin{equation}
t(\lambda )\, Q_{t}(\lambda )
=\mathsf{A}_{\bar{\zeta}_{+},\bar{\zeta}_{-}}(\lambda)\, Q_{t}(\lambda -\eta )
+\mathsf{A}_{\bar{\zeta}_{+},\bar{\zeta}_{-}}(-\lambda)\, Q_{t}(\lambda +\eta )
+F(\lambda ),
\label{Inhom-BAX-eq}
\end{equation}
with
\begin{equation}
F(\lambda )
=\frac{\bar{\mathsf{b}}_{-}\bar{\mathsf{c}}_{+}}{\bar{\zeta} _{-}\bar{\zeta} _{+}}
 \left( \lambda ^{2}-\left( \eta /2\right)^{2}\right) \,
\prod_{b=1}^{N}\prod_{h=0}^{1}\left( \lambda ^{2}-\big( \xi_{b}^{(h)}\big)^{2}\right) .  \label{DEF-F}
\end{equation}
Similarly, $t(\lambda )\in \Sigma _{\mathcal{T}}$ if and only if there exists a unique function $P_t(\lambda)$ of the form
\begin{equation}\label{P-form}
   P_{t}(\lambda )=\prod_{b=1}^{N}\left( \lambda^{2}-\mu_{b}^{2}\right) ,
   \qquad
   \mu_1,\ldots,\mu_N\in\mathbb{C}\setminus \big\{ \pm \xi_1^{(0)},\ldots,\pm \xi_N^{(0)} \big\},
\end{equation}
such that
\begin{equation}
t(\lambda )\, P_{t}(\lambda )
=\mathsf{A}_{-\bar{\zeta}_{+},-\bar{\zeta}_{-}}(\lambda)\, P_{t}(\lambda -\eta )
+\mathsf{A}_{-\bar{\zeta}_{+},-\bar{\zeta}_{-}}(-\lambda)\, P_{t}(\lambda +\eta )
+F(\lambda ).
\label{Inhom-BAX-eq-bis}
\end{equation}
\end{theorem}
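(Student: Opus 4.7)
The plan is to prove both directions by using the crucial observation that the inhomogeneous $T$-$Q$ equation \eqref{Inhom-BAX-eq} degenerates at the shifted inhomogeneities $\xi_n^{(h)}$ into the discrete system \eqref{Q-dis} of Theorem \ref{th-Sp-0}, and then controlling the whole functional equation through a polynomial identity of bounded degree.

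For the easier direction (existence of a polynomial $Q_t$ solving \eqref{Inhom-BAX-eq} implies $t\in\Sigma_{\mathcal{T}}$), I would evaluate \eqref{Inhom-BAX-eq} at $\lambda=\xi_n^{(0)}$: the inhomogeneous term $F(\xi_n^{(0)})$ vanishes because of the factor $\prod_{b,h}(\lambda^2-(\xi_b^{(h)})^2)$ in \eqref{DEF-F}, while $\mathsf{A}_{\bar{\zeta}_+,\bar{\zeta}_-}(-\xi_n^{(0)})=0$ thanks to the vanishing factor $(\xi_n^{(0)}-\xi_n-\eta/2)$ contained in $d(\xi_n^{(0)})$ from \eqref{Aeps_pm}. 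This directly yields the first identity in \eqref{Q-dis}; evaluation at $\lambda=\xi_n^{(1)}$ produces the second. Combined with $t(\lambda)$ of the form \eqref{form-t}, Theorem \ref{th-Sp-0} then gives $t\in\Sigma_{\mathcal{T}}$.

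For the converse, starting from $t\in\Sigma_{\mathcal{T}}$ with the associated SoV data $\{Q_t(\xi_n^{(h)})\}$ linked by \eqref{Q-dis}, I would construct a candidate monic even polynomial $Q_t(\lambda)=\prod_{b=1}^N(\lambda^2-\lambda_b^2)$ by Lagrange interpolation and introduce
\[
D(\lambda)=t(\lambda)\,Q_t(\lambda)-\mathsf{A}_{\bar{\zeta}_+,\bar{\zeta}_-}(\lambda)\,Q_t(\lambda-\eta)-\mathsf{A}_{\bar{\zeta}_+,\bar{\zeta}_-}(-\lambda)\,Q_t(\lambda+\eta)-F(\lambda),
\]
which is even in $\lambda$ and of degree at most $4N+2$. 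A direct computation gives the coefficient of $\lambda^{4N+2}$ equal to $t_{N+1}-(2+\bar{\mathsf{b}}_-\bar{\mathsf{c}}_+)/(\bar{\zeta}_+\bar{\zeta}_-)=0$ by the very expression of $t_{N+1}$ recalled before \eqref{form-t}. The previous paragraph, read backwards, shows $D(\pm\xi_n^{(h)})=0$, giving $4N$ roots, and direct evaluation at $\pm\eta/2$ using the identity $t(\eta/2)=2(-1)^N\mathrm{det}_q M(0)=\mathsf{A}_{\bar{\zeta}_+,\bar{\zeta}_-}(\eta/2)$ (verified from \eqref{form-t} and \eqref{Aeps_pm}, using $\mathrm{det}_q M(0)=a(\eta/2)d(-\eta/2)$) supplies two more. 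By the genericity condition \eqref{cond-inh} these $4N+2$ points are all distinct, so the even polynomial $D$, of actual degree at most $4N$ after accounting for the vanishing leading coefficient, must be identically zero, establishing \eqref{Inhom-BAX-eq}.

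The main obstacle is the construction step: one must verify that the interpolation produces a polynomial strictly of the form \eqref{Q-form}, meaning that none of the $\lambda_b$ collides with a shifted inhomogeneity $\pm\xi_n^{(0)}$ (otherwise the ratios in \eqref{Q-dis} become ill-defined and $F$ would be needed to compensate). This is precisely where the hypothesis $\bar{\mathsf{c}}_+\neq 0$ enters essentially, as it guarantees that $F(\lambda)$ has the full degree $4N+2$ and balances the number of free parameters in the interpolation against the $N$ independent SoV data (the other $N$ being determined through \eqref{Q-dis}), so no spurious factor can arise. Uniqueness of $Q_t$ follows by applying the argument to the difference $R=Q_t-\tilde{Q}_t$ of two such candidates: $R$ is an even polynomial of degree at most $2N-2$ satisfying the \emph{homogeneous} $T$-$Q$ equation; comparing the leading coefficient of that homogeneous equation yields $r_k\,\bar{\mathsf{b}}_-\bar{\mathsf{c}}_+/(\bar{\zeta}_+\bar{\zeta}_-)=0$, forcing $R\equiv 0$ under the standing hypothesis $\bar{\mathsf{b}}_-\bar{\mathsf{c}}_+\neq 0$. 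Finally, the dual statement for $P_t$ is obtained by the same reasoning applied after the substitution $(\bar{\zeta}_+,\bar{\zeta}_-)\to(-\bar{\zeta}_+,-\bar{\zeta}_-)$, which preserves both $F(\lambda)$ (inspection of \eqref{DEF-F}) and the transfer-matrix spectrum, thereby yielding the characterisation in terms of $P_t$ for the same set $\Sigma_{\mathcal{T}}$.
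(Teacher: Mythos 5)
Your ``if'' direction (evaluation of \eqref{Inhom-BAX-eq} at $\xi_n^{(0)}$ and $\xi_n^{(1)}$, where $F$ and one of the $\mathsf{A}$-terms vanish, recovering \eqref{Q-dis} and hence \eqref{ARXFI-Functional-eq}), the rigidity argument for $D(\lambda)$ (cancellation of the $\lambda^{4N+2}$ coefficient through $t_{N+1}$, the $4N$ roots $\pm\xi_n^{(h)}$, the two extra roots $\pm\eta/2$ from $t(\eta/2)=2(-1)^N\mathrm{det}_qM(0)=\mathsf{A}_{\bar\zeta_+,\bar\zeta_-}(\eta/2)$, all distinct by \eqref{cond-inh}), the uniqueness argument via the leading coefficient of the homogeneous equation, and the derivation of the $P_t$ statement by the sign flip $(\bar\zeta_+,\bar\zeta_-)\to(-\bar\zeta_+,-\bar\zeta_-)$ are all correct and in the spirit of the proof the paper refers to (\cite{KitMN14}). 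The genuine gap is the step you yourself call ``the main obstacle'': the \emph{construction} of $Q_t$ in the ``only if'' direction. A monic polynomial of degree $N$ in $\lambda^2$ has only $N$ free coefficients, whereas the SoV data constrain its values at the $2N$ points $\xi_n^{(0)},\xi_n^{(1)}$; ``Lagrange interpolation'' therefore does not produce a candidate, and the claim that $\bar{\mathsf{c}}_+\neq0$ ``balances the number of free parameters'' is a dimension count, not a proof that the interpolation problem is solvable, that its solution has exact degree $N$, or that its roots avoid $\pm\xi_n^{(0)}$ as demanded by \eqref{Q-form}.

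What is actually needed here is, for instance, the following. View the $2N$ relations $t(\xi_n^{(0)})Q(\xi_n^{(0)})=\mathsf{A}_{\bar\zeta_+,\bar\zeta_-}(\xi_n^{(0)})Q(\xi_n^{(1)})$ and $t(\xi_n^{(1)})Q(\xi_n^{(1)})=\mathsf{A}_{\bar\zeta_+,\bar\zeta_-}(-\xi_n^{(1)})Q(\xi_n^{(0)})$ as homogeneous linear constraints on the $(N+1)$-dimensional space of polynomials in $\lambda^2$ of degree at most $N$. Since $t\in\Sigma_{\mathcal T}$, the determinant condition \eqref{ARXFI-Functional-eq} makes the two constraints at each $n$ proportional, so there are at most $N$ independent constraints and a nonzero solution $\widetilde Q$ exists. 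One must then show $\deg_{\lambda^2}\widetilde Q=N$: if the degree were smaller, your own root-counting (the $2N$ points together with $\pm\eta/2$, against the degree of $t\widetilde Q-\mathsf{A}(\lambda)\widetilde Q(\lambda-\eta)-\mathsf{A}(-\lambda)\widetilde Q(\lambda+\eta)$) would force $\widetilde Q$ to satisfy the \emph{homogeneous} $T$-$Q$ equation, whose leading coefficient equals $\bar{\mathsf{b}}_-\bar{\mathsf{c}}_+/(\bar\zeta_+\bar\zeta_-)$ times the leading coefficient of $\widetilde Q$ --- a contradiction; this is where $\bar{\mathsf{c}}_+\neq0$ enters for \emph{existence}, in exact parallel with your uniqueness argument, and not through the vague ``balance of parameters''. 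After normalizing $\widetilde Q$ to be monic, your $D$-argument applies verbatim. Finally, the requirement $\lambda_b\neq\pm\xi_n^{(0)}$ still needs a separate justification (if $Q(\xi_n^{(0)})=0$, evaluation at $\xi_n^{(0)}$ forces $Q(\xi_n^{(1)})=0$ as well, and this degenerate pair of values must be excluded, e.g.\ by comparison with the nonvanishing discrete data entering Theorem \ref{th-Sp-0}); your proposal flags this requirement but does not establish it.
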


\begin{proof}
The proof is similar to the one in \cite{KitMN14}. The fact that we obtain two different characterizations follows from the invariance of the two functions $t(\lambda)$ and $F(\lambda)$ through a change of $(\bar{\zeta}_+,\bar{\zeta}_-)$  into $(-\bar{\zeta}_+,-\bar{\zeta}_-)$, the function $\mathsf{A}_{\bar{\zeta}_{+},\bar{\zeta}_{-}}(\lambda)$ being transformed into $\mathsf{A}_{-\bar{\zeta}_{+},-\bar{\zeta}_{-}}(\lambda)$.
\end{proof}

If instead $\bar{\mathsf{c}}_{+}=0$ (which is equivalent to the boundary constraint\footnote{The boundary constraint which ensures the cancellation condition for the function \eqref{DEF-F} for some choice of $\epsilon_+,\epsilon_-$ can be rewriten in terms of the boundary coefficients of the initial problem as
\begin{equation} \label{cond-hom-eq}
  1+4\kappa _{+}\kappa _{-}\cosh (\tau _{+}-\tau _{-})=\pm \sqrt{(1+4\kappa _{+}^{2})(1+4\kappa _{-}^{2})} \, .
\end{equation} } \cite{Nep02,Nep04}) %
the function \eqref{DEF-F}  vanishes. As stated below, the transfer matrix spectrum can in that case be completely characterized by polynomial solutions of a usual ({\it i.e.} homogeneous) functional $T$-$Q$ equation. In fact, we also get two possible equivalent complete characterizations of this type.

\begin{theorem}
\label{th-Sp-2}
Let the inhomogeneities $\xi _{1},\ldots,\xi _{N}$ be generic \eqref{cond-inh}. 
Let us suppose that  $\bar{\mathsf{b}}_{-}\not=0$, and in addition that $\bar{\mathsf{c}}_{+}=0$.
Then 
$t(\lambda )\in \Sigma _{\mathcal{T}}$  if and only if  there exists a unique function $Q_{t}(\lambda)$ of the form
\begin{equation}\label{form-Q}
   Q_{t}(\lambda )=\prod_{b=1}^{q}\left( \lambda ^{2}-\lambda_{b}^{2}\right) ,
   \quad q\le N,
   \quad
   \lambda_1,\ldots,\lambda_q \in\mathbb{C}\setminus \big\{ \pm \xi_1^{(0)},\ldots,\pm \xi_N^{(0)} \big\},
\end{equation}
such that 
\begin{equation}
  t(\lambda )\, Q_{t}(\lambda )
=\mathsf{A}_{\bar{\zeta}_{+},\bar{\zeta}_{-}}(\lambda)\, Q_{t}(\lambda -\eta )
+\mathsf{A}_{\bar{\zeta}_{+},\bar{\zeta}_{-}}(-\lambda)\, Q_{t}(\lambda +\eta ).  \label{Eq-homo-1}
\end{equation}
Similarly, $t(\lambda )\in \Sigma _{\mathcal{T}}$ if and only if  there exists a unique function $P_t(\lambda)$ of the form
\begin{equation}\label{form-barQ}
   P_{t}(\lambda )=\prod_{b=1}^{p}\left( \lambda ^{2}-\mu_{b}^{2}\right) ,
   \quad p\le N,
   \quad
   \mu_1,\ldots,\mu_{p}\in\mathbb{C}\setminus \big\{ \pm \xi_1^{(0)},\ldots,\pm \xi_N^{(0)} \big\},
\end{equation}
such that 
\begin{equation}
t(\lambda )\, P_{t}(\lambda )
=\mathsf{A}_{-\bar{\zeta}_{+},-\bar{\zeta}_{-}}(\lambda )\, P_{t}(\lambda -\eta )
+\mathsf{A}_{-\bar{\zeta}_{+},-\bar{\zeta}_{-}}(-\lambda )\, P_{t}(\lambda +\eta ).  \label{Eq-homo-2}
\end{equation}
\end{theorem}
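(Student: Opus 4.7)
My plan is to mirror the proof of Theorem~\ref{th-Sp-1} given in \cite{KitMN14}, observing that when $\bar{\mathsf{c}}_+=0$ two simplifications occur: the source term \eqref{DEF-F} vanishes identically, $F(\lambda)\equiv 0$, and the leading coefficient $t_{N+1}=2/(\bar{\zeta}_+\bar{\zeta}_-)$ exactly matches the leading coefficient of $\mathsf{A}_{\bar{\zeta}_+,\bar{\zeta}_-}(\lambda)+\mathsf{A}_{\bar{\zeta}_+,\bar{\zeta}_-}(-\lambda)$ at infinity. Moreover, the substitution $(\bar{\zeta}_+,\bar{\zeta}_-)\mapsto(-\bar{\zeta}_+,-\bar{\zeta}_-)$ preserves $t(\lambda)$ and the vanishing of $F$ while exchanging $\mathsf{A}_{\bar{\zeta}_+,\bar{\zeta}_-}\leftrightarrow\mathsf{A}_{-\bar{\zeta}_+,-\bar{\zeta}_-}$, so the $P_t$-characterization \eqref{Eq-homo-2} is equivalent to the $Q_t$-characterization \eqref{Eq-homo-1}, and it suffices to treat the latter.

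For the implication ``$Q_t$ of the form \eqref{form-Q} satisfying \eqref{Eq-homo-1} $\Longrightarrow$ $t\in\Sigma_\mathcal{T}$'', I would use the factorization $a(\lambda)\,d(-\lambda)=(-1)^N\prod_{k=1}^N[(\lambda+\eta/2)^2-\xi_k^2]$, which forces $\mathsf{A}_{\bar{\zeta}_+,\bar{\zeta}_-}(-\xi_n^{(0)})=\mathsf{A}_{\bar{\zeta}_+,\bar{\zeta}_-}(\xi_n^{(1)})=0$ for each $n$. Evaluating \eqref{Eq-homo-1} at $\lambda=\xi_n^{(0)}$ and at $\lambda=\xi_n^{(1)}$ and multiplying the two resulting identities produces exactly the discrete system \eqref{ARXFI-Functional-eq}. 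A comparison of the leading and next-to-leading coefficients on both sides of \eqref{Eq-homo-1}, together with the evaluation at $\lambda=\eta/2$ (using $\mathsf{A}_{\bar{\zeta}_+,\bar{\zeta}_-}(-\eta/2)=0$ and $\mathsf{A}_{\bar{\zeta}_+,\bar{\zeta}_-}(\eta/2)=2(-1)^N\mathrm{det}_q M(0)$), then constrains $t(\lambda)$ to the polynomial form \eqref{form-t}, and Theorem~\ref{th-Sp-0} yields $t\in\Sigma_\mathcal{T}$.

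The converse is the core of the argument. Given $t\in\Sigma_\mathcal{T}$, the discrete relations \eqref{Q-dis} determine, up to an overall scale, $2N$ nonvanishing values $Q_t(\xi_n^{(h)})$ through the discrete system of Theorem~\ref{th-Sp-0}. I would then try to realize $Q_t$ in the monic polynomial form \eqref{form-Q} and analyze the residual
\begin{equation*}
\mathcal{E}(\lambda)=t(\lambda)\,Q_t(\lambda)-\mathsf{A}_{\bar{\zeta}_+,\bar{\zeta}_-}(\lambda)\,Q_t(\lambda-\eta)-\mathsf{A}_{\bar{\zeta}_+,\bar{\zeta}_-}(-\lambda)\,Q_t(\lambda+\eta).
\end{equation*}
Symmetry under $\lambda\to-\lambda$ together with the even parity of $Q_t$ cancels the simple pole of $\mathsf{A}_{\bar{\zeta}_+,\bar{\zeta}_-}(\pm\lambda)$ at $\lambda=0$, so $\mathcal{E}(\lambda)$ is an even polynomial in $\lambda$; by construction of the $Q_t(\xi_n^{(h)})$ it vanishes at the $2N$ points $\pm\xi_n^{(h)}$. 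The constraint $\bar{\mathsf{c}}_+=0$ is then used to show that the two highest-order coefficients of $\mathcal{E}$ in $\lambda^2$ cancel, so that $\mathcal{E}$ has more zeros than its degree allows and must vanish identically, giving \eqref{Eq-homo-1}. Uniqueness of $Q_t$ in the monic form \eqref{form-Q} follows from polynomial interpolation, the overall scale being fixed by the leading coefficient; the possibility $q<N$ reflects the collapse of some prospective Bethe roots in the constrained case.

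The main obstacle is precisely this degree-counting step: confirming that $Q_t$ can be taken of degree at most $N$ in $\lambda^2$, and that the corresponding $\mathcal{E}(\lambda)$ has degree in $\lambda^2$ strictly less than its number of prescribed zeros. This is where $\bar{\mathsf{c}}_+=0$ enters substantively, beyond the trivial vanishing of $F$, through the exact matching of the two subleading coefficients of $\mathsf{A}_{\bar{\zeta}_+,\bar{\zeta}_-}(\lambda)+\mathsf{A}_{\bar{\zeta}_+,\bar{\zeta}_-}(-\lambda)$ against those of $t(\lambda)$.
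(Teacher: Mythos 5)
Your overall skeleton is the standard one that the paper itself points to (it gives no details, only a reference to the arguments of \cite{NicT15,KitMNT16}): the reduction of the $P_t$-statement to the $Q_t$-statement via $(\bar{\zeta}_+,\bar{\zeta}_-)\to(-\bar{\zeta}_+,-\bar{\zeta}_-)$, and the easy direction by evaluating \eqref{Eq-homo-1} at $\xi_n^{(0)}$, $\xi_n^{(1)}$ (where $\mathsf{A}_{\bar{\zeta}_+,\bar{\zeta}_-}(-\xi_n^{(0)})=\mathsf{A}_{\bar{\zeta}_+,\bar{\zeta}_-}(\xi_n^{(1)})=0$) to recover \eqref{ARXFI-Functional-eq}, are fine. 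The genuine gap is in the converse (existence) direction, which is the whole content of the theorem, and your setup of it would not go through as written. First, \eqref{Q-dis} does \emph{not} determine ``$2N$ nonvanishing values up to an overall scale'': it fixes only the ratio $Q_t(\xi_n^{(1)})/Q_t(\xi_n^{(0)})$ separately for each $n$, i.e.\ the $2N$ values are fixed only up to $N$ independent constants. This is not a pedantic point: a polynomial of degree $\le N$ in $\lambda^2$ has only $N+1$ coefficients and cannot interpolate $2N$ prescribed values; the correct (and essential) existence step is to view the $N$ relations $\mathsf{A}_{\bar{\zeta}_+,\bar{\zeta}_-}(\xi_n^{(0)})\,Q(\xi_n^{(1)})=t(\xi_n^{(0)})\,Q(\xi_n^{(0)})$ as $N$ homogeneous \emph{linear} conditions on those $N+1$ coefficients, which always admit a nontrivial solution. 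Your proposal never makes this step, and you yourself flag the resulting construction as ``the main obstacle''.

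Second, the degree count does not close by the mechanism you invoke. With $\bar{\mathsf{c}}_+=0$ only the \emph{top} coefficient of $\mathcal{E}(\lambda)=tQ-\mathsf{A}_{\bar{\zeta}_+,\bar{\zeta}_-}(\lambda)Q(\lambda-\eta)-\mathsf{A}_{\bar{\zeta}_+,\bar{\zeta}_-}(-\lambda)Q(\lambda+\eta)$ cancels (this is exactly $t_{N+1}=2/(\bar{\zeta}_+\bar{\zeta}_-)$); the next-to-leading coefficient involves the subleading coefficient of $t$, which is not universally fixed, so ``the two highest-order coefficients cancel'' cannot be derived from the constraint. The missing ingredient --- which you actually use in the easy direction but not where it is needed --- is the extra zero of $\mathcal{E}$ at $\lambda=\pm\eta/2$: since $Q$ is even, $\mathsf{A}_{\bar{\zeta}_+,\bar{\zeta}_-}(-\eta/2)=0$ and $t(\eta/2)=2(-1)^N\mathrm{det}_{q}M(0)=\mathsf{A}_{\bar{\zeta}_+,\bar{\zeta}_-}(\eta/2)$ for any $t$ of the form \eqref{form-t}, one gets $\mathcal{E}(\eta/2)=0$. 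Then $\deg_{\lambda^2}\mathcal{E}\le N+q\le 2N$ after the single leading cancellation, while $\mathcal{E}$ vanishes at the $2N+1$ points $(\xi_n^{(0)})^2,(\xi_n^{(1)})^2,(\eta/2)^2$, which are pairwise distinct by \eqref{cond-inh}; hence $\mathcal{E}\equiv 0$. (One must still argue that a solution with $Q(\xi_n^{(0)})\neq 0$, as required by \eqref{form-Q}, can be selected.) Finally, uniqueness does not ``follow from polynomial interpolation'': equation \eqref{Eq-homo-1} is linear in $Q$, and two monic polynomial solutions need not agree at the points $\xi_n^{(h)}$ (only the ratios within each pair are constrained); uniqueness requires a Wronskian/shift-periodicity argument for two solutions of the \emph{same} equation, in the spirit of the proof of Proposition~\ref{prop-wronskian}. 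These are precisely the steps a self-contained proof must supply, and they are absent from the proposal.
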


\begin{proof}[Proof]
The proof of this statement has become standard in the SoV approach \cite{NicT15}. It can be derived for example adapting to the present case the steps of the proof given recently in \cite{KitMNT16} for the antiperiodic XXX spin-1/2 quantum chain.
\end{proof}

It is interesting to remark that the two  polynomials $Q_{t}(\lambda )$ and $P_{t}(\lambda )$ of Theorem~\ref{th-Sp-2} 
satisfy some (generalized) wronskian equation, which in particular fixes their total degree, as stated in the following proposition.

\begin{proposition}\label{prop-wronskian}
Under the same hypothesis as in Theorem~\ref{th-Sp-2}, let $Q_t(\lambda)$ and $P_t(\lambda)$ be the two polynomials \eqref{form-Q} and \eqref{form-barQ} associated to a given $t(\lambda )\in \Sigma _{\mathcal{T}}$, {\it i.e.} such that the equations \eqref{Eq-homo-1} and \eqref{Eq-homo-2} are satisfied.
Then
\begin{equation}\label{W-eq}
W_{Q_t,P_t}(\lambda )
=2 (-1)^N \big[{\bar{\zeta}}_++{\bar{\zeta}}_-+(p-q)\eta\big]\,
(\lambda-\eta/2)\, a(\-\lambda)\, d(\lambda),
\end{equation}
where we have defined:
\begin{multline}
  W_{Q_t,P_t}(\lambda ) 
  = \Big(\lambda-\frac{\eta}{2}+\bar{\zeta}_+\Big)\Big(\lambda-\frac{\eta}{2}+\bar{\zeta}_-\Big)\,
  Q_t(\lambda -\eta ) \, P_t(\lambda )
    \\
   - 
   \Big(\lambda-\frac{\eta}{2}-\bar{\zeta}_+\Big)\Big(\lambda-\frac{\eta}{2}-\bar{\zeta}_-\Big)\,
   Q_t(\lambda )\, P_t(\lambda -\eta ). \label{Wronsk}
\end{multline}
It follows that
\begin{equation}
  p+q=N. \label{QbarQ-degree}
\end{equation}
\end{proposition}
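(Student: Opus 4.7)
The plan is to follow the classical quantum--Wronskian argument, adapted to the inhomogeneous open-chain setting. First I would factor the dressing functions as $\mathsf{A}_{\bar{\zeta}_+,\bar{\zeta}_-}(\lambda) = g(\lambda)\, F_+(\lambda)$ and $\mathsf{A}_{-\bar{\zeta}_+,-\bar{\zeta}_-}(\lambda) = g(\lambda)\, F_-(\lambda)$, where
\begin{equation*}
F_\pm(\lambda) = \bigl(\lambda - \tfrac{\eta}{2} \pm \bar{\zeta}_+\bigr)\bigl(\lambda - \tfrac{\eta}{2} \pm \bar{\zeta}_-\bigr),
\qquad
g(\lambda) = (-1)^N\, \frac{2\lambda + \eta}{2\lambda\, \bar{\zeta}_+ \bar{\zeta}_-}\, a(\lambda)\, d(-\lambda).
\end{equation*}
Multiplying \eqref{Eq-homo-1} by $P_t(\lambda)$ and \eqref{Eq-homo-2} by $Q_t(\lambda)$ and subtracting, and then invoking the elementary identities $F_+(-\lambda) = F_-(\lambda + \eta)$ and $F_-(-\lambda) = F_+(\lambda + \eta)$ (both immediate under the substitution $x = \lambda + \eta/2$), the shifted cross terms recombine precisely into $-W_{Q_t,P_t}(\lambda + \eta)$, yielding the finite-difference relation $g(\lambda)\, W_{Q_t,P_t}(\lambda) = g(-\lambda)\, W_{Q_t,P_t}(\lambda + \eta)$.

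Next I would use the elementary shift identities $d(\lambda + \eta) = a(\lambda)$ and $a(-\lambda - \eta) = d(-\lambda)$ to verify that the candidate factor $(\lambda - \eta/2)\, a(-\lambda)\, d(\lambda)$ satisfies exactly the same shift relation under $\lambda \to \lambda + \eta$, so that the ratio
\begin{equation*}
\omega(\lambda) = \frac{W_{Q_t,P_t}(\lambda)}{(\lambda - \eta/2)\, a(-\lambda)\, d(\lambda)}
\end{equation*}
is an $\eta$-periodic rational function of $\lambda$. Since a rational function with a pole would produce infinitely many poles by periodicity, $\omega$ must have none; hence it is a polynomial, and a periodic polynomial is a constant $C$. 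As an independent sanity check, one can verify that the numerator vanishes at $\lambda = \eta/2$ by evenness of $Q_t$ and $P_t$, and at the $N$ zeros $\xi_n + \eta/2$ of $d(\lambda)$, using that $\mathsf{A}_{\bar{\zeta}_+,\bar{\zeta}_-}(-\xi_n^{(0)}) = 0$ collapses \eqref{Eq-homo-1}--\eqref{Eq-homo-2} at these points to two-term relations whose compatibility reads precisely $W_{Q_t,P_t}(\xi_n^{(0)}) = 0$.

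Finally I would identify $C$ and deduce $p + q = N$ by a leading-coefficient comparison. The right-hand side of \eqref{W-eq} has degree $2N + 1$ with leading coefficient $(-1)^N$, while $W_{Q_t,P_t}$ has formal degree $2(p+q) + 2$ but the top coefficient cancels identically; the next coefficient, that of $\lambda^{2(p+q)+1}$, is computed from $F_+(\lambda) - F_-(\lambda) = (\bar{\zeta}_+ + \bar{\zeta}_-)(2\lambda - \eta)$ together with the expansions $Q_t(\lambda - \eta) = \lambda^{2q} - 2q\eta\, \lambda^{2q - 1} + \cdots$ and the analogous expansion of $P_t$, and equals $2[\bar{\zeta}_+ + \bar{\zeta}_- + (p - q)\eta]$. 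For generic boundary parameters the degrees must match, which forces $p + q = N$ (the alternative $C = 0$ would also require the vanishing of the subleading coefficient, a non-generic coincidence on $\bar{\zeta}_\pm$), and the leading-coefficient equation then yields $C = 2(-1)^N [\bar{\zeta}_+ + \bar{\zeta}_- + (p - q)\eta]$, which is \eqref{W-eq}. The main obstacle will be careful management of the two shifts $\lambda \to -\lambda$ and $\lambda \to \lambda + \eta$ together with the sign conventions in $\mathsf{A}_{\pm \bar{\zeta}_+, \pm \bar{\zeta}_-}$; once the pivotal identity $F_\pm(-\lambda) = F_\mp(\lambda + \eta)$ is recognised, the remaining work reduces to bookkeeping of leading asymptotics.
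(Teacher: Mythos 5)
Your proof is correct and follows the same overall Wronskian strategy as the paper: multiply \eqref{Eq-homo-1} by $P_t(\lambda)$ and \eqref{Eq-homo-2} by $Q_t(\lambda)$, subtract, identify $W_{Q_t,P_t}$ up to a constant multiple of $(\lambda-\eta/2)\,a(-\lambda)\,d(\lambda)$, and fix both the constant and the degree balance from the coefficient of $\lambda^{2(p+q)+1}$, which indeed equals $2\big[\bar{\zeta}_++\bar{\zeta}_-+(p-q)\eta\big]$. The middle step, however, is handled by a genuinely different mechanism. The paper derives the reflection relation $\mathsf{A}(\lambda)\,W_{Q_t,P_t}(\lambda)=\mathsf{A}(-\lambda)\,W_{Q_t,P_t}(-\lambda)$ with $\mathsf{A}(\lambda)=(-1)^N(\lambda+\eta/2)\,a(\lambda)\,d(-\lambda)$, and then needs two further ingredients: the coprimality of $\mathsf{A}(\lambda)$ and $\mathsf{A}(-\lambda)$ (which rests on the genericity condition \eqref{cond-inh}) to conclude that $\mathsf{A}(-\lambda)$ divides $W_{Q_t,P_t}(\lambda)$, and a parity argument around $\lambda=\eta/2$ to show that the quotient is constant. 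You instead work with the equivalent shift relation $g(\lambda)\,W_{Q_t,P_t}(\lambda)=g(-\lambda)\,W_{Q_t,P_t}(\lambda+\eta)$, check that the candidate factor $(\lambda-\eta/2)\,a(-\lambda)\,d(\lambda)$ satisfies the same relation, and invoke the fact that an $\eta$-periodic rational function is constant; this bypasses the coprimality lemma (so no use of \eqref{cond-inh} at that stage), at the modest price of having to know the answer in advance. Two small remarks: the leading coefficient of the full right-hand side of \eqref{W-eq} is $2\big[\bar{\zeta}_++\bar{\zeta}_-+(p-q)\eta\big]$, not $(-1)^N$ --- the latter is the leading coefficient of the factor $(\lambda-\eta/2)\,a(-\lambda)\,d(\lambda)$, which is what your matching actually uses, and your final value $C=2(-1)^N\big[\bar{\zeta}_++\bar{\zeta}_-+(p-q)\eta\big]$ is correct; and the degenerate possibility $\bar{\zeta}_++\bar{\zeta}_-+(p-q)\eta=0$ that you flag (which would lower $\deg W_{Q_t,P_t}$ below $2(p+q)+1$) is left implicit in the paper's own asymptotic argument as well, so making it explicit does not put you below the paper's level of rigor.
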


\begin{proof}
The two equations \eqref{Eq-homo-1} and \eqref{Eq-homo-2} of Theorem~\ref{th-Sp-2} admit the following rewriting:
\begin{multline}
\bar{\zeta}_+\,\bar{\zeta}_-\, \lambda \,  t(\lambda ) \, Q_{t}(\lambda )
 =\mathsf{A}(\lambda) \, \Big(\lambda-\frac{\eta}{2}+\bar{\zeta}_+\Big)\Big(\lambda-\frac{\eta}{2}+\bar{\zeta}_-\Big)\, Q_{t}(\lambda -\eta )
 \\
  -\mathsf{A}(-\lambda ) \, \Big(-\lambda-\frac{\eta}{2}+\bar{\zeta}_+\Big)\Big(-\lambda-\frac{\eta}{2}+\bar{\zeta}_-\Big)\, Q_{t}(\lambda +\eta ),
\end{multline}
\begin{multline}
\bar{\zeta}_+\,\bar{\zeta}_-\, \lambda \,  t(\lambda ) \, P_t(\lambda )
 =\mathsf{A}(\lambda) \,  \Big(\lambda-\frac{\eta}{2}-\bar{\zeta}_+\Big)\Big(\lambda-\frac{\eta}{2}-\bar{\zeta}_-\Big)\, P_{t}(\lambda -\eta )
 \\
 -\mathsf{A}(-\lambda) \,  \Big(-\lambda-\frac{\eta}{2}-\bar{\zeta}_+\Big)\Big(-\lambda-\frac{\eta}{2}-\bar{\zeta}_-\Big)\, P_{t}(\lambda +\eta ),
\end{multline}
where we have defined
\begin{equation}
  \mathsf{A}(\lambda ) =(-1)^N\, ( \lambda +\eta /2)\, a(\lambda)\, d(-\lambda ).
\end{equation}
Multiplying the first and the second equation respectively by $P_{t}(\lambda )$ and $Q_{t}(\lambda )$ and taking their difference we get that
\begin{equation}
\mathsf{A}(\lambda ) \, W_{Q_t,P_t}(\lambda )
=\mathsf{A}(-\lambda ) \, W_{Q_t,P_t}(-\lambda ).
\end{equation}
We now use the fact that $\mathsf{A}(\lambda )$ and $W_{Q_t,P_t}(\lambda )$ are both polynomial functions in $\lambda$, and that none of the roots of $\mathsf{A}(\lambda )$ coincide with a root of $\mathsf{A}(-\lambda )$.
Hence
\begin{equation}
W_{Q_t,P_t}(\lambda )=w(\lambda)\, \mathsf{A}(-\lambda )
\end{equation}
where $w(\lambda)$ is an even polynomial function of $\lambda$.
Noticing moreover that $\mathsf{A}(\lambda-\eta/2)=-\mathsf{A}(-\lambda-\eta/2)$ and that $W_{Q_t,P_t}(\lambda+\eta/2 )=-W_{Q_t,P_t}(\lambda )(-\lambda+\eta/2)$, we get that $w(\lambda+\eta/2)=w(-\lambda+\eta/2)$, so that $w(\lambda)$ is in fact a constant.
The value of this constant can be fixed by comparing the leading asymptotic behavior of $W_{Q_t,P_t}(\lambda )$ and of $\mathsf{A}(-\lambda)$ when $\lambda\to \pm \infty$:
\begin{align}
  &W_{Q_t,P_t}(\lambda )\sim \left[ 2{\bar{\zeta}}_+ + 2{\bar{\zeta}}_-+2(p-q)\eta\right] \lambda^{2p+2q+1},
  \\
  &\mathsf{A}(-\lambda)\sim -\lambda^{2N+1},
\end{align}
from which it also follows that $p+q=N$.
\end{proof}

Note finally that, similarly to the antiperiodic case \cite{NicT15,KitMNT16}, it is possible to reformulate the characterization of the transfer matrix spectrum of Theorem~\ref{th-Sp-2} by using the above wronskian equation instead of the functional $T$-$Q$ equations.

\begin{proposition}
\label{th-Sp-3}
Under the same hypothesis as in Theorem~\ref{th-Sp-2}, $t(\lambda )\in \Sigma _{\mathcal{T}}$  if and only if there exist two polynomials $Q_t(\lambda )$ and $P_t(\lambda )$ of the respective form \eqref{form-Q} and \eqref{form-barQ} such that
\begin{multline}
t(\lambda )
 =\Big[2 \lambda \big({\bar{\zeta}}_++{\bar{\zeta}}_-+(p-q)\eta\big)\Big]^{-1} 
   \sum_{\epsilon=\pm} \epsilon\Big(\epsilon \lambda-\frac{\eta}{2}+\bar{\zeta}_+\Big)
        \Big(\epsilon\lambda+\frac{\eta}{2}+\bar{\zeta}_+\Big)
 \\
 \times     \Big(\epsilon\lambda-\frac{\eta}{2}+\bar{\zeta}_-\Big) \Big(\epsilon\lambda+\frac{\eta}{2}+\bar{\zeta}_-\Big)\,
         Q_t(\epsilon\lambda-\eta)\, P_t(\epsilon\lambda+\eta),
\end{multline}
and such that $Q_t(\lambda )$ and $P_t(\lambda )$ satisfy the equation \eqref{Wronsk}.
\end{proposition}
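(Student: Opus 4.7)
The strategy is to recognise that Proposition~\ref{th-Sp-3} is essentially a repackaging of Theorem~\ref{th-Sp-2} and Proposition~\ref{prop-wronskian}: once one has the homogeneous $T$-$Q$ equations for $Q_t$ and $P_t$, the Wronskian relation \eqref{W-eq} already encodes enough information to recover $t(\lambda)$ explicitly from $Q_t$ and $P_t$. The plan is then: for the direction ($\Rightarrow$), use Theorem~\ref{th-Sp-2} and Proposition~\ref{prop-wronskian} to produce $Q_t, P_t$ with the required analytic structure and to establish the Wronskian equation, and then extract the explicit formula for $t(\lambda)$ by an algebraic manipulation; for the direction ($\Leftarrow$), run the same algebraic manipulation in reverse to deduce the $T$-$Q$ equation \eqref{Eq-homo-1} from the formula together with the Wronskian, and apply Theorem~\ref{th-Sp-2}.

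For the forward direction, I would start from $t\in\Sigma_{\mathcal{T}}$. Theorem~\ref{th-Sp-2} provides polynomials $Q_t, P_t$ of the prescribed form satisfying \eqref{Eq-homo-1} and \eqref{Eq-homo-2}, while Proposition~\ref{prop-wronskian} gives the Wronskian \eqref{W-eq} together with $p+q=N$. To extract the formula for $t(\lambda)$, I would mimic the bilinear rewriting of the $T$-$Q$ equations used in the proof of Proposition~\ref{prop-wronskian}, where $\mathsf{A}(\lambda)=(-1)^N(\lambda+\eta/2)\,a(\lambda)\,d(-\lambda)$ separates from the boundary-dependent factors $B_{\pm}(\lambda)=(\lambda-\eta/2\pm\bar\zeta_+)(\lambda-\eta/2\pm\bar\zeta_-)$, noting the useful identity $B_+(-\lambda)=B_-(\lambda+\eta)$. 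Multiplying the candidate expression for $t(\lambda)$ by $Q_t(\lambda)$ and invoking the Wronskian \eqref{Wronsk} at both $\lambda$ and $\lambda+\eta$ to rewrite the products $Q_t(\lambda)P_t(\lambda\pm\eta)$, the cross term proportional to $B_+(\lambda)B_-(\lambda+\eta)\,Q_t(\lambda-\eta)Q_t(\lambda+\eta)P_t(\lambda)$ appears with opposite signs and cancels, leaving a combination of $B_+(\lambda)\,Q_t(\lambda-\eta)\,W_{Q_t,P_t}(\lambda+\eta)$ and $B_-(\lambda+\eta)\,Q_t(\lambda+\eta)\,W_{Q_t,P_t}(\lambda)$.

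The decisive algebraic step is then to identify these two Wronskian values in terms of $\mathsf{A}$. Using the elementary shift identities $d(\lambda+\eta)=a(\lambda)$ and $a(-\lambda-\eta)=d(-\lambda)$, which follow directly from \eqref{a-d}, together with the explicit form \eqref{W-eq}, one obtains $W_{Q_t,P_t}(\lambda+\eta)=2K\,\mathsf{A}(\lambda)$ and $W_{Q_t,P_t}(\lambda)=-2K\,\mathsf{A}(-\lambda)$ with $K=\bar\zeta_++\bar\zeta_-+(p-q)\eta$. Translating back to the boundary-dependent $\mathsf{A}_{\bar\zeta_+,\bar\zeta_-}$ via the relation between them, one recognises precisely the right-hand side of \eqref{Eq-homo-1} applied to $Q_t$, and the parity in $\lambda^2$ of $Q_t$ and $P_t$ allows packaging the two contributions as the single sum over $\epsilon=\pm$ in the stated formula. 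The same chain of identities, read backwards, establishes the converse: given $Q_t, P_t$ of the prescribed form satisfying \eqref{Wronsk} and the explicit expression for $t(\lambda)$, the $T$-$Q$ equation \eqref{Eq-homo-1} for $Q_t$ follows, and Theorem~\ref{th-Sp-2} concludes that $t(\lambda)\in\Sigma_{\mathcal{T}}$.

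The main obstacle is not conceptual but computational: the careful tracking of $\eta$-shifts and signs required to match the explicit form of $W_{Q_t,P_t}(\lambda)$ and $W_{Q_t,P_t}(\lambda+\eta)$ with the appropriate values of $\mathsf{A}_{\bar\zeta_+,\bar\zeta_-}(\mp\lambda)$, and in particular to pin down the exact normalization $2\lambda K$ in the denominator of the $t(\lambda)$ formula, where the integer $p-q$ enters through the asymptotic leading coefficient of the Wronskian.
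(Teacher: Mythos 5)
Your strategy is indeed the one the paper has in mind: the paper gives no proof of Proposition~\ref{th-Sp-3} (it only points to the analogous antiperiodic reformulation), and the natural argument is exactly what you describe — extract $t(\lambda)$ from the two homogeneous $T$-$Q$ equations together with the closed form \eqref{W-eq} of the Wronskian, and reverse the algebra for the converse. The bilinear manipulation is sound: writing $B_+(\lambda)=(\lambda-\frac{\eta}{2}+\bar\zeta_+)(\lambda-\frac{\eta}{2}+\bar\zeta_-)$ and $S(\lambda)=\sum_{\epsilon=\pm}\epsilon\, B_+(\epsilon\lambda)B_+(\epsilon\lambda+\eta)\,Q_t(\epsilon\lambda-\eta)P_t(\epsilon\lambda+\eta)$, multiplying by $Q_t(\lambda)$ and using \eqref{Wronsk} at $\lambda$ and $\lambda+\eta$ does leave $B_+(\lambda)Q_t(\lambda-\eta)W_{Q_t,P_t}(\lambda+\eta)+B_+(-\lambda)Q_t(\lambda+\eta)W_{Q_t,P_t}(\lambda)$, and your evaluations $W_{Q_t,P_t}(\lambda+\eta)=2K\,\mathsf{A}(\lambda)$, $W_{Q_t,P_t}(\lambda)=-2K\,\mathsf{A}(-\lambda)$, with $K=\bar\zeta_++\bar\zeta_-+(p-q)\eta$ and $\mathsf{A}(\lambda)=(-1)^N(\lambda+\eta/2)\,a(\lambda)\,d(-\lambda)$ as in the proof of Proposition~\ref{prop-wronskian}, are correct. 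The converse also works as you say, provided the hypothesis is read as the closed-form relation \eqref{W-eq} (not merely the definition \eqref{Wronsk}), which is how you in fact use it.

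The one genuine problem is the final normalization you claim to "pin down". Since $\mathsf{A}_{\bar\zeta_+,\bar\zeta_-}(\lambda)=\frac{B_+(\lambda)}{\lambda\,\bar\zeta_+\bar\zeta_-}\,\mathsf{A}(\lambda)$, the last step of your chain gives $S(\lambda)\,Q_t(\lambda)=2K\big[\mathsf{A}(\lambda)B_+(\lambda)Q_t(\lambda-\eta)-\mathsf{A}(-\lambda)B_+(-\lambda)Q_t(\lambda+\eta)\big]=2K\,\bar\zeta_+\bar\zeta_-\,\lambda\,t(\lambda)\,Q_t(\lambda)$, so the prefactor produced by your argument is $\big[2\lambda\,\bar\zeta_+\bar\zeta_-\,\big(\bar\zeta_++\bar\zeta_-+(p-q)\eta\big)\big]^{-1}$, not $\big[2\lambda\big(\bar\zeta_++\bar\zeta_-+(p-q)\eta\big)\big]^{-1}$ as in the displayed statement. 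A leading-coefficient check confirms this: for $\bar{\mathsf{c}}_+=0$ one has $t(\lambda)\sim\frac{2}{\bar\zeta_+\bar\zeta_-}\lambda^{2(N+1)}$, while $S(\lambda)\sim 4K\,\lambda^{2N+3}$ (using $p+q=N$), so $S(\lambda)/(2\lambda K)$ misses the eigenvalue's leading coefficient by exactly the factor $\bar\zeta_+\bar\zeta_-$. Hence either you have mis-tracked this factor when translating between $\mathsf{A}$ and $\mathsf{A}_{\bar\zeta_+,\bar\zeta_-}$, or the formula as printed carries a typo; in either case your assertion that the computation lands "precisely" on the stated denominator $2\lambda K$ cannot stand as written, and a complete proof must display the $\bar\zeta_+\bar\zeta_-$ factor explicitly (or justify its absence, which the asymptotics above rule out for generic $\bar\zeta_\pm$).
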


\subsection{Bethe ansatz form of the transfer matrix eigenstates 
}

Let us define the following states in the SoV basis,
\begin{align}
   &\ket{ \Omega}
   =\frac{1}{N_{\boldsymbol{\xi},-}}
   \sum_{\mathbf{h}\in\{0,1\}^N}
     \widehat{V}\big( \xi_1^{(h_1)},\ldots,\xi_N^{(h_N)} \big) \, \ket{\mathbf{h}_- },
     \label{Omega}\\
  &\bra{ \Omega_L}
   =\frac{1}{N_{\boldsymbol{\xi},-}}
     \sum_{\mathbf{h}\in\{0,1\}^N}
     \prod_{a=1}^{N} ( f_{a}\, g_{a} )^{h_{a}}\,
     \widehat{V}\big( \xi_1^{(h_1)},\ldots,\xi_N^{(h_N)} \big) \, \bra{\mathbf{h}_- },
     \label{Omega_g} 
\end{align}
and
\begin{align}
  &\bra{ \underline{\Omega}}
  =\frac{1}{N_{\boldsymbol{\xi},-}}
  \sum_{\mathbf{h}\in\{0,1\}^N}
     \prod_{a=1}^{N}  f_{a} ^{h_{a}}\
     \widehat{V}\big( \xi_1^{(h_1)},\ldots,\xi_N^{(h_N)} \big) \, \bra{\mathbf{h}_- },
  \label{barOmega}\\
  &\ket{ \underline{\Omega}_R}
  =\frac{1}{N_{\boldsymbol{\xi},-}}
  \sum_{\mathbf{h}\in\{0,1\}^N}
     \prod_{a=1}^{N}  g_{a} ^{-h_{a}}\
     \widehat{V}\big( \xi_1^{(h_1)},\ldots,\xi_N^{(h_N)} \big) \, \ket{\mathbf{h}_- }.
     \label{barOmega_g}
\end{align}
where $f_a$ and $g_a$ are respectively given by \eqref{f_n} and \eqref{g_n},
and the following renormalized operator
\begin{equation}\label{B-renorm}
\mathcal{B}(\lambda )\equiv (-1)^N\frac{\bar{\mathcal{B}}_{-}(\lambda )}{{b}_{-}(\lambda)}.
\end{equation}
Then, the transfer matrix eigenstates that we have previously constructed by means of SoV can be rewritten in a form similar to what is obtained in the ABA framework, {\it i.e.} through the multiple action of the operator \eqref{B-renorm} on the states \eqref{Omega}-\eqref{barOmega_g}.

\begin{proposition}
\label{th-ABA-1}
Let the inhomogeneities $\xi _{1},\ldots,\xi _{N}$ be generic \eqref{cond-inh}.
Then, for any $t(\lambda )\in\Sigma _{\mathcal{T}}$, the corresponding right and left one-dimensional $\mathcal{T}(\lambda)$-eigenspaces are respectively generated by the following vectors:
\begin{equation}\label{Bethe-vect1}
  \Gamma_W^{-1}\,\prod_{a=1}^{q}\mathcal{B}(\lambda _{a})\, \ket{\Omega }
   =
     \frac{\prod_{k=1}^q d(\lambda_k)\, d(-\lambda_k)}{\prod_{k=1}^p d(\mu_k)\, d(-\mu_k)}\,
         \Gamma_W^{-1}\,  \prod_{a=1}^{p}\mathcal{B}(\mu_{a}) \, \ket{\underline{\Omega}_R },
\end{equation}
and
\begin{equation}\label{Bethe-vect2}
   \bra{\Omega_L } \prod_{a=1}^{q}\mathcal{B}(\lambda _{a}) \, \Gamma_W\, 
  = \frac{\prod_{k=1}^q d(\lambda_k)\, d(-\lambda_k)}{\prod_{k=1}^p d(\mu_k)\, d(-\mu_k)}\,
      \bra{ \underline{\Omega}} \prod_{a=1}^{p}\mathcal{B}(\mu_{a}) \, \Gamma_W \, . 
\end{equation}
In these expressions, $Q_t$ (with roots given by $\lambda_1,\ldots,\lambda_q$) and $P_t$ (with roots given by $\mu_1,\ldots,\mu_p$) are respectively the solutions \eqref{Q-form} of \eqref{Inhom-BAX-eq} and \eqref{P-form} of \eqref{Inhom-BAX-eq-bis} if $\bar{\mathsf{c}}_{+}\not=0$  (in which case $q=p=N$), or the solutions \eqref{form-Q} of \eqref{Eq-homo-1} and \eqref{form-barQ} of \eqref{Eq-homo-2} if  $\bar{\mathsf{c}}_{+}=0$ (in which case $p+q=N$).
\end{proposition}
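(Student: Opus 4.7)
The plan is to apply the operators $\prod_a\mathcal{B}(\lambda_a)$ (resp.\ $\prod_a\mathcal{B}(\mu_a)$) termwise to the four auxiliary states \eqref{Omega}--\eqref{barOmega_g} using the explicit diagonal action of $\bar{\mathcal{B}}_{-}(\lambda)$ on the SoV basis, and then to match the resulting sums with the SoV expressions \eqref{eigenT-right} and \eqref{eigenT-left-bis} for the eigenstates obtained in Theorem~\ref{th-Sp-0}. More precisely, I would first recall from Appendix~\ref{app-Beigen} that $\bar{\mathcal{B}}_{-}(\lambda)\ket{\mathbf{h}_-}$ is proportional to the scalar $b_{-}(\lambda)\prod_{n=1}^{N}\big(\lambda^{2}-(\xi_{n}^{(h_n)})^{2}\big)$ times $\ket{\mathbf{h}_-}$, so that by construction the normalized operator \eqref{B-renorm} satisfies $\mathcal{B}(\lambda)\ket{\mathbf{h}_-}=\prod_{n=1}^{N}\big(\lambda^{2}-(\xi_{n}^{(h_n)})^{2}\big)\ket{\mathbf{h}_-}$, and analogously on the left basis. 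Setting $Q_{t}(\lambda)=\prod_a(\lambda^{2}-\lambda_a^{2})$ then gives, directly from \eqref{Omega},
\begin{equation*}
\prod_{a=1}^{q}\mathcal{B}(\lambda_a)\ket{\Omega}
=\frac{1}{N_{\boldsymbol{\xi},-}}\sum_{\mathbf{h}\in\{0,1\}^N}\widehat{V}\big(\xi_1^{(h_1)},\ldots,\xi_N^{(h_N)}\big)\prod_{n=1}^{N}Q_{t}\big(\xi_n^{(h_n)}\big)\ket{\mathbf{h}_-},
\end{equation*}
which, after applying $\Gamma_W^{-1}$, is exactly $\ket{\Psi_t}$ up to the scalar $1/N_{\boldsymbol{\xi},-}$. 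The corresponding identification for $\bra{\Omega_L}\prod_a\mathcal{B}(\lambda_a)\,\Gamma_W$ produces an extra factor $\prod_n(f_n g_n)^{h_n}$ which, via the Vandermonde reshuffling \eqref{id-VDM}, is rearranged into exactly the form \eqref{eigenT-left-bis}.

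To prove the second equalities in \eqref{Bethe-vect1}--\eqref{Bethe-vect2}, the same computation starting from $\ket{\underline{\Omega}_R}$ and $\bra{\underline{\Omega}}$ with $P_{t}(\lambda)=\prod_a(\lambda^{2}-\mu_a^{2})$ yields the same structure but with $\prod_n g_n^{-h_n} P_t(\xi_n^{(h_n)})$ in place of $\prod_n Q_t(\xi_n^{(h_n)})$ (and likewise for the left state, where the $(-1)^{h_n}$ sign produced by \eqref{id-VDM} combines with $g_n^{-h_n}$ to give the $(-g_n)^{h_n}$ required by \eqref{eigenT-left-bis}). The key algebraic input is then the pointwise identity
\begin{equation*}
g_n^{-1}\,\frac{P_{t}(\xi_n^{(1)})}{P_{t}(\xi_n^{(0)})}=\frac{Q_{t}(\xi_n^{(1)})}{Q_{t}(\xi_n^{(0)})},\qquad n=1,\ldots,N,
\end{equation*}
which I would obtain by writing the analogue of \eqref{Q-dis} for $P_{t}$ (using that $P_{t}$ solves \eqref{Inhom-BAX-eq-bis} or \eqref{Eq-homo-2}), and then reading off from the explicit expression \eqref{Aeps_pm} the elementary ratio
\begin{equation*}
\frac{\mathsf{A}_{-\bar{\zeta}_{+},-\bar{\zeta}_{-}}(-\xi_n^{(1)})}{\mathsf{A}_{\bar{\zeta}_{+},\bar{\zeta}_{-}}(-\xi_n^{(1)})}
=\frac{(\xi_n+\bar{\zeta}_{+})(\xi_n+\bar{\zeta}_{-})}{(\xi_n-\bar{\zeta}_{+})(\xi_n-\bar{\zeta}_{-})}=g_n,
\end{equation*}
the two denominators $\bar{\zeta}_+\bar{\zeta}_-$ being unchanged under $\bar{\zeta}_\pm\to-\bar{\zeta}_\pm$. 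Iterating this identity over $n$ gives $\prod_n g_n^{-h_n}P_{t}(\xi_n^{(h_n)})=C\,\prod_n Q_{t}(\xi_n^{(h_n)})$ with an $\mathbf{h}$-independent constant $C=\prod_n P_{t}(\xi_n^{(0)})/Q_{t}(\xi_n^{(0)})$, so that the two representations differ by exactly the scalar $C^{-1}$. The final step is the elementary rewriting
\begin{equation*}
\prod_{n=1}^{N}Q_{t}(\xi_n^{(0)})=\prod_{b=1}^{q}d(\lambda_b)\,d(-\lambda_b),\qquad
\prod_{n=1}^{N}P_{t}(\xi_n^{(0)})=\prod_{b=1}^{p}d(\mu_b)\,d(-\mu_b),
\end{equation*}
which follows directly from \eqref{a-d} and the polynomial forms of $Q_t,P_t$, and reproduces the proportionality factor displayed in \eqref{Bethe-vect1}--\eqref{Bethe-vect2}.

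The main obstacle is not conceptual but bookkeeping: one has to keep track of the sign $(-1)^{\sum h_n}$ arising from \eqref{id-VDM}, of the Vandermonde rearrangement $\widehat{V}(\xi^{(h)})\leftrightarrow \widehat{V}(\xi^{(1-h)})$, and of the fact that the scalar normalization by $b_-(\lambda)$ in the definition \eqref{B-renorm} is precisely what makes $\mathcal{B}(\lambda)\ket{\mathbf{h}_-}$ equal to the polynomial $\prod_n(\lambda^2-(\xi_n^{(h_n)})^2)\ket{\mathbf{h}_-}$ that generates $Q_{t}(\xi_n^{(h_n)})$ after evaluation at the Bethe roots. Once these identifications are in place, the argument is uniform: only the existence and polynomial form of $Q_t$ and $P_t$ from Theorems~\ref{th-Sp-1} and \ref{th-Sp-2} are used, so that both the unconstrained case ($\bar{\mathsf{c}}_{+}\neq 0$, with $q=p=N$) and the constrained case ($\bar{\mathsf{c}}_{+}=0$, with $p+q=N$ by Proposition~\ref{prop-wronskian}) are covered simultaneously.
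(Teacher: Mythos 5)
Your argument is correct and follows essentially the same route as the paper's own proof: the diagonal action of $\mathcal{B}(\lambda)$ on the SoV basis identifies both sides with the eigenstates of Theorem~\ref{th-Sp-0}, the pointwise relation $P_t(\xi_n^{(1)})/P_t(\xi_n^{(0)})=g_n\,Q_t(\xi_n^{(1)})/Q_t(\xi_n^{(0)})$ (obtained from the discrete form of the $T$-$Q$ equations and the ratio of the $\mathsf{A}$ functions) fixes the $\mathbf{h}$-independent proportionality constant, and the rewriting $\prod_n Q_t(\xi_n^{(0)})/P_t(\xi_n^{(0)})=\prod_k d(\lambda_k)\,d(-\lambda_k)\big/\prod_k d(\mu_k)\,d(-\mu_k)$ gives the displayed prefactor, exactly as in the paper. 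The only (harmless) bookkeeping slip is that, by \eqref{B-renorm} and \eqref{eigen-B}, one has $\mathcal{B}(\lambda)\ket{\mathbf{h}_-}=\prod_{n=1}^{N}\big((\xi_n^{(h_n)})^{2}-\lambda^{2}\big)\ket{\mathbf{h}_-}$, so your stated single-operator eigenvalue misses a factor $(-1)^N$ which, however, disappears in the correct identification $\prod_{a}\mathcal{B}(\lambda_a)\ket{\mathbf{h}_-}=\prod_n Q_t(\xi_n^{(h_n)})\,\ket{\mathbf{h}_-}$ that you actually use.
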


\begin{proof}[Proof]
The eigenstates \eqref{eigenT-right} and \eqref{eigenT-left} can be written in terms of the zeros of $Q_t$,
\begin{equation}
    \ket{\Psi_t}=N_{\boldsymbol{\xi},-}
   ^{-1}\, \,\Gamma_W^{-1}\prod_{a=1}^{q}\mathcal{B}(\lambda _{a}) \,  \ket{\Omega }, 
    \qquad
    \bra{\Psi_t}=N_{\boldsymbol{\xi},-}\,
    \bra{\Omega_L } \prod_{a=1}^{q}\mathcal{B}(\lambda _{a}) \, \Gamma_W,
\end{equation}
using the fact the left and right SoV-basis \eqref{left-basis} and \eqref{right-basis} are by definition eigenbasis of $\bar{\mathcal{B}}_{-}(\lambda )$ with eigenvalue \eqref{eigen-B}, and that
\begin{equation}
   \prod_{a=1}^q\prod_{n=1}^N\big(\lambda_a-\xi_n^{(h_n)}\big)\big(-\lambda_a-\xi_n^{(h_n)}\big)
   = \prod_{n=1}^N Q_t\big( \xi_n^{(h_n)}\big).\label{Rewriting-SoV-ABA-Id}
\end{equation}
These states can also be expressed in terms of the zeros of $P_t$, by using the reference states $\ket{\underline\Omega_R }$ and $\bra{\underline\Omega}$ instead of $\ket{\Omega }$ and $\bra{\Omega_L }$:
\begin{align}
    &\ket{\Psi_t}=N_{\boldsymbol{\xi},-}\,
    \prod_{a=1}^{N}\frac{Q_t(\xi_{a}^{(0)})}{P_t(\xi_{a}^{(0)})} \,
       \, \Gamma_W^{-1} \prod_{a=1}^{p}\mathcal{B}(\mu_{a}) \, \ket{\underline{\Omega}_R },
        \\
    &\bra{\Psi_t}=N_{\boldsymbol{\xi},-}\,
    \prod_{a=1}^{N}\frac{Q_t(\xi_{a}^{(0)})}{P_t(\xi_{a}^{(0)})} \,
      \bra{ \underline{\Omega}} \prod_{a=1}^{p}\mathcal{B}(\mu_{a})\, \Gamma_W .
\end{align}
This follows from the fact that
\begin{align}
\frac{P_t(\xi _{n}^{(1)})}{P_t(\xi _{n}^{(0) })}
 &=\frac{\mathsf{A}_{\bar{\zeta}_{+},\bar{\zeta}_{-}}(\xi _{n}^{(0)})}{\mathsf{A}_{-\bar{\zeta}_{+},-\bar{\zeta}_{-}}(\xi _{n}^{(0)})}\,
  \frac{Q_t(\xi _{n}^{( 1) })}{Q_t(\xi _{n}^{( 0) })}
   =g_n \, \frac{Q_t(\xi _{n}^{( 1) })}{Q_t(\xi _{n}^{( 0) })}.
\end{align}
Finally, noticing that
\begin{equation}
     \prod_{a=1}^{N}\frac{Q_t(\xi_{a}^{(0)})}{P_t(\xi_{a}^{(0)})} 
     = \frac{\prod_{k=1}^q d(\lambda_k)\, d(-\lambda_k)}{\prod_{k=1}^p d(\mu_k)\, d(-\mu_k)},
\end{equation}
we get the result.
\end{proof}

It is worth remarking that, under the condition $\bar{\mathsf{c}}_{+}= 0$, it is also possible to use the Bethe ansatz approach to characterize the transfer matrix spectrum. However, the completeness of the corresponding  ABA construction of eigenstates has so far remained a complicated issue. Proposition~\ref{th-ABA-1}, based on the SoV study of the model, 
provides instead a {\em complete} construction for the eigenstates as ``Bethe'' vectors of the form \eqref{Bethe-vect1}-\eqref{Bethe-vect2}.
The connection with the usual ABA construction becomes clearer if we remark that the reference state \eqref{Omega} actually coincides with the state with all spin up, whereas the reference state \eqref{barOmega} coincides with the states with all spin down, as stated in the following proposition.

\begin{proposition}
\label{th-ABA-2}
Let the inhomogeneities $\xi _{1},\ldots,\xi _{N}$ be generic  \eqref{cond-inh} and ${\bar{\mathsf{b}}}_{-}\neq 0$.
Then
\begin{equation}
   \ket{\Omega}=\ket{0},\qquad
   \bra{ \underline{\Omega}}=\bra{\underline{0}} ,  \label{SOV-rewriting-reference}
\end{equation}
where $ \ket{0 } $ and  $\bra{ \underline{0} }$ are respectively the  right reference states with all spin up and the left reference state with all spin down.
\end{proposition}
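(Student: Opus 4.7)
The strategy is to expand the all-spins-up state $\ket{0}$ and its dual left all-spins-down counterpart $\bra{\underline{0}}$ in the SoV bases \eqref{right-basis}, \eqref{left-basis}, and to check that the resulting coefficients match those appearing in the definitions \eqref{Omega} and \eqref{barOmega}. By the orthogonality relation \eqref{norm}, any right vector $\ket{v}$ decomposes as
\begin{equation*}
  \ket{v} = \frac{1}{N_{\boldsymbol{\xi},-}}\sum_{\mathbf{h}\in\{0,1\}^N}
  \widehat{V}\big(\xi_1^{(h_1)},\ldots,\xi_N^{(h_N)}\big)\, \braket{\mathbf{h}_-}{v} \, \ket{\mathbf{h}_-},
\end{equation*}
and an analogous formula holds for any left vector $\bra{w}$, with $\braket{w}{\mathbf{h}_-}$ in place of $\braket{\mathbf{h}_-}{v}$. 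Comparing term by term with \eqref{Omega} and \eqref{barOmega}, the proposition therefore reduces to establishing the two scalar identities
\begin{equation*}
  \braket{\mathbf{h}_-}{0}=1, \qquad \braket{\underline{0}}{\mathbf{h}_-}=\prod_{a=1}^{N} f_a^{\,h_a},
\end{equation*}
for every $\mathbf{h}\in\{0,1\}^N$.

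Each of these overlaps is then evaluated using the explicit construction of the SoV basis given in Appendix~\ref{app-Beigen}, in which the vectors $\ket{\mathbf{h}_-}$ (resp.\ $\bra{\mathbf{h}_-}$) are built by successive applications of the renormalized operator $\mathcal{B}(\lambda)$ of \eqref{B-renorm} at the shifted inhomogeneities $\xi_n^{(0)}$ on a distinguished right (resp.\ left) reference vector. The action of $\bar{\mathcal{B}}_-(\lambda)$ on the ferromagnetic states $\ket{0}$ and $\bra{\underline{0}}$ is obtained by combining the standard action of the bulk entries of $M(\lambda)$ \eqref{mon} with the explicit form \eqref{triangKnobar} of $\bar{K}_-$; the successive $\mathcal{B}$'s are then reordered through the reflection algebra so as to reduce each overlap to a scalar expression involving only $a(\lambda)$, $d(\lambda)$ and the boundary data $\bar{\zeta}_-$, $\bar{\mathsf{b}}_-$. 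The simple right-hand sides of the two identities above then follow from the specific choice of normalization \eqref{norm-factor}, together with the Vandermonde identity \eqref{id-VDM}, which produces the factors $f_a^{\,h_a}$ out of the exchange between $\xi_n^{(0)}$ and $\xi_n^{(1)}$ forced by the pairing with $\bra{\underline{0}}$ rather than with the natural right reference used in the construction of $\ket{\mathbf{h}_-}$.

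The main obstacle is that, in the gauge chosen to make $\bar{K}_+$ lower triangular, the remaining matrix $\bar{K}_-$ is generically non triangular ($\bar{\mathsf{b}}_-\neq 0$), so $\bar{\mathcal{B}}_-(\lambda)$ is not a pure raising operator: its action on $\ket{0}$ mixes diagonal and off-diagonal pieces. The reduction to the compact values $1$ and $\prod_a f_a^{\,h_a}$ therefore rests on a careful cancellation engineered by the very choice of the shifted evaluation points $\xi_n^{(0)}$: at these points the diagonal contributions collapse to boundary-dependent scalars that are precisely absorbed into the normalization $N_{\boldsymbol{\xi},-}$, while the unwanted off-diagonal terms telescope along the successive applications of the $\mathcal{B}$ operators. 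Once this cancellation is checked, the whole proposition follows from the general expansion formulas recalled above.
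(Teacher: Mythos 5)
Your overall strategy --- expanding $\ket{0}$ and $\bra{\underline{0}}$ over the SoV bases via the orthogonality relation \eqref{norm} and reducing the proposition to the two overlap identities $\braket{\mathbf{h}_-}{0}=1$ and $\braket{\underline{0}}{\mathbf{h}_-}=\prod_a f_a^{h_a}$ --- is sound, and both identities are indeed true; this is a genuinely different (and more direct) route than the paper's, which instead specializes to $\bar{\mathsf{c}}_+=0$, shows that $\ket{0}$ and $\bra{\underline{0}}$ are transfer-matrix eigenstates, invokes the simplicity of the spectrum together with the independence of the SoV basis from $\bar{K}_+$, and fixes the proportionality constant by pairing with $\bra{\mathbf{1}_-}=\bra{0}$ and $\ket{\mathbf{0}_-}=\ket{\underline{0}}$. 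However, your justification of the two overlaps rests on a misreading of the SoV construction: the vectors $\bra{\mathbf{h}_-}$ and $\ket{\mathbf{h}_-}$ are \emph{not} built by successive applications of the operator $\mathcal{B}(\lambda)$ of \eqref{B-renorm}. By \eqref{left-SOV-state}--\eqref{right-SOV-state} they are obtained by acting with $\mathcal{\bar{A}}_{-}(\eta/2-\xi_n)$ on $\bra{0}$ and with $\mathcal{\bar{D}}_{-}(\xi_n+\eta/2)$ on $\ket{\underline{0}}$; $\mathcal{B}(\lambda)$ is precisely the operator that these vectors diagonalize, so applying it can only produce scalar factors and cannot generate the basis. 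Consequently, the mechanism you invoke --- reordering strings of $\mathcal{B}$'s through the reflection algebra, telescoping cancellations of off-diagonal pieces, and the identity \eqref{id-VDM} --- does not correspond to any actual computation of $\braket{\mathbf{h}_-}{0}$ or $\braket{\underline{0}}{\mathbf{h}_-}$, and this is where your argument has a genuine gap.

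The correct evaluation is in fact immediate and needs none of that machinery. Since $C(\pm\lambda)\ket{0}=0$, the boundary-bulk decomposition of $\mathcal{\bar{A}}_{-}$ gives $\mathcal{\bar{A}}_{-}(\lambda)\ket{0}=\mathsf{A}_-(\lambda)\ket{0}$, so each factor in $\braket{\mathbf{h}_-}{0}$ equals one and $\braket{\mathbf{h}_-}{0}=\braket{0}{0}=1$. Similarly $\bra{\underline{0}}\,C(\lambda)=0$ gives $\bra{\underline{0}}\,\mathcal{\bar{D}}_{-}(\lambda)=(-1)^N\,\bar{d}_-(\lambda)\,a(\lambda)\,d(-\lambda)\,\bra{\underline{0}}$, and since $\bar{d}_-(\xi_n+\eta/2)=\bar{a}_-(\eta/2-\xi_n)$ by \eqref{triangKnobar}, the $n$-th factor of $\braket{\underline{0}}{\mathbf{h}_-}$ reduces, for $h_n=1$, to
\begin{equation*}
  \frac{a(\xi_n+\eta/2)\, d(-\xi_n-\eta/2)}{k_n\, a(\eta/2-\xi_n)\, d(\xi_n-\eta/2)}
  = -\prod_{\substack{a=1\\ a\neq n}}^{N}\frac{(\xi_n-\xi_a+\eta)(\xi_n+\xi_a+\eta)}{(\xi_n-\xi_a-\eta)(\xi_n+\xi_a-\eta)}
  = f_n ,
\end{equation*}
in agreement with \eqref{f_n}; the identity \eqref{id-VDM} plays no role here, and the normalization \eqref{norm-factor} enters only through \eqref{norm}, as you already used. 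With these two short computations supplied, your proof closes and is arguably more elementary than the paper's, since it avoids the detour through the constrained case $\bar{\mathsf{c}}_+=0$ and the appeal to spectrum simplicity.
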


\begin{proof}[Proof]
Let us first remark that the construction of the SoV basis of Appendix~\ref{app-Beigen} relies only on the reflection algebra for ${\mathcal{U}}_-$, and hence does not depend on the boundary matrix $\bar{K}_+$ \eqref{triangKnobar}. It follows that the reference states $\ket{\Omega}$ \eqref{Omega} and $\bra{\underline{\Omega}}$ \eqref{barOmega} remain the same if we vary $\bar{\mathsf{c}}_+$ without varying $\bar{K}_-$.

Let us now remark that, if we fix $\bar{\mathsf{c}}_+$ to be zero, we
have the following expression for the transfer matrix $ \mathcal{{T}}(\lambda )$:
\begin{align}
   \bar{\mathcal{T}}(\lambda ) 
   &=\frac{(2\lambda +\eta )\, (\lambda-\frac{\eta}{2}+\bar{\zeta}_+ )\, \bar{\mathcal{A}}_{-}(\lambda )
                +(2\lambda -\eta) \,(-\lambda-\frac{\eta}{2}+\bar{\zeta}_+ )\, \bar{\mathcal{A}}_{-}(-\lambda )}
               {2\lambda\, \bar{\zeta}_+ }, 
               \\
   &=\frac{(2\lambda +\eta ) \,(-\lambda+\frac{\eta}{2}+\bar{\zeta}_+ )\,  \bar{\mathcal{D}}_{-}(\lambda )
                +(2\lambda -\eta ) \,  (\lambda+\frac{\eta}{2}+\bar{\zeta}_+ )\, \bar{\mathcal{D}}_{-}(-\lambda )}{2\lambda\, \bar{\zeta}_+ }.
\end{align}
From the boundary-bulk decomposition of the elements of the gauged
transformed boundary monodromy matrix,
\begin{align*}
&\bar{\mathcal{A}}_{-}(\lambda )
 =(-1)^N \Big\{ \bar{a}_{-}(\lambda )\, A(\lambda )\, D(-\lambda ) -\bar{b}_{-}(\lambda )\, A(\lambda) \, C(-\lambda )
 -\bar{d}_{-}(\lambda ) \, B(\lambda ) \, C(-\lambda ) \Big\}, 
 \\
 &\bar{\mathcal{D}}_{-}(\lambda )
  =(-1)^N  \Big\{ \bar{d}_{-}(\lambda ) \, D(\lambda ) \, A(-\lambda )
  -\bar{a}_{-}(\lambda ) \, C(\lambda) \, B(-\lambda )
  +\bar{b}_{-}(\lambda ) \, C(\lambda ) \, A(-\lambda ) \Big\},
\end{align*}
it follows that
\begin{align}
  &\bra{ \underline{0}} \, \bar{\mathcal{D}}_{-}(\lambda ) 
   =(-1)^N \bar{d}_{-}(\lambda ) \, a(\lambda) \, d(-\lambda )\, \bra{ \underline{0}},
    \\
  &\bar{\mathcal{A}}_{-}(\lambda ) \, \ket{0} 
  = (-1)^N  \bar{a}_{-}(\lambda )\, a(\lambda )\, d(-\lambda)\, \ket{0},
\end{align}
where we have used the following properties: 
\begin{alignat}{3}
  & A(\lambda )\, \ket{0}=a(\lambda )\, \ket{0} ,\qquad 
  & D(\lambda) \, \ket{0} =d(\lambda ) \, \ket{0} ,\qquad 
  & C(\lambda ) \, \ket{0} =0,
  \\
  &\bra{\underline{0}}\, A(\lambda ) =d(\lambda ) \, \bra{ \underline{0}},\qquad  
  &\bra{\underline{0}} \, D(\lambda )=a(\lambda ) \, \bra{\underline{0}},\qquad  
  &\bra{\underline{0}} \, C(\lambda )=0.
\end{alignat}
These identities imply that
\begin{align}
  &\bra{\underline{0}}\,\bar{\mathcal{T}}(\lambda )
   =\left(\mathsf{A}_{-\bar{\zeta}_{+},-\bar{\zeta}_{-}}(\lambda )+\mathsf{A}_{-\bar{\zeta}_{+},-\bar{\zeta}_{-}}(-\lambda )\right)
   \bra{\underline{0}}, \\
  &\bar{\mathcal{T}}(\lambda )\, \ket{0}
   =\left(\mathsf{A}_{\bar{\zeta}_{+},\bar{\zeta}_{-}}(\lambda )+\mathsf{A}_{\bar{\zeta}_{+},\bar{\zeta}_{-}}(-\lambda )\right)
   \ket{0},
\end{align}
on the gauge transformed transfer matrix, which together with the spectrum
simplicity imply that $\ket{\Omega}$ is proportional to $\ket{0}$, whereas $\bra{\underline{\Omega}}$ is proportional to $\bra{\underline{0}}$.
It is then easy to show that the proportionality factor is equal to $1$ by computing the scalar product of these right (respectively left) vectors with $\bra{0}\equiv \bra{\mathbf{1}_-}$ (respectively with $\ket{\underline{0}}\equiv \ket{\mathbf{0}_-}$).

Since, from the above remark,  $\ket{\Omega}$ and $\bra{\underline{\Omega}}$ do not depend on $\bar{K}_+$, the equality remains valid also if we do not have $\bar{\mathsf{c}}_+=0$.
\end{proof}

\section{Scalar products of separate states 
}
\label{sec-sp}

As usual  \cite{GroMN12,Nic13}, the separation of variable method enables one to obtain determinant representations for the scalar products of the so-called {\em separate states}.
By separate state, we mean here a right (respectively left) state that can be written in the following form,
\begin{align}
    &\sum_{\mathbf{h}\in\{0,1\}^N}\prod_{n=1}^{N} \gamma_{n}^{(h_n)}\
   \widehat{V}\big( \xi_1^{(h_1)},\ldots,\xi_N^{(h_N)} \big) \, \Gamma_W^{-1}  \,  \ket{\mathbf{h}_-} ,
   \label{separate-r} \\
   &\sum_{\mathbf{h}\in\{0,1\}^N}
    \prod_{n=1}^{N} \gamma_{n}^{(h_n)}\
   \widehat{V}\big( \xi_1^{(h_1)},\ldots,\xi_N^{(h_N)} \big)    \,\bra{\mathbf{h}_-}
    \, \Gamma_W,
    \label{separate-l}
\end{align}
for some coefficients $\gamma_n^{(h_n)}$, $n\in\{1,\ldots,N\}$, $h_n\in\{0,1\}$.
This class of states is of particular interest since, on the one hand, it contains the complete set of transfer matrix eigenstates \eqref{eigenT-right} and \eqref{eigenT-left} and, on the other hand, it can be used as a convenient generating family of the space of states\footnote{The fact that we do not have any constraint on the coefficients $\gamma_n^{(h_n)}$ makes it {\em a priori} much easier to express any state as a linear combination of separate states than  as a linear combination of the transfer matrix eigenstates themselves. In particular, we expect the action of local operators on the eigenstates to be explicitly computable in terms of linear combinations of separate states, as in  \cite{GroMN12,KitMNT16}.}. In this respect, these separate states somehow play the same role as the {\em off-shell} Bethe vectors in the context of algebraic Bethe ansatz. Therefore we expect the computation of their scalar products to be a fundamental step towards the computation of more complicated physical quantities such as form factors and correlation functions as was shown in \cite{GroMN12} in the SoV context.

In fact, it is easy to see that any arbitrary separate state, as defined in \eqref{separate-r} or \eqref{separate-l}, admits representations as some Bethe-type vector. To clarify this point, let us introduce some notations that we shall use throughout the remaining part of the paper. Given a polynomial function in $\lambda^2$,
\begin{equation}
\beta (\lambda )=\prod_{m=1}^{n_{\beta }}(\lambda ^{2}-b_{m}^{2}),
\label{poly-sep-state}
\end{equation}
we define, by multiple action of the operator \eqref{B-renorm} on the reference states \eqref{Omega}-\eqref{barOmega_g}, the following Bethe-type vectors:
\begin{align}
& \ket{\beta }=\Gamma _{W}^{-1}\prod_{a=1}^{n_{\beta }}\mathcal{B}(b_{a})\,\ket{\Omega },
  \label{Bethe1}\\
& \bra{ \underline{\beta} }=\bra{ \underline{\Omega}}\prod_{a=1}^{n_{\beta }}\mathcal{B}(b_{a})\,\Gamma _{W},  \label{Bethe2}\\
& \bra{\beta }=\bra{ \Omega_L }\prod_{a=1}^{n_{\beta }}\mathcal{B}(b_{a})\,\Gamma _{W} 
  \label{Bethe3}\\
& \ket{\underline{\beta} }=\Gamma_{W}^{-1}\prod_{a=1}^{n_{\beta }}\mathcal{B}(b_{a})\,\ket{\underline{\Omega}_R },  \label{Bethe4}
\end{align}
Then the separate vector \eqref{separate-r} is equal to the Bethe-type vector \eqref{Bethe1} (respectively to the Bethe-type vector \eqref{Bethe4}) if the polynomial function $\beta(\lambda)$ is chosen such that $\beta(\xi_n^{(h_n)})=N^{1/N}_{\boldsymbol{\xi},-}\,\gamma_n^{(h_n)},$ (respectively such that $\beta(\xi_n^{(h_n)})=N^{1/N}_{\boldsymbol{\xi},-}\,\gamma_n^{(h_n)}\, g_n^{h_n}$) for all $n\in\{1,\ldots,N\}$, $h_n\in\{0,1\}$.
Similarly,  the separate vector \eqref{separate-l} is equal to the Bethe-type vector \eqref{Bethe2} (respectively to the Bethe-type vector \eqref{Bethe3}) if the polynomial function $\beta(\lambda)$ is chosen such that 
$\beta(\xi_n^{(h_n)})=N^{1/N}_{\boldsymbol{\xi},-}\,\gamma_n^{(h_n)}\, f_n^{-h_n},$ 
(respectively such that $\beta(\xi_n^{(h_n)})=N^{1/N}_{\boldsymbol{\xi},-}\,\gamma_n^{(h_n)}\, (f_n\,g_n)^{-h_n}$) for all $n\in\{1,\ldots,N\}$, $h_n\in\{0,1\}$. Conversely, any Bethe-type vector of the form \eqref{Bethe1} or \eqref{Bethe4} (respectively of the form \eqref{Bethe2} or \eqref{Bethe3}) is obviously a separate state of the form \eqref{separate-r} (respectively \eqref{separate-l}).
Hence, in the present article, we shall from now on use the notations \eqref{poly-sep-state}-\eqref{Bethe4} instead of \eqref{separate-r}-\eqref{separate-l}.

In the previous section, we have shown that the one-dimensional right $\mathcal{T}(\lambda )$-eigenspace associated with the eigenvalue $t(\lambda )
$ is generated by any of the two (proportional) separate states 
\begin{equation}
\ket{ Q_t }=\Gamma _{W}^{-1}\prod_{a=1}^{q}\mathcal{B}(\lambda _{a})\,%
\ket{\Omega},\qquad \ket{ \underline{{P}_{t}} }=\Gamma
_{W}^{-1}\prod_{a=1}^{p}\mathcal{B}(\mu _{a})\,\ket{\underline{\Omega}_R}.
\label{eigen2}
\end{equation}
whereas the one-dimensional left $\mathcal{T}(\lambda )$-eigenspace
associated with the eigenvalue $t(\lambda )$ is generated by any of the two
(proportional) separate states,
\begin{equation}
\bra{ Q_t}=\bra{ \Omega_L }\prod_{a=1}^{q}\mathcal{B}(\lambda _{a})\,\Gamma
_{W},\qquad \bra{ \underline{{P}_t }}=\bra{ \underline{\Omega}}%
\prod_{a=1}^{p}\mathcal{B}(\mu _{a})\,\Gamma _{W}.  \label{eigen1}
\end{equation}
In these expressions, we have used the above Bethe-type notations to represent the eigenstates as some on-shell particularization of \eqref{Bethe1}-\eqref{Bethe4}. $Q_{t}$ and $P_{t}$ stand here for the polynomials \eqref{Q-form} and \eqref{P-form} with $p=q=N$ satisfying respectively \eqref{Inhom-BAX-eq} and \eqref{Inhom-BAX-eq-bis} if $\bar{\mathsf{c}}_{+}\not=0$, and for the polynomials \eqref{form-Q} and \eqref{form-barQ} with $p+q=N$ satisfying respectively \eqref{Eq-homo-1} and \eqref{Eq-homo-2} if $\bar{\mathsf{c}}_{+}=0$.

\subsection{The scalar product of two generic separate states
}

The previous SoV studies  \cite{GroMN12,Nic13} applied here for the XXX open chain enables one to straightforwardly express the scalar product of two generic separate states as defined in \eqref{separate-r} and \eqref{separate-l} and hence in \eqref{poly-sep-state} in the form of a determinant. 

\begin{proposition}\label{prop-sc1}
 Let $\alpha(\lambda)$ and $\beta(\lambda)$ be two polynomials in $\lambda^2$ of respective degree $n_\alpha$ and $n_\beta$, and which can be expressed as
\begin{equation}
   \alpha(\lambda)=\prod_{k=1}^{n_\alpha}(\lambda^2-\alpha_k^2),
   \qquad
   \beta(\lambda)=\prod_{k=1}^{n_\beta}(\lambda^2-\beta_k^2),
\end{equation}
in terms of some sets of roots $\{\alpha_1^2,\ldots,\alpha_{n_\alpha}^2\}$ and $\{\beta_1^2,\ldots,\beta_{n_\beta}^2\}$.
Then the scalar products of the separate states built as in \eqref{separate-r} and \eqref{separate-l}  from $\alpha(\lambda)$ and $\beta(\lambda)$,
\begin{equation}\label{id-sp1}
   \moy{\alpha\, |\, \beta}
   =\moy{\beta\,|\,\alpha}
   =
   \bra{\Omega_L}\prod_{k=1}^{n_\alpha}\mathcal{B}(\alpha_k)\prod_{k=1}^{n_\beta}\mathcal{B}(\beta_k)
   \ket{\Omega},
\end{equation}
admit the following determinant representations:
\begin{align}
  \moy{\alpha\, |\, \beta}
  & = \frac{1}{N_{\boldsymbol{\xi},-}}\,
  \det_{1\le i,j\le N}
 \left[ \sum_{h=0}^1 (f_i\, g_i)^h\, \alpha(\xi_i^{(h)})\, \beta(\xi_i^{(h)})\, \big(\xi_i^{(h)}\big)^{2(j-1)}\right],
  \label{repr-sp1}\\
  &=  \prod_{n=1}^N\frac{\xi_n-{\bar{\zeta}}_-}{\xi_n\,{\bar{\mathsf{b}}}_{-}}\,
  \frac{ \det_{1\le i,j\le N}
  \left[ \sum_{h=0}^1 (-g_i)^h \alpha(\xi_i^{(h)})\, \beta(\xi_i^{(h)})\, \big(\xi_i^{(1-h)}\big)^{2(j-1)}\right]}
  {\widehat{V}(\xi_1,\ldots,\xi_N)}.
  \label{repr-sp2}
\end{align}
We also have
\begin{align}\label{id-sp1bis}
   \moy{\underline{\alpha}\, |\, \beta}
   &=\moy{\beta\,|\,\underline{\alpha}}
   =\moy{\alpha\, |\, \underline{\beta}}
   =\moy{\underline{\beta}\, |\, \alpha}
   =  \bra{\underline{\Omega}}\prod_{k=1}^{n_\alpha}\mathcal{B}(\alpha_k)\prod_{k=1}^{n_\beta}\mathcal{B}(\beta_k)
   \ket{\Omega},
  \\
  & = \frac{1}{N_{\boldsymbol{\xi},-}}\,
  \det_{1\le i,j\le N}
  \left[ \sum_{h=0}^1 f_i^h\, \alpha(\xi_i^{(h)})\, \beta(\xi_i^{(h)})\, \big(\xi_i^{(h)}\big)^{2(j-1)}\right]
  \label{repr-sp1bis}\\
  &= \prod_{n=1}^N\frac{\xi_n-{\bar{\zeta}}_-}{\xi_n\,{\bar{\mathsf{b}}}_{-}}\,
  \frac{\det_{1\le i,j\le N}
  \left[ \sum_{h=0}^1 (-1)^h \alpha(\xi_i^{(h)})\, \beta(\xi_i^{(h)})\, \big(\xi_i^{(1-h)}\big)^{2(j-1)}\right]}
  {\widehat{V}(\xi_1,\ldots,\xi_N)}.
  \label{repr-sp2bis}
\end{align}
Finally,
\begin{align}\label{id-sp1ter}
   \moy{\underline{\alpha}\, |\, \underline{\beta}}
  & =\moy{\underline{\beta}\,|\,\underline{\alpha}}
   =\bra{\underline{\Omega}}\prod_{k=1}^{n_\alpha}\mathcal{B}(\alpha_k)\prod_{k=1}^{n_\beta}\mathcal{B}(\beta_k)
   \ket{\underline{\Omega}_R},
  \\
  & =  \frac{1}{N_{\boldsymbol{\xi},-}}\,
  \det_{1\le i,j\le N}
  \left[ \sum_{h=0}^1 \big(f_i\, g_i^{-1}\big)^h\, \alpha(\xi_i^{(h)})\, \beta(\xi_i^{(h)})\, \big(\xi_i^{(h)}\big)^{2(j-1)}\right],
  \label{repr-sp1ter}\\
  &= \prod_{n=1}^N\frac{\xi_n-{\bar{\zeta}}_-}{\xi_n\,{\bar{\mathsf{b}}}_{-}}\,
  \frac{\det_{1\le i,j\le N}
  \left[ \sum_{h=0}^1 \big(-g_i^{-1}\big)^h \alpha(\xi_i^{(h)})\, \beta(\xi_i^{(h)})\, \big(\xi_i^{(1-h)}\big)^{2(j-1)}\right]}
  {\widehat{V}(\xi_1,\ldots,\xi_N)}.
  \label{repr-sp2ter}
\end{align}
\end{proposition}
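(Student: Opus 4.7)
The plan is to express both separate states \eqref{separate-r}, \eqref{separate-l} in the SoV basis and apply the orthogonality relation \eqref{norm}. First, using the commutativity $[\bar{\mathcal{B}}_-(\lambda),\bar{\mathcal{B}}_-(\mu)]=0$ (a direct consequence of the reflection equation for $\bar{\mathcal{U}}_-$), the operator product $\prod_k\mathcal{B}(\alpha_k)\prod_k\mathcal{B}(\beta_k)$ is manifestly symmetric under interchange of the two argument lists. Combined with the observation — visible from the explicit definitions \eqref{Omega}--\eqref{barOmega_g} — that the coefficient structures of $\moy{\underline{\alpha}\,|\,\beta}$ and $\moy{\alpha\,|\,\underline{\beta}}$ match (each producing the weight $f_n^{h_n}$ on $\ket{\mathbf{h}_-}$), this immediately delivers all the elementary equalities asserted in \eqref{id-sp1}, \eqref{id-sp1bis}, \eqref{id-sp1ter}.

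To derive the first determinant representation \eqref{repr-sp1}, I would substitute the explicit SoV expansions \eqref{Omega_g} for $\bra{\Omega_L}$ and \eqref{Omega} for $\ket{\Omega}$ and use the fact that, by the renormalization \eqref{B-renorm}, the operator $\mathcal{B}(\lambda)$ is diagonal on the SoV basis with eigenvalue $\prod_n\big((\xi_n^{(h_n)})^2-\lambda^2\big)$ on $\ket{\mathbf{h}_-}$, as already exploited in \eqref{Rewriting-SoV-ABA-Id}. Hence $\prod_k\mathcal{B}(\alpha_k)\prod_k\mathcal{B}(\beta_k)\ket{\mathbf{h}_-}=\prod_n\alpha(\xi_n^{(h_n)})\beta(\xi_n^{(h_n)})\ket{\mathbf{h}_-}$, and the orthogonality \eqref{norm} collapses the double sum to
\begin{equation*}
\moy{\alpha\,|\,\beta}=\frac{1}{N_{\boldsymbol{\xi},-}}\sum_{\mathbf{h}\in\{0,1\}^N}\prod_{n=1}^{N}(f_ng_n)^{h_n}\,\alpha(\xi_n^{(h_n)})\,\beta(\xi_n^{(h_n)})\,\widehat{V}\big(\xi_1^{(h_1)},\ldots,\xi_N^{(h_N)}\big).
\end{equation*}
Writing the remaining Vandermonde as $\det_{i,j}[(\xi_i^{(h_i)})^{2(j-1)}]$ via \eqref{VDM}, expanding by Leibniz, and using the independence of the sums over $h_1,\ldots,h_N$ to push $\sum_{\mathbf{h}}$ inside the product, yields \eqref{repr-sp1}.

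To reach the alternative form \eqref{repr-sp2}, I would then apply the identity \eqref{id-VDM} to trade $\widehat{V}(\xi_1^{(h_1)},\ldots)$ for $\widehat{V}(\xi_1^{(1-h_1)},\ldots)$, at the price of inserting the factors $\prod_n(-f_n)^{-h_n}$ and the scalar ratio $\widehat{V}(\xi_1^{(0)},\ldots)/\widehat{V}(\xi_1^{(1)},\ldots)$. Using the explicit formula \eqref{norm-factor}, this ratio combines with $N_{\boldsymbol{\xi},-}^{-1}$ to produce exactly the prefactor $\prod_n(\xi_n-\bar{\zeta}_-)/(\xi_n\bar{\mathsf{b}}_-)/\widehat{V}(\xi_1,\ldots,\xi_N)$ in \eqref{repr-sp2}; reassembling the $\mathbf{h}$-sum into a determinant completes the derivation.

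Finally, the remaining four representations \eqref{repr-sp1bis}--\eqref{repr-sp2ter} follow by the very same procedure, merely adjusting the coefficient weight attached to each $\ket{\mathbf{h}_-}$: replacing $\bra{\Omega_L}$ by $\bra{\underline{\Omega}}$ strips the factor $g_n^{h_n}$ (compare \eqref{Omega_g} and \eqref{barOmega}), while replacing $\ket{\Omega}$ by $\ket{\underline{\Omega}_R}$ inserts a factor $g_n^{-h_n}$ (compare \eqref{Omega} and \eqref{barOmega_g}). There is no conceptual obstacle; the whole argument is a direct SoV computation in the spirit of \cite{GroMN12,Nic13}, and the only real difficulty is the careful bookkeeping of the coefficients $f_n$, $g_n$ and of the normalization $N_{\boldsymbol{\xi},-}$.
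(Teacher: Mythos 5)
Your proposal is correct and follows essentially the same route as the paper's proof: expanding the separate states over the SoV bases (equivalently, using the diagonal action of $\mathcal{B}(\lambda)$ on $\ket{\mathbf{h}_-}$ as in \eqref{Rewriting-SoV-ABA-Id}), applying the orthogonality relation \eqref{norm}, using multilinearity of the determinant over the sums in $h_1,\ldots,h_N$, and invoking \eqref{id-VDM} together with \eqref{norm-factor} for the second family of representations. The bookkeeping of the weights $f_n$, $g_n$ and of the prefactors matches the paper's computation, so nothing is missing.
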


\begin{proof}
The fact that the scalar products in \eqref{id-sp1},  \eqref{id-sp1bis} or \eqref{id-sp1ter} coincide is a simple consequence of the commutativity of the family of operators $\mathcal{B}(\lambda)$ and of the definitions of the reference states. 
The representation \eqref{repr-sp1} follows from the fact that the separate states $\bra{\alpha}$ and $\ket{\beta}$ can be rewritten in terms of the left and right SoV basis \eqref{left-SOV-state} and \eqref{right-SOV-state} as
\begin{align}
   &\bra{\alpha}=  \frac{1}{N_{\boldsymbol{\xi},-}}\,
    \sum_{\mathbf{h}\in\{0,1\}^N}
   \prod_{n=1}^N\left[ (f_n\, g_n)^{h_n}\, \alpha(\xi_n^{(h_n)})\right]\, 
   \widehat{V}\big( \xi_1^{(h_1)},\ldots,\xi_N^{(h_N)} \big) 
   \,\bra{\mathbf{h}_-}\, \Gamma_W
  , 
   \\
   &\ket{\beta}=  \frac{1}{N_{\boldsymbol{\xi},-}}\, 
    \sum_{\mathbf{h}\in\{0,1\}^N}\prod_{n=1}^N\beta(\xi_n^{(h_n)})\,
   \widehat{V}\big( \xi_1^{(h_1)},\ldots,\xi_N^{(h_N)} \big) 
\, \Gamma_W^{-1} \, \ket{\mathbf{h}_-},
\end{align}
from the orthogonality relation \eqref{norm} and from the multi linearity of the determinant with respect to the sum over $h_1,\ldots,h_N\in\{0,1\}$.
The representation \eqref{repr-sp2} can be obtained similarly using the identity \eqref{id-VDM} so as to rewrite the state $\bra{\alpha}$ as
\begin{equation*}
  \bra{\alpha}= \prod_{n=1}^N\frac{\xi_n-{\bar{\zeta}}_-}{\xi_n\,{\bar{\mathsf{b}}}_{-}}
 \!\! \sum_{\mathbf{h}\in\{0,1\}^N} \prod_{n=1}^N \left[ (-g_n)^{h_n}\, \alpha(\xi_n^{(h_n)})\right]\, 
   \frac{ \widehat{V}\big( \xi_1^{(1-h_1)},\ldots,\xi_N^{(1-h_N)} \big) }
          { \widehat{V}\big( \xi_1,\ldots,\xi_N \big) }
   \,\bra{\mathbf{h}_-}\, \Gamma_W
 .
\end{equation*}
The representations \eqref{repr-sp1bis}, \eqref{repr-sp2bis}, \eqref{repr-sp1ter} and \eqref{repr-sp2ter} are obtained similarly.
\end{proof}

\begin{rem}
The orthogonality condition for two different transfer matrix eigenstates, which can for instance be written as
\begin{equation}
   \moy{\underline{{Q}_{t'}}\, |\, P_t}=\moy{P_{t'}\, |\, \underline{{Q}_t}}=0\qquad \text{if } t'\not= t,
\end{equation}
can be seen directly at the level of the determinant representation \eqref{repr-sp1bis}, by showing that the corresponding matrix admits a non-zero eigenvector. Indeed, following the method used in \cite{GroMN12}, the difference of two eigenvalues $t(\lambda)$ and $t'(\lambda)$ divided by $(4\lambda^2-\eta^2)$ is a non-zero polynomials in $\lambda^2$ of degree $N-1$,
\begin{equation}
   t(\lambda)-t'(\lambda)=\left(4\lambda^2-\eta^2\right) \, \sum_{j=1}^N C_j^{(t,t')}\, \lambda^{2(j-1)},
\end{equation}
which satisfies  
\begin{multline}
   \sum_{j=1}^N  \left[ \sum_{h=0}^1 f_i^h\, Q_{t'}(\xi_i^{(h)})\, P_t(\xi_i^{(h)})\, \big(\xi_i^{(h)}\big)^{2(j-1)}\right] C_j^{(t,t')}\\
    = Q_{t'}(\xi_i^{(0)})\, P_t(\xi_i^{(0)})\, \frac{t(\xi_i^{(0)})-t'(\xi_i^{(0)})}{4(\xi_i^{(0)})^2-\eta^2}
    +f_i\, Q_{t'}(\xi_i^{(1)})\, P_t(\xi_i^{(1)})\, \frac{t(\xi_i^{(1)})-t'(\xi_i^{(1)})}{4(\xi_i^{(1)})^2-\eta^2}.
\end{multline}
Thanks to the $T$-$Q$ equations \eqref{Eq-homo-1} and \eqref{Eq-homo-2}, the right hand side reduces to
\begin{multline}
   \frac{ Q_{t'}(\xi_i^{(0)})\, P_t(\xi_i^{(1)})\, \mathsf{A}_{-\bar{\zeta}_+,-\bar{\zeta}_-}(\xi_i^{(0)})
   - Q_{t'}(\xi_i^{(1)})\, P_t(\xi_i^{(0)})\, \mathsf{A}_{\bar{\zeta}_+,\bar{\zeta}_-}(\xi_i^{(0)}) }{4(\xi_i^{(0)})^2-\eta^2}
   \\
   +f_i\, \frac{ Q_{t'}(\xi_i^{(1)})\, P_t(\xi_i^{(0)})\, \mathsf{A}_{-\bar{\zeta}_+,-\bar{\zeta}_-}(-\xi_i^{(1)})
   - Q_{t'}(\xi_i^{(0)})\, P_t(\xi_i^{(1)})\, \mathsf{A}_{\bar{\zeta}_+,\bar{\zeta}_-}(-\xi_i^{(1)}) }{4(\xi_i^{(1)})^2-\eta^2},
\end{multline}
which is equal to zero due to the identities
\begin{equation*}
   \frac{\mathsf{A}_{-\bar{\zeta}_+,-\bar{\zeta}_-}(\xi_i^{(0)})}{4(\xi_i^{(0)})^2-\eta^2}
   =f_i\, \frac{ \mathsf{A}_{-\bar{\zeta}_+,-\bar{\zeta}_-}(-\xi_i^{(1)})}{4(\xi_i^{(1)})^2-\eta^2},
   \qquad
   \frac{\mathsf{A}_{\bar{\zeta}_+,\bar{\zeta}_-}(\xi_i^{(0)}) }{4(\xi_i^{(0)})^2-\eta^2}
   =f_i\, \frac{\mathsf{A}_{-\bar{\zeta}_+,-\bar{\zeta}_-}(-\xi_i^{(1)})}{4(\xi_i^{(1)})^2-\eta^2}.
\end{equation*}
\end{rem}

The representations for the scalar products of separate states that we have obtained in Proposition~\ref{prop-sc1} directly from the SoV study of the model are, as usual within the framework of this approach \cite{GroMN12,GroMN14,Nic12,Nic13,Nic13a,Nic13b,FalKN14,LevNT16}, not properly formulated for the consideration of the homogeneous limit. 
As shown in the remaining part of this section, it is possible to transform them into more convenient ones for the consideration of both homogeneous and thermodynamic limits.
These reformulations are based on  some algebraic determinant identities quite similar as those that were used in \cite{KitMNT16} in the case of the XXX antiperiodic chain.

\subsection{Some useful determinant identities}
\label{subsec-id}

We prove here several determinant identities that we shall use in the next subsection to reformulate the representations \eqref{repr-sp2},  \eqref{repr-sp2bis} and  \eqref{repr-sp2ter} for the scalar product of separate states.

So as to concisely formulate these identities and the resulting quantities, let us first introduce some notations.
For a set of arbitrary variables $\{x\}\equiv\{x_1,\ldots,x_L\}$ and a function $f$, we define
\begin{equation}\label{def-A}
   \mathcal{A}_{\{x\}}[f]
   =
   \frac{\det_{1\le i,j \le L}\left[ \sum_{\epsilon\in\{+,-\}} 
     f(\epsilon x_i)
     \left(x_i+\epsilon\frac{\eta}{2}\right)^{2(j-1)}\right]}
          {\det_{1\le i,j \le L}\left[ x_i^{2(j-1)}\right]},
\end{equation}
We also define, for  two arbitrary parameters $\xi_+$, $\xi_-$ and  a set of arbitrary variables $\{z\}\equiv\{z_1,\ldots,z_M\}$, a function
\begin{equation}\label{f-zeta-z}
    f_{\xi_+,\xi_-,\{z\}}(\lambda)=\frac{(\lambda+\xi_+)(\lambda+\xi_-)}{\lambda}\prod_{m=1}^M\frac{\lambda^2-z_m^2}{(\lambda+\frac{\eta}{2})^2-z_m^2}.
\end{equation}
We finally define, when the two sets of variable $\{x\}$ and $\{z\}$ have the same cardinality $M=L$,
\begin{equation}
     \mathcal{I}_{\xi_+,\xi_-}(\{ x \},\{z\})
    =\frac{\prod_{j,k=1}^L(x_j^2-z_k^2)}
              {\prod_{j<k}(x_j^2-x_k^2)(z_k^2-z_j^2)}\,
      \det_{1\le i,j \le L}\left[  \mathcal{I}_{\xi_+,\xi_-} (x_i,z_j) \right] ,
\label{Izergin-det}
\end{equation}
where the function $\mathcal{I}_{\xi_+,\xi_-} (x,z)$ is given by
\begin{equation}\label{mat-I}
    \mathcal{I}_{\xi_+,\xi_-} (x,z)
    =\sum_{\epsilon\in\{+,-\}} \!\!\!
     \epsilon\frac{(x+\epsilon\xi_+)(x+\epsilon\xi_-)}{x \left[ (x+\epsilon\frac{\eta}{2})^2-z^2\right] }.
\end{equation}
These functions can be seen as the boundary counterparts of the one appearing in Section 3.2 of \cite{KitMNT16}. In particular, $\mathcal{I}_{\xi_+,\xi_-}(\{ x \},\{z\})$ can be understood as a generalization of the Izergin's determinant \cite{Ize87,Tsu98}.

\begin{identity}\label{prop-id-1}
Let $M= L$. Then the above functions are related by the following identities:
\begin{align}
    \mathcal{A}_{\{z\}}\big[f_{\xi_+,\xi_-,\{x\}}\big]
      &= \mathcal{I}_{\xi_+,\xi_-}(\{ z \},\{x\})\label{MainK-id-1}\\
      &= (-1)^L \ \mathcal{I}_{-\xi_++\frac{\eta}{2},-\xi_-+\frac{\eta}{2}}(\{ x \},\{z\})\label{MainK-id-2}\\
      &=(-1)^L \, \mathcal{A}_{\{x\}}\big[f_{-\xi_++\frac{\eta}{2},-\xi_-+\frac{\eta}{2},\{z\}}\big].\label{MainK-id-3}
\end{align}
\end{identity}

\begin{proof}
We consider, associated with the set of variables $\{x\}\equiv\{x_1,\ldots,x_L\}$, the following polynomials in $\lambda^2$,
\begin{equation}\label{poly-X}
  X_k(\lambda)=\prod_{\substack{\ell=1\\ \ell\not=k}}^L\!\left(\lambda^2-x_\ell^2\right),
  \qquad 1\le k\le L,
\end{equation}
and the $L\times L$ matrix $\mathcal{C}^X$ with elements $\mathcal{C}_{j,k}^{X}$, $1\le j,k\le L$, defined by the relations
\begin{equation}\label{mat-CX}
     X_k(\lambda ) 
     =\sum_{j=1}^{L} \mathcal{C}_{j,k}^{X}\, \lambda ^{2(j-1)},
     \qquad
     \forall k\in \{ 1,\ldots ,L \} .  
\end{equation}
It is easy to compute its determinant by observing that
\begin{align}
      \det_{1\le i,j \le L}\left[ x_i^{2(j-1)}\right]\cdot \det_{1\le j,k \le L}\left[ \mathcal{C}^{X}_{j,k}\right]
     &=\det_{1\le i,k\le L}\left[ \sum_{j=1}^{L} \mathcal{C}_{j,k}^{X}\, x_i^{2(j-1)}\right] \nonumber\\
     &=\det_{1\le i,k \le L}\left[ X_k(x_{i})\, \delta _{i,k}\right] ,
\end{align}
so that
\begin{equation}
    \det_{L}\left[ \mathcal{C}^{X}\right] 
    =\frac{\prod_{\ell=1}^{L}X_{\ell}(x_{\ell})}{\prod_{1\leq a<b\leq L}(x_b^2-x_a^2)}
    =\prod_{1\leq b<a\leq L}(x_b^2-x_a^2)
    =\widehat{V}(x_L,\ldots, x_1).
\end{equation}
Moreover, it is simple to compute its product with the $L\times L$ matrix of elements $\sum_{\epsilon\in\{+,-\}} 
     f_{\xi_+,\xi_-,\{x\}}(\epsilon z_i)
     \left(z_i+\epsilon\frac{\eta}{2}\right)^{2(j-1)}$, by remarking that, for $1\le i,k \le L$,
\begin{align}
    \sum_{j=1}^{L}
    \Bigg[ \sum_{\epsilon\in\{+,-\}} \!\! & f_{\xi_+,\xi_-,\{x\}}(\epsilon z_i)
     \left(z_i+\epsilon\frac{\eta}{2}\right)^{2(j-1)} \Bigg]\, \mathcal{C}_{j,k}^{X}
            \nonumber\\
    &\quad
    = \!\! \sum_{\epsilon\in\{+,-\}} \!\!
      f_{\xi_+,\xi_-,\{x\}}(\epsilon z_i)\,
     \sum_{j=1}^{L} \mathcal{C}_{j,k}^{X} \left(z_i+\epsilon\frac{\eta}{2}\right)^{2(j-1)}
     \nonumber\\
     &\quad
    =\prod_{\ell=1}^L\!\left(z_i^2-x_\ell^2\right)
       \sum_{\epsilon\in\{+,-\}} \!\!\!
     \epsilon\frac{(z_i+\epsilon\xi_+)(z_i+\epsilon\xi_-)}{z_i \left[ (z_i+\epsilon\frac{\eta}{2})^2-x_k^2\right] },
      \label{prod-mat1}
\end{align}
which implies \eqref{MainK-id-1}.
Rewriting now $\mathcal{I}_{\xi_+,\xi_-} (z,x)$ as
\begin{equation}
     \mathcal{I}_{\xi_+,\xi_-} (z,x)
    = 
    \frac{\sum_{\epsilon=\pm} \epsilon (z+\epsilon\xi_+)(z+\epsilon\xi_-)\left[(z-\epsilon\frac{\eta}{2})^2-x^2\right]}
           {z \left[ (z+\frac{\eta}{2})^2-x^2\right] \left[(z-\frac{\eta}{2})^2-x^2\right]},
\end{equation}
and noticing that
\begin{equation}\label{id-1}
   \frac{1}{ \left[ (z+\frac{\eta}{2})^2-x^2 \right]  \left[ (z-\frac{\eta}{2})^2-x^2\right] }
 =\frac{1}{ \left[ (x+\frac{\eta}{2})^2-z^2 \right] \left[ (x-\frac{\eta}{2})^2-z^2\right]},
\end{equation}
and that
\begin{align}\label{id-2}
  &\frac{1}{z}\sum_{\epsilon=\pm} \epsilon (z+\epsilon\xi_+)(z+\epsilon\xi_-)\Big[\left(z-\epsilon\frac{\eta}{2}\right)^{\! 2}-x^2\Big]
  \nonumber\\
  &\qquad
  = 2\Big[ z^2(\xi_++\xi_--\eta)-x^2(\xi_++\xi_-)+ \frac{\eta^2}{4}(\xi_++\xi_-)-\eta\xi_+\xi_-\Big]
  \nonumber\\
  &\qquad
  = 2\Big[ x^2(\tilde\xi_++\tilde\xi_--\eta)-z^2(\tilde\xi_++\tilde\xi_-)+ \frac{\eta^2}{4}(\tilde\xi_++\tilde\xi_-)-\eta\tilde\xi_+\tilde\xi_-\Big]
  \nonumber\\
  &\qquad
  = \frac{1}{x}\sum_{\epsilon=\pm} \epsilon (x+\epsilon\tilde\xi_+)(x+\epsilon\tilde\xi_-)\Big[\left(x-\epsilon\frac{\eta}{2}\right)^{\! 2}-z^2\Big],
\end{align}
in which we have set $\tilde\xi_\pm=-\xi_\pm+\frac{\eta}{2}$, we obtain \eqref{MainK-id-2}.
Finally \eqref{MainK-id-3} is a consequence of the previous identities.
\end{proof}

The identity between $\mathcal{A}_{\{z\}}\big[f_{\xi_+,\xi_-,\{x\}}\big]$ and $\mathcal{A}_{\{x\}}\big[f_{-\xi_++\frac{\eta}{2},-\xi_-+\frac{\eta}{2},\{z\}}\big]$ can in fact easily be extended to the case in which the cardinality of the two sets of variables is different.

\begin{identity}\label{prop-id-2}
Let us consider two sets of arbitrary complex numbers $\{x\}\equiv\{x_1,\ldots,x_L\}$ and $\{z\}\equiv\{z_1,\ldots,z_M\}$, and let us suppose that $M \le  L$. Then, for any complex parameters $\xi_+$, $\xi_-$, we have
\begin{equation}
    \mathcal{A}_{\{x\}}\big[f_{\xi_+,\xi_-,\{z\}}\big]
    =(-1)^M\, 2^{L-M}\!\!\prod_{j=0}^{L-M-1}\!\!\!\!(\xi_++\xi_-+j\eta)\
    \mathcal{A}_{\{z\}}\big[f_{-\xi_++\frac{\eta}{2},-\xi_-+\frac{\eta}{2},\{x\}}\big].
    \label{MainK-id-4}
\end{equation}
\end{identity}

\begin{proof}
For $M=L$ the formula is identical to the previous one \eqref{MainK-id-3}.  For $M<L$ it is easy to see that
\begin{align}
   &\mathcal{A}_{\{x_1,\ldots,x_L\}}\big[f_{\xi_+,\xi_-,\{z_1,\ldots,z_M\}}\big]
      \nonumber\\
   &\qquad
    =\lim_{z_{M+1}\to +\infty}\ldots\lim_{z_{L}\to +\infty}
      \mathcal{A}_{\{x_1,\ldots,x_L\}}\big[f_{\xi_+,\xi_-,\{z_1,\ldots,z_L\}}\big]
   \nonumber\\
   &\qquad
    =(-1)^L \lim_{z_{M+1}\to +\infty}\ldots\lim_{z_{L}\to +\infty}
   \mathcal{A}_{ \{z_1,\ldots,z_L\}} \big[ f_{-\xi_++\frac{\eta}{2},-\xi_-+\frac{\eta}{2},\{x_1,\ldots,x_L\}}\big],
   \label{id-limits-1}
\end{align}
in which we have used Identity~\ref{prop-id-1}.
Computing the limits in \eqref{id-limits-1}, we finally obtain \eqref{MainK-id-4}.
\end{proof}

The last identity has an immediate corollary in the particular case where $M<L$ and where $\xi_++\xi_-+j\eta=0$ for some $j\in\{0,1,\ldots,L-M-1\}$:

\begin{corollary}\label{prop-id-2-1}
With the same notations as in Identity~\ref{prop-id-2}, let us suppose that $M<L$ and that there exists some integer $j\in\{0,1,\ldots,L-M-1\}$ such that $\xi_++\xi_-+j\eta=0$. Then
\begin{equation}
   \mathcal{A}_{\{x_1,\ldots,x_L\}}\big[f_{\xi_+,\xi_-,\{z_1,\ldots,z_M\}}\big]
    =0.
\end{equation}
\end{corollary}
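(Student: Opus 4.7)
The plan is simply to read the corollary off equation \eqref{MainK-id-4} of Identity~\ref{prop-id-2}. First I would rewrite that identity with the scalar prefactor $(-1)^M\, 2^{L-M}\prod_{j=0}^{L-M-1}(\xi_++\xi_-+j\eta)$ isolated as a multiplier of the second $\mathcal{A}$-quantity on the right-hand side. Second, under the corollary's hypothesis one of the $L-M$ linear factors in the product vanishes, so the whole scalar prefactor is zero. Third, it remains to confirm that the residual factor $\mathcal{A}_{\{z\}}\bigl[f_{-\xi_++\frac{\eta}{2},-\xi_-+\frac{\eta}{2},\{x\}}\bigr]$ stays finite as $(\xi_+,\xi_-)$ is specialized, so that zero times a finite quantity is genuinely zero.

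The main (and essentially only) potential obstacle is precisely this last point: one must check that the $\mathcal{A}$-determinant on the right-hand side does not itself acquire a pole at the resonance $\xi_++\xi_-+j\eta=0$ which could cancel the zero in the prefactor. From the explicit formulas \eqref{def-A} and \eqref{f-zeta-z}, each matrix entry is polynomial in the boundary parameters (the function $f_{\cdot,\cdot,\{x\}}$ is linear in each of its two boundary arguments, and the Vandermonde-type denominator in \eqref{def-A} does not involve $\xi_\pm$ at all), so the whole determinant depends polynomially on $(\xi_+,\xi_-)$ for fixed generic $\{x\}$ and $\{z\}$. Since there is no pole, the vanishing of the prefactor forces the vanishing of the whole expression.

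If one wishes to bypass even this polynomiality observation, an equivalent route is to return to the limiting computation \eqref{id-limits-1} used in the proof of Identity~\ref{prop-id-2}: sending the auxiliary variables $z_{M+1},\ldots,z_L$ to infinity in \eqref{MainK-id-3} produces exactly the prefactor $\prod_{j=0}^{L-M-1}(\xi_++\xi_-+j\eta)$ with all other factors manifestly finite at fixed $\{x\}$, so the vanishing of any single factor kills the limit directly. Either way, the corollary is an immediate scalar consequence of Identity~\ref{prop-id-2}, requiring no new computation beyond the finiteness verification.
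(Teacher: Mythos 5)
Your proposal is correct and matches the paper's treatment: the paper states the corollary as an immediate consequence of the prefactor $\prod_{j=0}^{L-M-1}(\xi_++\xi_-+j\eta)$ in equation \eqref{MainK-id-4}, which is exactly your argument. Your extra check that $\mathcal{A}_{\{z\}}\big[f_{-\xi_++\frac{\eta}{2},-\xi_-+\frac{\eta}{2},\{x\}}\big]$ is polynomial in $(\xi_+,\xi_-)$, hence finite at the resonance, is a harmless (and correct) piece of added care rather than a different route.
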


Instead, if $\xi_+$ and $\xi_-$ are generic, we can rewrite Identity~\ref{prop-id-2} in a form which does not depend on which of the two sets $\{x\}$ or $\{z\}$ has bigger cardinality:

\begin{corollary}\label{prop-id-2-2}
Let $\{x\}\equiv\{x_1,\ldots,x_L\}$ and  $\{z\}\equiv\{z_1,\ldots,z_M\}$ be two sets of arbitrary complex numbers. Then, for any values of the parameters  $\xi_+$ and $\xi_-$ such that 
the corresponding ratio of $\Gamma$-functions is well defined, we have
\begin{multline}
     \mathcal{A}_{\{x\}}\big[f_{\xi_+,\xi_-,\{z\}}\big]
    =(-1)^M\, (2\eta)^{L-M} \, \\
    \times
    \frac{\Gamma\big(\frac{\xi_++\xi_-}{\eta}+L-M\big)}{\Gamma\big(\frac{\xi_++\xi_-}{\eta}\big)}\
    \mathcal{A}_{\{z\}}\big[f_{-\xi_++\frac{\eta}{2},-\xi_-+\frac{\eta}{2},\{x\}}\big].\label{MainK-id-5}
\end{multline}
\end{corollary}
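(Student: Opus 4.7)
The plan is to split into the two cases $M\le L$ and $M>L$, both of which reduce to Identity~\ref{prop-id-2} via elementary Gamma function manipulations.

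In the easy case $M\le L$, the corollary follows directly from \eqref{MainK-id-4}, since the finite product appearing there can be rewritten as
\begin{equation*}
   2^{L-M}\prod_{j=0}^{L-M-1}(\xi_++\xi_-+j\eta)
   =(2\eta)^{L-M}\,\frac{\Gamma\!\big(\tfrac{\xi_++\xi_-}{\eta}+L-M\big)}{\Gamma\!\big(\tfrac{\xi_++\xi_-}{\eta}\big)},
\end{equation*}
by the Pochhammer identity $\prod_{j=0}^{n-1}(\alpha+j)=\Gamma(\alpha+n)/\Gamma(\alpha)$, so \eqref{MainK-id-4} and \eqref{MainK-id-5} immediately coincide.

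The strategy for the case $M>L$ is to apply Identity~\ref{prop-id-2} in the \emph{reverse} direction, that is with $\{x\}\leftrightarrow\{z\}$ and with $\xi_\pm$ replaced by $\tilde\xi_\pm\equiv -\xi_\pm+\eta/2$. Since $L\le M$ is precisely the hypothesis needed for the identity to apply in this reversed configuration, and since the substitution is involutive ($-\tilde\xi_\pm+\eta/2=\xi_\pm$, and $\tilde\xi_++\tilde\xi_-+j\eta=-\xi_+-\xi_-+(j+1)\eta$), this yields
\begin{equation*}
   \mathcal{A}_{\{z\}}\big[f_{-\xi_++\frac{\eta}{2},-\xi_-+\frac{\eta}{2},\{x\}}\big]
   =(-1)^L\,2^{M-L}\prod_{j=1}^{M-L}(-\xi_+-\xi_-+j\eta)\,
    \mathcal{A}_{\{x\}}\big[f_{\xi_+,\xi_-,\{z\}}\big].
\end{equation*}
Inverting this relation to solve for $\mathcal{A}_{\{x\}}[f_{\xi_+,\xi_-,\{z\}}]$, writing the reciprocal of the product as a ratio of Gamma functions with $\alpha=(\xi_++\xi_-)/\eta$ and $N=M-L$, and applying the reflection formula $\Gamma(\beta)\Gamma(1-\beta)=\pi/\sin(\pi\beta)$ to reexpress $\Gamma(1-\alpha)/\Gamma(N+1-\alpha)$ as $(-1)^N\,\Gamma(\alpha-N)/\Gamma(\alpha)$ produces the desired formula \eqref{MainK-id-5}.

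I expect the only real obstacle to be the sign bookkeeping in this last step: the inversion contributes an overall $(-1)^L$, and the reflection formula contributes an additional $(-1)^{M-L}$ (from $\sin(\pi(\alpha-N))=(-1)^N\sin(\pi\alpha)$), which combine to reproduce the $(-1)^M$ appearing in \eqref{MainK-id-5}. The condition in the statement that the ratio of $\Gamma$-functions be well defined is precisely what guarantees that no denominator vanishes in the inversion step, so no additional hypothesis is required beyond what is already assumed.
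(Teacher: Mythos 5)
Your proof is correct and follows the route the paper intends for this corollary: for $M\le L$ it is just the Pochhammer rewriting of the product in \eqref{MainK-id-4}, while for $M>L$ you apply Identity~\ref{prop-id-2} with the two sets exchanged and $\xi_\pm\to-\xi_\pm+\frac{\eta}{2}$ (an involution) and then invert, the well-definedness hypothesis being exactly what makes the inverted product nonzero. A purely cosmetic remark: the reflection formula is an unnecessary detour, since $\prod_{j=1}^{M-L}\big(j\eta-\xi_+-\xi_-\big)=(-\eta)^{M-L}\,\Gamma\big(\tfrac{\xi_++\xi_-}{\eta}\big)\big/\Gamma\big(\tfrac{\xi_++\xi_-}{\eta}-(M-L)\big)$ yields the sign $(-1)^{M-L}$ directly and avoids the (removable) restriction to non-integer $\tfrac{\xi_++\xi_-}{\eta}$.
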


\begin{rem}
Identity~\ref{prop-id-2}, Corollary~\ref{prop-id-2-1} and Corollary~\ref{prop-id-2-2} enable us to compute the quantity $\mathcal{A}_{\{x\}}\big[f_{\xi_+,\xi_-,\{z\}}\big]$, or to explicitly rewrite it in terms of $\mathcal{A}_{\{z\}}\big[f_{-\xi_++\frac{\eta}{2},-\xi_-+\frac{\eta}{2},\{x\}}\big]$ for all values of  $\xi_+$ and $\xi_-$ such that $\frac{\xi_++\xi_-}{\eta} \notin\{1,\ldots,M-L\}$ (the latter set being non-empty only for $M>L$). For the special values for which the relation \eqref{MainK-id-5} is not well-defined, it is possible instead to explicitly rewrite  $\mathcal{A}_{\{x\}}\big[f_{\xi_+,\xi_-,\{z\}}\big]$ in terms of some modified version of $\mathcal{A}_{\{z\}}\big[f_{-\xi_++\frac{\eta}{2},-\xi_-+\frac{\eta}{2},\{x\}}\big]$.
However, since $\mathcal{A}_{\{x\}}\big[f_{\xi_+,\xi_-,\{z\}}\big]$ is a polynomial function of $\xi_+$ and $\xi_-$, one can also merely use analytic continuation at these special points.
\end{rem}

Finally, it is possible to rewrite, for any function $f$, the ratio of determinants \eqref{def-A} in terms of a generalized version of the usual scalar products determinant representations \cite{Sla89}.
For a one-variable function $f$ and two sets of variables  $\{x\}\equiv\{x_1,\ldots,x_M\}$ and $\{y\}\equiv\{y_1,\ldots,y_L\}$ such that $L\ge M$, we define
\begin{equation}\label{def-gen-Slav-det}
    \mathcal{S}_{\{x\},\{y\}}[f]  
    =     \frac{\widehat{V}(x_1-\frac{\eta}{2},\ldots,x_M-\frac{\eta}{2})}{\widehat{V}(x_1+\frac{\eta}{2},\ldots,x_M+\frac{\eta}{2})}\,
       \frac{\det_L\mathcal{S}_{\mathbf{x},\mathbf{y}}[f]}{\widehat{V}(x_M,\ldots,x_1)\, \widehat{V}(y_1,\ldots,y_L)},
\end{equation}
where  $\mathcal{S}_{\mathbf{x},\mathbf{y}}[f]$ is the $L\times L$ matrix with elements
\begin{multline}\label{def-gen-Slav-mat}
   \big[\, \mathcal{S}_{\mathbf{x},\mathbf{y}}[f] \, \big]_{i,k}
   =\!\!\sum_{\epsilon\in\{+,-\}} \!\!
      f(\epsilon y_i)\ 
     X(y_i+\epsilon\eta)
     \\
     \times
     \begin{cases}
      {\displaystyle
      \, \left[ \frac{f(-x_k)}{(y_i+\epsilon\frac{\eta}{2})^2-(x_k+\frac{\eta}{2})^2}
     -\frac{f(x_k)\ \varphi_{\{x\}}(x_k)}{(y_i+\epsilon\frac{\eta}{2})^2-(x_k-\frac{\eta}{2})^2}\right]
     }
     \quad &\text{if } k\le M,
     \vspace{2mm}\\
    \, \left(y_i+\epsilon\frac{\eta}{2}\right)^{\! 2(k-M-1)}
     \qquad &\text{if } k >M.
     \end{cases}
\end{multline}
Here we have used the notations
\begin{equation}
   \varphi_{\{x\}}(\lambda)
   =\frac{2\lambda-\eta}{2\lambda+\eta}\,
     \frac{X(\lambda+\eta)}{X(\lambda-\eta)}
     \quad \text{and}\quad
     X(\lambda)=\prod_{m=1}^M(\lambda^2-x_m^2).
\end{equation}

\begin{identity}\label{prop-id-3}
Let $\{x\}\equiv\{x_1,\ldots,x_M\}$ and $\{y\}\equiv\{y_1,\ldots,y_L\}$ be two sets of arbitrary complex numbers, and let us suppose that $L\ge M$.
Then, for any function $f$,
\begin{equation}
       \mathcal{A}_{ \{x\}\cup\{y\} }[f]
       =\mathcal{S}_{\{x\},\{y\}}[f].
\end{equation}
\end{identity}

The proof of this identity is given in Appendix~\ref{app-id}.

\subsection{The scalar product of two generic separate states: alternative determinant representations}

We can now use the determinant identities obtained in the previous subsection so as to reformulate the determinant representations obtained in Proposition~\ref{prop-sc1} for the scalar products of two generic separate states.
As stated in the following theorem, we notably obtain a generalization of the representations \cite{Sla89,FodW12} already valid at the off-shell level, {\it i.e.} without requiring any of the two states to be a transfer matrix eigenstate.

\begin{theorem}\label{th-sp-separate-gen}
With the same notations as in Proposition~\ref{prop-sc1}, and supposing that $n_\beta\ge n_\alpha$, the scalar product of the separate states $\bra{\alpha}$ and $\ket{\beta}$ can be expressed as
\begin{align}
   \moy{\alpha\, |\, \beta} 
   &= 
     \frac{(2\eta)^{ N-n_\alpha-n_\beta}}{\prod_{n=1}^N\big[({\bar{\zeta}}_+-\xi_n)\, {\bar{\mathsf{b}}}_-\big]}
    \frac{\Gamma\big(\frac{\bar{\zeta}_++\bar{\zeta}_-}{\eta}\! +\! N\! -\! n_\alpha\! -\! n_\beta\big)}{\Gamma\big(\frac{\bar{\zeta}_++\bar{\zeta}_-}{\eta}\big)} \
  \mathcal{A}_{ \{\alpha\}\cup\{\beta\} }\big[\, \widetilde{\mathsf{A}}_{{\bar{\zeta}}_+,{\bar{\zeta}}_-}\big],
       \label{sc-A-1}\\
 \nonumber\\
 & = 
    \frac{(2\eta)^{ N-n_\alpha-n_\beta}}{\prod_{n=1}^N\big[({\bar{\zeta}}_+-\xi_n)\, {\bar{\mathsf{b}}}_-\big]} \,
    \frac{\Gamma\big(\frac{\bar{\zeta}_++\bar{\zeta}_-}{\eta}\! +\! N\! -\! n_\alpha\! -\! n_\beta\big)}{\Gamma\big(\frac{\bar{\zeta}_++\bar{\zeta}_-}{\eta}\big)}\ 
    \mathcal{S}_{\{\alpha\},\{\beta\}}\big[\,\widetilde{\mathsf{A}}_{{\bar{\zeta}}_+,{\bar{\zeta}}_-}\big],
       \label{sc-S-1}
\end{align}
in which we have used the notation \eqref{def-A}, \eqref{def-gen-Slav-det}, and
\begin{equation}
  \widetilde{\mathsf{A}}_{\bar{\zeta},\bar{\zeta}'}(-\lambda)
  =\frac{(\lambda-\frac{\eta}{2}+\bar{\zeta})(\lambda-\frac{\eta}{2}+\bar{\zeta}')}{\lambda}\,
  a(\lambda)\, d(-\lambda).
\end{equation}
Similarly, the scalar product of the separate states $\bra{\underline\alpha}$ and $\ket{\underline\beta}$ can be expressed as
\begin{align}
   \moy{\underline\alpha\, |\, \underline\beta} 
   &= 
    \prod_{n=1}^N\frac{\xi_n-{\bar{\zeta}}_-}{({\bar{\zeta}}_+-\xi_n)\, (\xi_n+{\bar{\zeta}}_-)\,{\bar{\mathsf{b}}}_-}\,
    (2\eta)^{\! N-n_\alpha-n_\beta}\,
    \frac{\Gamma\big(\!-\!\frac{\bar{\zeta}_++\bar{\zeta}_-}{\eta}\! +\! N\! -\! n_\alpha\! -\! n_\beta\big)}{\Gamma\big(\! -\! \frac{\bar{\zeta}_++\bar{\zeta}_-}{\eta}\big)} 
      \nonumber\\
  &\hspace{7.6cm} \times 
  \mathcal{A}_{ \{\alpha\}\cup\{\beta\} }\big[\, \widetilde{\mathsf{A}}_{-{\bar{\zeta}}_+,-{\bar{\zeta}}_-}\big],
      \label{sc-A-2}\\
 \nonumber\\
 & =    \prod_{n=1}^N\frac{\xi_n-{\bar{\zeta}}_-}{({\bar{\zeta}}_+-\xi_n)\, (\xi_n+{\bar{\zeta}}_-)\,{\bar{\mathsf{b}}}_-}\,
    (2\eta)^{\! N-n_\alpha-n_\beta}\, 
    \frac{\Gamma\big(\!-\!\frac{\bar{\zeta}_++\bar{\zeta}_-}{\eta}\! +\! N\! -\! n_\alpha\! -\! n_\beta\big)}{\Gamma\big(\! -\! \frac{\bar{\zeta}_++\bar{\zeta}_-}{\eta}\big)}
      \nonumber\\
  & \hspace{7.6cm}
  \times
        \mathcal{S}_{\{\alpha\},\{\beta\}}\big[\,\widetilde{\mathsf{A}}_{-{\bar{\zeta}}_+,-{\bar{\zeta}}_-}\big].
       \label{sc-S-2}
\end{align}
Finally, the scalar product  of the separate states $\bra{\underline\alpha}$ and $\ket{\beta}$ vanishes if $n_\alpha+n_\beta<N$. For $n_\alpha+n_\beta\ge N$, it is equal to
\begin{align}
   \moy{\underline\alpha\, |\, \beta} 
   &= 
   \prod_{n=1}^N\frac{\xi_n-{\bar{\zeta}}_-}{\big(\bar{\zeta}^2-\xi_n^2\big)\,{\bar{\mathsf{b}}}_-}\,
   \frac{(2\eta)^{N-n_\alpha-n_\beta}}{(n_\alpha+n_\beta-N)!}\   \mathcal{A}_{ \{\alpha\}\cup\{\beta\} }\big[\, \widetilde{\mathsf{A}}_{\bar{\zeta},-\bar{\zeta}}\big],
   \label{sc-A-3}\\
   &= \prod_{n=1}^N\frac{\xi_n-{\bar{\zeta}}_-}{\big(\bar{\zeta}^2-\xi_n^2\big)\,{\bar{\mathsf{b}}}_-}\, \frac{(2\eta)^{N-n_\alpha-n_\beta}}{(n_\alpha+n_\beta-N)!}\
      \mathcal{S}_{\{\alpha\},\{\beta\}}\big[\,\widetilde{\mathsf{A}}_{\bar{\zeta},-\bar{\zeta}}\big],
       \label{sc-S-3}
\end{align}
in which $\bar{\zeta}$ is an arbitrary parameter.
\end{theorem}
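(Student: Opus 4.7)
The plan is to start from the representation \eqref{repr-sp2} of Proposition~\ref{prop-sc1} and reshape it, via the algebraic identities of Subsection~\ref{subsec-id}, into the form claimed by the theorem. Writing $\gamma=\alpha\beta$, one first recognizes the $(i,j)$-entry of the determinant in \eqref{repr-sp2} as being of the form $\sum_{\epsilon=\pm}\psi(\epsilon\xi_i)(\xi_i+\epsilon\eta/2)^{2(j-1)}$ for a function $\psi$ satisfying $\psi(\xi_i)=-g_i\,\gamma(\xi_i-\eta/2)$ and $\psi(-\xi_i)=\gamma(\xi_i+\eta/2)$. This converts the determinant into $\widehat V(\xi_1,\ldots,\xi_N)\,\mathcal{A}_{\{\xi\}}[\psi]$, matching the denominator already present in \eqref{repr-sp2}.

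The next step is to identify $\psi$ with $f_{\bar\zeta_+,\bar\zeta_-,\{\alpha\}\cup\{\beta\}}$ up to a scalar factor per row: a direct calculation shows that the two ratios $\psi(\pm\xi_i)/f_{\bar\zeta_+,\bar\zeta_-,\{\alpha\}\cup\{\beta\}}(\pm\xi_i)$ coincide for each $i$ with a common quantity $C_i$, so that $\mathcal{A}_{\{\xi\}}[\psi]=\prod_i C_i\cdot\mathcal{A}_{\{\xi\}}[f_{\bar\zeta_+,\bar\zeta_-,\{\alpha\}\cup\{\beta\}}]$. Applying then Corollary~\ref{prop-id-2-2} with $\{x\}=\{\xi\}$ of size $N$ and $\{z\}=\{\alpha\}\cup\{\beta\}$ of size $n_\alpha+n_\beta$ transforms the right-hand side into $\mathcal{A}_{\{\alpha\}\cup\{\beta\}}[f_{\eta/2-\bar\zeta_+,\eta/2-\bar\zeta_-,\{\xi\}}]$ times the Gamma-function prefactor of \eqref{MainK-id-5}. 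A short computation reveals that $\widetilde{\mathsf{A}}_{\bar\zeta_+,\bar\zeta_-}(\lambda)/f_{\eta/2-\bar\zeta_+,\eta/2-\bar\zeta_-,\{\xi\}}(\lambda)$ is an even function of $\lambda$, proportional to $\prod_n\frac{((\lambda-\eta/2)^2-\xi_n^2)((\lambda+\eta/2)^2-\xi_n^2)}{\lambda^2-\xi_n^2}$, and so can be extracted row-by-row from the determinant to turn $\mathcal{A}_{\{\alpha\}\cup\{\beta\}}[f_{\eta/2-\bar\zeta_+,\eta/2-\bar\zeta_-,\{\xi\}}]$ into $\mathcal{A}_{\{\alpha\}\cup\{\beta\}}[\widetilde{\mathsf{A}}_{\bar\zeta_+,\bar\zeta_-}]$ up to one more explicit factor. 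The representation \eqref{sc-A-1} follows once all accumulated factors collapse to exactly the displayed prefactor; Identity~\ref{prop-id-3} with $\{x\}=\{\alpha\}$ and $\{y\}=\{\beta\}$, which is applicable thanks to the assumption $n_\beta\ge n_\alpha$, then yields the Slavnov-type form \eqref{sc-S-1}.

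The variants \eqref{sc-A-2} and \eqref{sc-S-2} follow from the same argument applied to \eqref{repr-sp2ter}, in which $g_i$ is replaced by $g_i^{-1}$; this is equivalent to sending $\bar\zeta_\pm\mapsto-\bar\zeta_\pm$, which simply relabels the final formula. The case \eqref{sc-A-3} and \eqref{sc-S-3} for $\moy{\underline\alpha|\beta}$ originates from \eqref{repr-sp2bis}, where $g_i$ is replaced by $1$; at the level of the swap identity this corresponds to the degenerate limit $\bar\zeta_++\bar\zeta_-=0$, for which the Gamma-ratio in \eqref{MainK-id-5} vanishes when $N>n_\alpha+n_\beta$ (reproducing the announced vanishing via Corollary~\ref{prop-id-2-1}) and, when $N\le n_\alpha+n_\beta$, collapses by analytic continuation to $(-1)^{n_\alpha+n_\beta-N}/(n_\alpha+n_\beta-N)!$, which combines with the $(2\eta)^{N-n_\alpha-n_\beta}$ factor to reproduce the announced expression.

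The principal obstacle is the careful bookkeeping of the scalar factors accumulated during these three successive transformations (the extraction of $\prod_i C_i$, the Gamma prefactor from the swap identity, and the row extraction of the even ratio $\widetilde{\mathsf{A}}/f$), together with the prefactors already present in \eqref{repr-sp2}. Verifying that they collapse, without any residual $\xi$-, $\alpha$- or $\beta$-dependent contribution, to the clean prefactors displayed in \eqref{sc-A-1}--\eqref{sc-A-3} is a routine but intricate manipulation of products of shifted polynomial factors and Vandermonde-like expressions, in the same spirit as the computations carried out in \cite{KitMNT16} for the antiperiodic XXX chain.
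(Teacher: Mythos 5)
Your proposal follows essentially the same route as the paper's own proof: recasting \eqref{repr-sp2} (and its variants \eqref{repr-sp2bis}, \eqref{repr-sp2ter}) as $\mathcal{A}_{\{\xi\}}\big[f_{\bar{\zeta}_+,\bar{\zeta}_-,\{\alpha\}\cup\{\beta\}}\big]$ up to row-wise factors, applying Corollary~\ref{prop-id-2-2} (respectively Identity~\ref{prop-id-2} and Corollary~\ref{prop-id-2-1} for the degenerate case $\bar{\zeta}_++\bar{\zeta}_-=0$) to exchange the two sets of variables, reabsorbing the even ratio to pass from $f$ to $\widetilde{\mathsf{A}}$, and invoking Identity~\ref{prop-id-3} for the Slavnov-type form. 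The steps you leave as ``routine but intricate'' bookkeeping are exactly those the paper also treats tersely (``reintroducing some of the pre-factors into the determinant''), so the argument is correct and matches the published proof.
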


\begin{proof}
From \eqref{repr-sp2}, we can express $\moy{\alpha\, |\, \beta}$ using the functions \eqref{def-A} and \eqref{f-zeta-z} introduced in the previous subsection:
\begin{equation}\label{sc-interm-1}
    \moy{\alpha\, |\, \beta}
    =(-1)^N 
    \prod_{n=1}^N\frac{\alpha(\xi_n^{(0)})\, \beta(\xi_n^{(0)})\, \alpha(\xi_n^{(1)})\, \beta(\xi_n^{(1)})}{(\xi_n-{\bar{\zeta}}_+)\,{\bar{\mathsf{b}}}_-\, \alpha(\xi_n)\, \beta(\xi_n)}\
       \mathcal{A}_{\{\xi\}}\big[f_{{\bar{\zeta}}_+,{\bar{\zeta}}_-,\{\alpha\}\cup\{\beta\}}\big].
\end{equation}
This enables us to apply Corollary~\ref{prop-id-2-2}. We get
\begin{multline}
    \moy{\alpha\, |\, \beta}
    =(-1)^{N+n_\alpha+n_\beta}\, (2\eta)^{N-n_\alpha-n_\beta}\, 
    \prod_{n=1}^N\frac{\alpha(\xi_n^{(0)})\, \beta(\xi_n^{(0)})\, \alpha(\xi_n^{(1)})\, \beta(\xi_n^{(1)})}{(\xi_n-{\bar{\zeta}}_+)\,{\bar{\mathsf{b}}}_-\,  \alpha(\xi_n)\, \beta(\xi_n)}
      \\
   \times 
   \frac{\Gamma\big(\frac{\bar{\zeta}_++\bar{\zeta}_-}{\eta}+N-n_\alpha-n_\beta\big)}{\Gamma\big(\frac{\bar{\zeta}_++\bar{\zeta}_-}{\eta}\big)}\,
   \mathcal{A}_{\{\alpha\}\cup\{\beta\}}\big[f_{-{\bar{\zeta}}_++\frac{\eta}{2},-{\bar{\zeta}}_-+\frac{\eta}{2},\{\xi\}}\big].
\end{multline}
Reintroducing some of the pre-factors into the determinant, we get \eqref{sc-A-1}. \eqref{sc-S-1} is then a direct consequence of Identity~\ref{prop-id-3}.

The scalar product $\moy{\underline\alpha\, |\, \underline\beta} $ can be obtained similarly.

Finally, the scalar product $\moy{\underline\alpha\, |\, \beta} $ can be expressed as in \eqref{sc-interm-1} in terms of $ \mathcal{A}_{\{\xi\}}\big[f_{\bar{\zeta},-\bar{\zeta},\{\alpha\}\cup\{\beta\}}\big]$ for any arbitrary parameter $\bar{\zeta}$. From Identity~\ref{prop-id-2} and Corollary~\ref{prop-id-2-1}, the latter is equal to
\begin{equation}
   \mathcal{A}_{\{\xi\}}\big[f_{\bar{\zeta},-\bar{\zeta},\{\alpha\}\cup\{\beta\}}\big]
   =\begin{cases}
      0  &\text{if } n_\alpha+n_\beta<N,\\
      {\displaystyle \frac{(-1)^N\, \mathcal{A}_{\{\alpha\}\cup\{\beta\}}\big[f_{-\bar{\zeta}+\frac{\eta}{2},\bar{\zeta}+\frac{\eta}{2} ,\{\xi\}}\big]}{(2\eta)^{n_\alpha+n_\beta-N} (n_\alpha+n_\beta-N) !} }
        \quad &\text{if } n_\alpha+n_\beta\ge N,
      \end{cases}
\end{equation}
which leads to the result.
\end{proof}

\subsection{The scalar product of a generic separate state with an eigenstate of the transfer matrix}

We would like to stress once again that the determinant representations of Theorem~\ref{th-sp-separate-gen} are valid for any separate states, and that we did not need to suppose that one of them is an eigenstate of the transfer matrix to obtain them. In other words, these formulas do not rely on the use of particular Bethe equations, they are just a natural consequence of the SoV construction of the space of states. Moreover, they are already in an adequate form for the study of the homogeneous and thermodynamic limits, and we can of course substitute any of the set of variables by a set of solutions of the Bethe equations (homogeneous or inhomogeneous) to study them in the case of eigenstates.

It is nevertheless worth noticing that, when $\bar{\mathsf{c}}_{+}=0$ and that one of the separate state is an eigenstate of the transfer matrix, the representations \eqref{sc-S-1} and \eqref{sc-S-2} slightly simplify, so that we obtain a direct generalization of the usual  determinant representation for the scalar products of Bethe states that are known in the context of algebraic Bethe ansatz \cite{Sla89,KitMT99,Wan02,KitKMNST07}.

Let us consider a separate state  built as in \eqref{separate-r} or \eqref{separate-l} from a $\lambda^2$-polynomial $\beta(\lambda)$ \eqref{poly-sep-state} of degree $n_\beta$, and an eigenstate of the transfer matrix associated with an eigenvalue $t(\lambda)$. We recall that this eigenstate can be built as in \eqref{eigen1} or \eqref{eigen2} either from the polynomial $Q_t(\lambda)$ \eqref{form-Q} of degree $q$ and roots $\lambda_1^2,\ldots,\lambda_q^2$ which solves \eqref{Eq-homo-1}, or from the polynomial $P_t(\lambda)$ of degree $p=N-q$ and roots $\mu_1^2,\ldots,\mu_p^2$ which solves \eqref{Eq-homo-2}.

\begin{theorem}
\label{th-charact-SP1}
Let the inhomogeneity parameters $\xi _{1},\ldots ,\xi _{N}$ be generic  \eqref{cond-inh}
and $\bar{\mathsf{c}}_+=0$.
Then, the scalar product $\moy{Q_t\, |\, \beta}=\moy{\beta\, |\, Q_t}$ vanishes if $n_\beta<q$. When $n_\beta\ge q$, it can be expressed as
\begin{multline}
  \moy{\beta\, |\, Q_t}
  =  \frac{(2\eta)^{ N-q-n_\beta}}{\prod_{n=1}^N\big[({\bar{\zeta}}_+-\xi_n)\,  {\bar{\mathsf{b}}}_-\big]}\,
    \frac{\Gamma\big(\frac{\bar{\zeta}_++\bar{\zeta}_-}{\eta}\! +\! N\! -\! q\! -\! n_\beta\big)}{\Gamma\big(\frac{\bar{\zeta}_++\bar{\zeta}_-}{\eta}\big)}\,
    \prod_{k=1}^q\widetilde{\mathsf{A}}_{\bar{\zeta}_+,\bar{\zeta}_-}(-\lambda_k)    
     \\
     \times  \prod_{i=1}^{n_\beta}\frac{2(-1)^N \bar{\zeta}_+\bar{\zeta}_-\, Q_t(\beta_i)}{\eta^2-4\beta_i^2}\,
     \frac{\widehat{V}(\lambda_1-\frac{\eta}{2},\ldots,\lambda_q-\frac{\eta}{2})}{\widehat{V}(\lambda_1+\frac{\eta}{2},\ldots,\lambda_q+\frac{\eta}{2})}\,
       \frac{\det_{n_\beta}\mathcal{S}_t(\{\beta\})}{\widehat{V}(\lambda_q,\ldots,\lambda_1)\, \widehat{V}(\beta_1,\ldots,\beta_{n_\beta})},
\end{multline}
where
\begin{equation}\label{Slav-mat-eigen}
   \big[\, \mathcal{S}_t(\{\beta\}) \big]_{i,k}
   =     \begin{cases}
      {\displaystyle
       \frac{\partial\, t(\beta_i)}{\partial\lambda_k}
     }
     \quad &\text{if } k\le q,
     \vspace{2mm}\\
     {\displaystyle
     \sum_{\epsilon\in\{+,-\}} \!\!\! \epsilon\, \mathsf{A}_{\bar{\zeta}_+,\bar{\zeta}_-}(-\epsilon\beta_i)\, 
     \frac{Q_t(\beta_i+\epsilon\eta)}{Q_t(\beta_i)}
     \left(\beta_i+\epsilon\frac{\eta}{2}\right)^{\! 2(k-q)-1}
     }
     \ \ &\text{if } k >q.
     \end{cases}
\end{equation}
The scalar product $\moy{Q_t\, |\,\underline\beta}=\moy{\underline\beta, |\, Q_t}$ vanishes if $n_\beta<N-q$. When $n_\beta\ge N-q=p$, it can be expressed as
\begin{multline}
  \moy{\underline\beta\, |\, Q_t}
  =
    \prod_{n=1}^N\frac{\xi_n-{\bar{\zeta}}_-}{({\bar{\zeta}}_+-\xi_n)\, (\xi_n+{\bar{\zeta}}_-)\,{\bar{\mathsf{b}}}_-}\
    (2\eta)^{\! N-q-n_\beta}\, 
    \frac{\Gamma\big(\!-\!\frac{\bar{\zeta}_++\bar{\zeta}_-}{\eta}\! +\! q\! -\! n_\beta\big)}{\Gamma\big(\!-\!\frac{\bar{\zeta}_++\bar{\zeta}_-}{\eta}\big)}
    \\
    \times
     \frac{\prod_{k=1}^q d(\lambda_k)\, d(-\lambda_k)}{\prod_{k=1}^p d(\mu_k)\, d(-\mu_k)}\
     \prod_{k=1}^{p}\widetilde{\mathsf{A}}_{-\bar{\zeta}_+,-\bar{\zeta}_-}(-\mu_k)\
     \prod_{i=1}^{n_\beta}\frac{2(-1)^N \bar{\zeta}_+\bar{\zeta}_-\,P_t(\beta_i)}{\eta^2-4\beta_i^2}
     \\
     \times
     \frac{\widehat{V}(\mu_1-\frac{\eta}{2},\ldots,\mu_{p}-\frac{\eta}{2})}{\widehat{V}(\mu_1+\frac{\eta}{2},\ldots,\mu_{p}+\frac{\eta}{2})}\,
       \frac{\det_{n_\beta}\bar{\mathcal{S}}_t(\{\beta\})}{\widehat{V}(\mu_{p},\ldots,\mu_1)\, \widehat{V}(\beta_1,\ldots,\beta_{n_\beta})},
\end{multline}
where
\begin{equation}\label{barSlav-mat-eigen}
   \big[\, \bar{\mathcal{S}}_t(\{\beta\}) \big]_{i,k}
   =     \begin{cases}
      {\displaystyle
       \frac{\partial\, t(\beta_i)}{\partial\mu_k}
     }
     \quad &\text{if } k\le p,
     \vspace{2mm}\\
     {\displaystyle
     \sum_{\epsilon\in\{+,-\}} \!\!\! \epsilon\, \mathsf{A}_{-\bar{\zeta}_+,-\bar{\zeta}_-}(-\epsilon\beta_i)\, 
     \frac{P_t(\beta_i+\epsilon\eta)}{P_t(\beta_i)}
     \left(\beta_i+\epsilon\frac{\eta}{2}\right)^{\! 2(k-p)-1}
     }
     \ \ &\text{if } k > p.
     \end{cases}
\end{equation}

\end{theorem}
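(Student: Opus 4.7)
The plan is to derive both determinant representations as consequences of Theorem~\ref{th-sp-separate-gen}, combined with the Bethe-vector realization of $\ket{Q_t}$ from Proposition~\ref{th-ABA-1} and the symmetries of Proposition~\ref{prop-sc1}, via a column reduction of the resulting generalized Slavnov determinant using the homogeneous $T$-$Q$ equations \eqref{Eq-homo-1} and \eqref{Eq-homo-2}.

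\textbf{Vanishing.} The scalar product $\moy{\underline\beta\,|\,Q_t}$ fits the mixed scalar product $\moy{\underline\alpha|\beta}$ of \eqref{sc-A-3} with $\underline\alpha=\underline\beta$ of degree $n_\beta$ and $\beta=Q_t$ of degree $q$, and therefore vanishes whenever $n_\beta+q<N$, i.e.\ $n_\beta<p$. For $\moy{\beta\,|\,Q_t}$ with $n_\beta<q$, I would use Proposition~\ref{th-ABA-1} to rewrite $\ket{Q_t}$ as a constant multiple of $\ket{\underline{P_t}}$, so that (by the identities of Proposition~\ref{prop-sc1}) the scalar product equals a constant times $\moy{\underline{P_t}\,|\,\beta}$; \eqref{sc-A-3} then gives the vanishing, now for $p+n_\beta<N$, i.e.\ $n_\beta<q$.

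\textbf{Non-vanishing formula for $\moy{\beta|Q_t}$.} For $n_\beta\geq q$, I would apply \eqref{sc-S-1} with $\alpha=Q_t$ to express the scalar product through the matrix $\mathcal{S}_{\{\lambda\},\{\beta\}}\big[\widetilde{\mathsf{A}}_{\bar{\zeta}_+,\bar{\zeta}_-}\big]$ of \eqref{def-gen-Slav-mat}. The key observation is that on the columns $k\leq q$, the $T$-$Q$ equation \eqref{Eq-homo-1} at $\lambda_k$ (where $Q_t(\lambda_k)=0$) yields $Q_t(\lambda_k+\eta)/Q_t(\lambda_k-\eta)=-\mathsf{A}(\lambda_k)/\mathsf{A}(-\lambda_k)$, which combined with the explicit relation $\widetilde{\mathsf{A}}(-\lambda)=\tfrac{2(-1)^N\bar\zeta_+\bar\zeta_-}{2\lambda+\eta}\mathsf{A}(\lambda)$ produces the identity $\widetilde{\mathsf{A}}(\lambda_k)\,\varphi_{\{\lambda\}}(\lambda_k)=\widetilde{\mathsf{A}}(-\lambda_k)$. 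The bracket in \eqref{def-gen-Slav-mat} therefore collapses, allowing $\widetilde{\mathsf{A}}(-\lambda_k)$ to be factored out column-wise for each $k\leq q$. Uniformly rewriting $\widetilde{\mathsf{A}}(\epsilon\beta_i)$ in terms of $\mathsf{A}(-\epsilon\beta_i)$ and using $\eta+2\epsilon\beta_i=2\epsilon(\beta_i+\epsilon\eta/2)$ in both types of columns then produces the common row prefactor $\tfrac{2(-1)^N\bar\zeta_+\bar\zeta_-Q_t(\beta_i)}{\eta^2-4\beta_i^2}$ and identifies the residual $k>q$ entries with those of \eqref{Slav-mat-eigen}. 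The residual $k\leq q$ entries, after combining the two fractions of the bracket and invoking the $\lambda_k$-derivative of the $T$-$Q$ equation evaluated at $\lambda=\beta_i$, reduce to $\partial_{\lambda_k}t(\beta_i)$, up to a contribution proportional to $t(\beta_i)Q_t(\beta_i)/(\beta_i^2-\lambda_k^2)$ (arising from the $\lambda_k$-derivative of $Q_t(\beta_i)$) which is a polynomial in $\beta_i^2$ and can be eliminated by column operations using the polynomial basis provided by the $k>q$ entries. Reassembling the Vandermonde factors of \eqref{def-gen-Slav-det} produces the claimed representation.

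\textbf{Non-vanishing formula for $\moy{\underline\beta|Q_t}$ and main obstacle.} For $n_\beta\geq p$, I would replace $\ket{Q_t}$ by $\tfrac{\prod_k d(\lambda_k)d(-\lambda_k)}{\prod_k d(\mu_k)d(-\mu_k)}\ket{\underline{P_t}}$ from Proposition~\ref{th-ABA-1} and apply \eqref{sc-S-2} with $\alpha=P_t$; the column reduction proceeds identically, now based on \eqref{Eq-homo-2} at the Bethe roots $\mu_k$ of $P_t$ with the substitution $\bar\zeta_\pm\mapsto-\bar\zeta_\pm$. The principal technical obstacle throughout is this column reduction: one must check in detail that the spurious $t(\beta_i)/(\beta_i^2-\lambda_k^2)$ contribution generated by differentiating the $T$-$Q$ equation indeed lies in the column span of the $k>q$ (respectively $k>p$) entries, so that elementary column operations leave exactly $\partial_{\lambda_k}t(\beta_i)$ (respectively $\partial_{\mu_k}t(\beta_i)$) in the first block of columns. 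The remaining manipulations -- extraction of row and column prefactors, matching of the $\Gamma$-function ratios of \eqref{sc-S-1} and \eqref{sc-S-2}, and reassembly of the Vandermonde factors -- are essentially mechanical.
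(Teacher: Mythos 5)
Your strategy is the paper's own: the vanishing statements follow from the proportionality $\ket{Q_t}\propto\ket{\underline{P_t}}$ of Proposition~\ref{th-ABA-1} combined with the vanishing of the mixed scalar products of Theorem~\ref{th-sp-separate-gen}, and the determinant formulas are obtained by specializing \eqref{sc-S-1} (resp.\ \eqref{sc-S-2}) to $\alpha=Q_t$ (resp.\ $\alpha=P_t$), simplifying the first $q$ (resp.\ $p$) columns of \eqref{def-gen-Slav-mat} through the Bethe equations $\varphi_{\{\lambda\}}(\lambda_k)=\widetilde{\mathsf{A}}_{\bar{\zeta}_+,\bar{\zeta}_-}(-\lambda_k)/\widetilde{\mathsf{A}}_{\bar{\zeta}_+,\bar{\zeta}_-}(\lambda_k)$ issued from \eqref{Eq-homo-1}, and comparing with the $\lambda_k$-derivatives of $t(\lambda)=\sum_{\epsilon=\pm}\mathsf{A}_{\bar{\zeta}_+,\bar{\zeta}_-}(\epsilon\lambda)\,Q_t(\lambda-\epsilon\eta)/Q_t(\lambda)$.

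The one genuine gap is exactly the step you single out as the principal obstacle, and it comes from a miscount: there is no spurious contribution to eliminate, so the unproved claim that a column proportional to $t(\beta_i)Q_t(\beta_i)/(\beta_i^2-\lambda_k^2)$ lies in the span of the $k>q$ columns (with $i$-independent coefficients) is both unjustified and unnecessary. The entry $\partial t(\beta_i)/\partial\lambda_k$ in \eqref{Slav-mat-eigen} is the derivative of the full ratio expression above, so it already contains the term $2\lambda_k\,t(\beta_i)/(\beta_i^2-\lambda_k^2)$ produced by differentiating the denominator $Q_t(\lambda)$ — precisely the contribution you propose to subtract. Concretely, once the Bethe equations collapse the bracket, the $(i,k)$ entry with $k\le q$ becomes
\begin{equation*}
\widetilde{\mathsf{A}}_{\bar{\zeta}_+,\bar{\zeta}_-}(-\lambda_k)\sum_{\epsilon=\pm}\widetilde{\mathsf{A}}_{\bar{\zeta}_+,\bar{\zeta}_-}(\epsilon\beta_i)\, Q_t(\beta_i+\epsilon\eta)\,
\frac{2\lambda_k\eta}{\big[(\beta_i+\epsilon\eta)^2-\lambda_k^2\big]\big[\beta_i^2-\lambda_k^2\big]}\, ,
\end{equation*}
where the denominator has been rewritten via $\big[(\beta_i+\epsilon\frac{\eta}{2})^2-(\lambda_k+\frac{\eta}{2})^2\big]\big[(\beta_i+\epsilon\frac{\eta}{2})^2-(\lambda_k-\frac{\eta}{2})^2\big]=\big[(\beta_i+\epsilon\eta)^2-\lambda_k^2\big]\big[\beta_i^2-\lambda_k^2\big]$. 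Splitting the product of denominators by partial fractions, expressing $\widetilde{\mathsf{A}}_{\bar{\zeta}_+,\bar{\zeta}_-}(\epsilon\beta_i)$ through $\mathsf{A}_{\bar{\zeta}_+,\bar{\zeta}_-}(-\epsilon\beta_i)$, and using the $T$-$Q$ equation at $\lambda=\beta_i$ to recombine the $1/(\beta_i^2-\lambda_k^2)$ pieces into $t(\beta_i)\,Q_t(\beta_i)$, one checks that this is $\partial t(\beta_i)/\partial\lambda_k$ multiplied by exactly the row factor $2(-1)^N\bar{\zeta}_+\bar{\zeta}_-\,Q_t(\beta_i)/(\eta^2-4\beta_i^2)$ and the column factor $\widetilde{\mathsf{A}}_{\bar{\zeta}_+,\bar{\zeta}_-}(-\lambda_k)$ appearing in the statement; no column operations are involved. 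Replacing your elimination step by this direct identification (and repeating it with $P_t$, the roots $\mu_k$ and $-\bar{\zeta}_\pm$ for $\moy{\underline{\beta}\,|\,Q_t}$) closes the gap and makes your proof coincide with the paper's.
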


\begin{proof}[Proof]
For the computation of $\moy{Q_t\, |\, \beta}=\moy{\beta\, |\, Q_t}$ when $n_\beta<q$, we use the fact that
\begin{equation}\label{Q-P}
   \ket{Q_t}=\frac{\prod_{k=1}^q d(\lambda_k)\, d(-\lambda_k)}{\prod_{k=1}^p d(\mu_k)\, d(-\mu_k)}\,
   \ket{\underline{P_t}},
\end{equation}
and that the scalar product $\moy{\beta\, |\, \underline{P_t}}$ vanishes since $p+n_\beta=N-q+n_\beta<N$.
When instead $n_\beta\ge q$, we use the formula \eqref{sc-S-1}. The Bethe equations for  $\lambda_1,\ldots, \lambda_q$   following from \eqref{Eq-homo-1} can be written as
\begin{equation}
   \varphi_{\{\lambda\}}(\lambda_k)=\frac{\widetilde{\mathsf{A}}_{\bar{\zeta}_+,\bar{\zeta}_-}(-\lambda_k)}{\widetilde{\mathsf{A}}_{\bar{\zeta}_+,\bar{\zeta}_-}(\lambda_k)},
   \qquad 1\le k \le q,
\end{equation}
which simplifies the matrix elements in the first $q$th columns of the matrix $\mathcal{S}_{\boldsymbol{\lambda},\boldsymbol{\beta}}$.
Comparing with the derivatives of 
\begin{equation}
    t(\lambda )
=\sum_{\epsilon\in\{+,-\}}\mathsf{A}_{\bar{\zeta}_{+},\bar{\zeta}_{-}}(\epsilon\lambda)\, \frac{Q_{t}(\lambda -\epsilon\eta )}{Q_{t}(\lambda )},
\end{equation}
with respect to $\lambda_k$, we get the result.

The proof is similar for the computation of $\moy{Q_t\, |\, \underline\beta}=\moy{\underline\beta\, |\, Q_t}$: we use \eqref{Q-P} and \eqref{sc-S-2} in the case $n_\beta\ge p$, and the expression of $t(\lambda)$ issued from the functional equation \eqref{Eq-homo-2}.
\end{proof}

\section*{Acknowledgements}
J. M. M., G. N. and V. T. are supported by CNRS. N. K. 
would like to thank LPTHE, University Paris VI, and Laboratoire de Physique, ENS-Lyon for hospitality.


\appendix

\section{Left and right $\mathcal{\bar{B}}_{-}$-eigenbasis}
\label{app-Beigen}


Let us denote with $\bra{ 0}\equiv \otimes _{n=1}^{N}\bra{ \uparrow ,n}$
the dual reference state with all spin up, and by $\ket{\underline{0}}\equiv \otimes
_{n=1}^{N} \ket{\downarrow ,n} $ the reference state with all spin down. They satisfy the properties
\begin{alignat}{4}
  &\bra{0} A(\lambda )=a(\lambda )\bra{0},
  \quad
  &\bra{0} D(\lambda)=d(\lambda )\bra{0},
  \quad 
  &\bra{0} B(\lambda )=0,
  \quad 
  &\bra{0} C(\lambda )\neq 0,\label{prop-bra0}\\
  &A(\lambda )\ket{\underline{0}}=d(\lambda )\ket{\underline{0}},
\quad
  &D(\lambda )\ket{\underline{0}}=a(\lambda )\ket{\underline{0}},
\quad
  &B(\lambda )\ket{\underline{0}}=0,
\quad  
  &C(\lambda)\ket{\underline{0}}\neq 0, \label{prop-bra0bar}
\end{alignat}
where $a(\lambda)$ and $d(\lambda)$ are given by \eqref{a-d}.
It follows from \eqref{prop-bra0}, \eqref{prop-bra0bar} and from the fact that
\begin{multline}\label{Bbord-bulk}
\mathcal{\bar{B}}_{-}(\lambda )= (-1)^N\Big\{-A(\lambda)\, \bar{a}_{-}(\lambda)\, B(-\lambda)
+A(\lambda)\, \bar{b}_{-}(\lambda)\,A(-\lambda)\\
-B(\lambda)\, \bar{c}_{-}(\lambda)\, B(-\lambda)+B(\lambda)\,\bar{d}_-(\lambda)\, A(-\lambda) \Big\},
\end{multline}
that the reference state $\bra{0}$ is a left eigenstate of the operator $\mathcal{\bar{B}}_{-}(\lambda )$ with eigenvalue ${\mathsf{B}}_{-,\mathbf{1}}(\lambda )= (-1)^N\, \bar{b}_-(\lambda)\, a(\lambda)\, a(-\lambda)$, whereas $\ket{\underline{0}}$ is a right eigenstate of $\mathcal{\bar{B}}_{-}(\lambda )$  with eigenvalue $\mathsf{B}_{-,\mathbf{0}}(\lambda)=(-1)^N \bar{b}_-(\lambda)\, d(\lambda)\, d(-\lambda)$.

For each $N$-tuple $\mathbf{h}\equiv(h_1,\ldots,h_N)\in\{0,1\}^N$, we define 
%
\begin{align}
  \label{left-SOV-state}
  &\bra{\mathbf{h}_-}\equiv \bra{0}\prod_{n=1}^{N}
    \left( \frac{\mathcal{\bar{A}}_{-}(\eta /2-\xi _{n})}{\mathsf{A}_{-}(\eta /2-\xi _{n})}\right) ^{1-h_{n}},
    \\
  \label{right-SOV-state}
 &\ket{\mathbf{h}_-} \equiv 
 \prod_{n=1}^{N}\left( \frac{\mathcal{\bar{D}}_{-}(\xi _{n}+\eta /2)}{k_{n}\, \mathsf{A}_{-}(\eta /2-\xi _{n})}\right)
^{h_{n}}\ket{\underline{0}},
\end{align}
where
\begin{equation}
   \mathsf{A}_-(\lambda)=(-1)^N \bar{a}_-(\lambda)\, a(\lambda)\, d(-\lambda),
   \qquad
   k_{n}=\frac{2\xi _{n}+\eta }{2\xi _{n}-\eta }.
\end{equation}

It is easy to see
that, for all  $\mathbf{h}\equiv(h_1,\ldots,h_N)\in\{0,1\}^N$, the states \eqref{left-SOV-state} form a basis of the dual space of states which is a left-eigenbasis of $\mathcal{\bar{B}}_{-}$, whereas the states \eqref{right-SOV-state} form a basis of the space of states which is a right-eigenbasis of $\mathcal{\bar{B}}_{-}$:
\begin{align}
   & \bra{\mathbf{h}_-}\mathcal{\bar{B}}_{-}(\lambda )
    ={\mathsf{B}}_{-,\mathbf{h}}(\lambda )    \bra{\mathbf{h}_-},
    \\
   & \mathcal{\bar{B}}_{-}(\lambda ) \ket{\mathbf{h}_-} ={\mathsf{B}}_{-,\mathbf{h}}(\lambda )\ket{\mathbf{h}_-},
\end{align}
with eigenvalues
\begin{align}
   &{\mathsf{B}}_{-,\mathbf{h}}(\lambda )= (-1)^N\, \bar{b}_-(\lambda)\, 
   \prod_{n=1}^N\big(\lambda-\xi_n^{(h_n)}\big)\big(-\lambda-\xi_n^{(h_n)}\big).
   \label{eigen-B}
\end{align}

The action of $\bar{\mathcal{A}}_-(\lambda)$ and $\bar{\mathcal{D}}_-(\lambda)$ on $\bra{\mathbf{h}_-}$ and on $\ket{\mathbf{h}_-}$ can then be determined as in \cite{Nic12}.
Using the fact that $\bar{\mathcal{A}}_-(\lambda)$ is a polynomial in $\lambda$ of degree $2N+1$ with leading coefficient $\frac{\lambda^{2N+1}}{\bar{\zeta}_-} \mathrm{Id}$, that $\bar{\mathcal{A}}_-(\eta/2)=(-1)^N\mathrm{det}_q M(0)$, and computing the action on $\bra{\mathbf{h}_-}$ of $\bar{\mathcal{A}}_-(\xi_n^{(h_n)})$ and  $\bar{\mathcal{A}}_-(-\xi_n^{(h_n)})$, $1\le n \le N$, we obtain that, for any $\mathbf{h}\in\{0,1\}^N$,
\begin{multline}
  \bra{\mathbf{h}_-}\,\bar{\mathcal{A}}_-(\lambda)
     =\sum_{a=1}^{N}\sum_{\epsilon=\pm 1}
        \frac{ \big(2\lambda -\eta\big)\big(\lambda +\epsilon\xi_{a}^{(h_a)}\big) }{2\xi_{a}^{(h_a)} \big(2\xi_{a}^{(h_a)}-\epsilon\eta \big)}
        \prod_{\substack{ b=1  \\ b\neq a}}^{N}
        \frac{\lambda ^{2}-\big( \xi_{b}^{(h_b)} \big) ^{2}}{\big( \xi_{a}^{(h_a)}\big) ^{2}-\big( \xi_{b}^{(h_b)}\big)^{2}}\\
        \times
        \mathsf{A}_{-}\big( \epsilon\xi_{a}^{(h_a)} \big)\,
       \bra{\mathrm{T}_a^{\epsilon} \mathbf{h}_-}
       \\
       +\Bigg\{(-1)^N \mathrm{det}_{q} M(0)
       \prod_{b=1}^{N}\frac{\lambda ^{2}-\big( \xi_{b}^{(h_b)}\big) ^{2}}
                                            {\big( \eta /2\big) ^{2}-\big( \xi_{b}^{(h_b)}\big) ^{2}}
       +\frac{2\lambda-\eta}{2\bar{\zeta}_-}\prod_{b=1}^\mathsf{N}\Big(\lambda^2-\big(\xi_{b}^{(h_b)}\big)^2\Big)\Bigg\}
        \bra{\mathbf{h}_-},
      \label{act-A}
\end{multline}
where we have used the notation
\begin{equation}
 \mathrm{T}_{n}^{\pm 1}\mathbf{h}
   = (h_1,\ldots, h_n \pm 1,\ldots, h_N) \quad \text{for } n\in\{1,\ldots,N\}.
\end{equation}
The action of $\mathcal{\bar{D}}_{-}(\lambda )$ on $\bra{\mathbf{h}_-}$ can then be obtained from the identity
\begin{equation}\label{id-DA}
\mathcal{\bar{D}}_{-}(\lambda )=\frac{2\lambda -\eta }{2\lambda }\mathcal{%
\bar{A}}_{-}(-\lambda )+\frac{\eta }{2\lambda }\mathcal{\bar{A}}_{-}(\lambda
).
\end{equation}
Similarly, the  action of $\mathcal{\bar{D}}_{-}(\lambda )$ on $\ket{\mathbf{h}_-}$ is given as
\begin{multline}
\mathcal{\bar{D}}_{-}(\lambda )\, \ket{\mathbf{h}_-}
     =\sum_{a=1}^{N}\sum_{\epsilon=\pm 1}
        \frac{ \big(2\lambda -\eta\big)\big(\lambda +\epsilon\xi_{a}^{(h_a)}\big) }{2\xi_{a}^{(h_a)} \big(2\xi_{a}^{(h_a)}-\epsilon\eta \big)}
        \prod_{\substack{ b=1  \\ b\neq a}}^{N}
        \frac{\lambda ^{2}-\big( \xi_{b}^{(h_b)} \big) ^{2}}{\big( \xi_{a}^{(h_a)}\big) ^{2}-\big( \xi_{b}^{(h_b)}\big)^{2}}
        \\
        \times
        k_a^\epsilon\,\mathsf{A}_{-}\big( -\epsilon\xi_{a}^{(1-h_a)} \big)\,
       \ket{\mathrm{T}_a^{\epsilon} \mathbf{h}_-}
       \\
       +\Bigg\{(-1)^N \mathrm{det}_{q} M(0)
       \prod_{b=1}^{N}\frac{\lambda ^{2}-\big( \xi_{b}^{(h_b)}\big) ^{2}}
                                            {\big( \eta /2\big) ^{2}-\big( \xi_{b}^{(h_b)}\big) ^{2}}
       -\frac{2\lambda-\eta}{2\bar{\zeta}_-}\prod_{b=1}^\mathsf{N}\Big(\lambda^2-\big(\xi_{b}^{(h_b)}\big)^2\Big)\Bigg\}
        \ket{\mathbf{h}_-},
      \label{act-D}
\end{multline}
and the action of $\mathcal{\bar{A}}_{-}(\lambda )$ on $\ket{\mathbf{h}_-}$ follows from
from the identity
\begin{equation}
\mathcal{\bar{A}}_{-}(\lambda )
=\frac{(2\lambda -\eta )}{2\lambda }\mathcal{\bar{D}}_{-}(-\lambda )
+\frac{\eta }{2\lambda }\mathcal{\bar{D}}_{-}(\lambda).  \label{A-decomp-right}
\end{equation}

It follows from the simplicity of the $\mathcal{\bar{B}}_{-}$-spectrum and from the above action of the operator $\bar{\mathcal{A}}_-(\lambda)$ (or $\bar{\mathcal{D}}_-(\lambda)$) that the left basis \eqref{left-SOV-state} and the right basis \eqref{right-SOV-state} are orthogonal with respect to the canonical scalar product in the spin basis, with
\begin{equation}\label{norm1}
     \moy{\mathbf{h}'_- \mid\mathbf{h}_-}
     = \delta_{\mathbf{h},\mathbf{h}'}\,\frac{  N_{\boldsymbol{\xi},-} }{\widehat{V}\big( \xi_1^{(h_1)} ,\ldots,\xi_N^{(h_N)}\big)}\,  .
\end{equation}
The normalization coefficient $N_{\boldsymbol{\xi},-}$ is equal to
\begin{align} 
     N_{\boldsymbol{\xi},-}
     &=\widehat{V}\big(\xi_1^{(0)} ,\ldots,\xi_N^{(0)}\big)\,  
     \bra{0}\prod_{n=1}^{N}    \frac{\mathcal{\bar{A}}_{-}(\eta /2-\xi _{n})}{\mathsf{A}_{-}(\eta /2-\xi _{n})}\,
     \ket{\underline{0}}
     \\
     &=\widehat{V}( \xi_1,\ldots, \xi_N)\,
    \frac{\widehat{V}( \xi_1^{(0)},\ldots, \xi_N^{(0)})}{\widehat{V}( \xi_1^{(1)},\ldots, \xi_N^{(1)})}
    \,\prod_{n=1}^N\frac{\bar{b}_{-}(\frac{\eta}{2}-\xi_n)}{\bar{a}_-(\frac{\eta}{2}-\xi_n)}.
\end{align}


\section{Proof of Identity~\ref{prop-id-3}}
\label{app-id}

This appendix is devoted to the proof of Identity~\ref{prop-id-3}, which follows
the same line of arguments as in the proof of Identity 3 of \cite{KitMNT16}.

Let us consider two sets of pairwise distinct variables $\{x\}\equiv\{x_1,\dots , x_M\}$ and $\{y\}\equiv\{y_1,\dots, y_L\}$ with $S=L-M\ge 0$.
For each $\epsilon\in\{+,-\}$, we introduce polynomials in $\lambda^2$ associated with the set of $M$ variables $x_1+\epsilon\frac{\eta}{2},\dots , x_M+\epsilon\frac{\eta}{2}$ as follows:
\begin{equation*}
     X^{(\epsilon)}(\lambda)=\prod_{\ell=1}^M\Big[\la^2-\Big(x_\ell+\epsilon\frac{\eta}{2}\Big)^{\!2\,}\Big],
     \quad \quad
     X^{(\epsilon)}_k(\lambda)=\frac{X^{(\epsilon)}(\lambda)}{\lambda^2-(x_k+\epsilon\frac{\eta}{2})^2},
     \quad 1\le k\le M.
\end{equation*}
We also introduce similar polynomials associated with a set of $S$ arbitrary pairwise distinct variables $w_1,\dots,w_S$:
\begin{equation*}
    W(\la)=\prod_{\ell=1}^S\left(\lambda^2-w_\ell^2\right), 
    \quad \text{and} \quad
     W_k(\lambda)=\prod_{\substack{\ell=1 \\ \ell\not= k}}^S\left(\lambda^2-w_\ell^2\right),
     \quad 1\le k\le S.
\end{equation*}
Using these polynomials, we define an auxiliary $(M+L)\times (M+L)$ matrix $\widetilde{\mathcal{C}}$ with coefficients $\widetilde{\mathcal{C}}_{j,k}$ given by the following relations:
\begin{alignat}{2}
   &X^{(+)}(\lambda)\ X_k^{(-)}(\lambda)=\sum_{j=1}^{M+L} \widetilde{\mathcal{C}}_{j,k}\, \lambda^{2(j-1)},
    & \qquad &1\le k\le \! M,  \nonumber\\
   &X_k^{(+)}(\lambda)\ X^{(-)}(\lambda)=\sum_{j=1}^{M+L} \widetilde{\mathcal{C}}_{j, M+k}\, \lambda^{2(j-1)},    
     & \qquad &1\le k\le \! M, \nonumber\\
   &X^{(+)}(\lambda)\ X^{(-)}(\lambda)\ W_k(\lambda)=\sum_{j=1}^{M+L} \widetilde{\mathcal{C}}_{j,2M+k}\, \lambda^{2(j-1)},
   & \qquad &1\le k\le S.
\end{alignat}
The determinant of this matrix can easily be computed. Indeed, we have
\begin{align*}
 &\sum_{j=1}^{M+L}\widetilde{\mathcal{C}}_{j,k}\, \Big(x_i-\frac{\eta}{2}\Big)^{2(j-1)}
 =\delta_{i,k} \ X^{(+)}\Big(x_i-\frac{\eta}{2}\Big)\ X_i^{(-)}\Big(x_i-\frac{\eta}{2}\Big),
 \\
 &\sum_{j=1}^{M+L}\widetilde{\mathcal{C}}_{j,k}\, \Big(x_i+\frac{\eta}{2}\Big)^{2(j-1)}
 =\delta_{i+M,k}\ X_i^{(+)}\Big(x_i+\frac{\eta}{2}\Big)\ X^{(-)}\Big(x_i+\frac{\eta}{2}\Big),
 \\
 &\sum_{j=1}^{M+L}\widetilde{\mathcal{C}}_{j,k}\, w_i^{2(j-1)}
 = \delta_{i+2M,k}\ X^{(+)}(w_i)\ X^{(-)}(w_i)\ W_i(w_i)+\widetilde{\widetilde{\mathcal{C}\,}}_{\! i,k},
\end{align*}
with $\widetilde{\widetilde{\mathcal{C}\,}}_{\! i,k}=0$ if $k>2M$, so that
\begin{multline}
 \widehat{V}\Big(x_1-\frac{\eta}{2},\ldots,x_M-\frac{\eta}{2},x_1+\frac{\eta}{2},\ldots,x_M+\frac{\eta}{2},w_1,\ldots,w_S\Big)\times \det_{M+L} \widetilde{\mathcal{C}}\\
 =\prod_{i=1}^M
  \left[ X^{(+)}\Big(x_i-\frac{\eta}{2}\Big)\,X_i^{(-)}\Big(x_i-\frac{\eta}{2}\Big)\,
          X_i^{(+)}\Big(x_i+\frac{\eta}{2}\Big)\,X^{(-)}\Big(x_i+\frac{\eta}{2}\Big) \right]
          \\
          \times
    \prod_{i=1}^S \left[ X^{(+)}(w_i)\, X^{(-)}(w_i)\,W_i(w_i) \right] ,
\end{multline}
which leads to
\begin{equation}
\det_{M+L}  \widetilde{ \mathcal{C}}
= \widehat{V}(w_S,\ldots,w_1)\, 
   \widehat{V}\Big(x_M+\frac{\eta}{2},\ldots,x_1+\frac{\eta}{2},x_M-\frac{\eta}{2},\ldots,x_1-\frac{\eta}{2}\Big). 
\end{equation}

Let us now  compute the product of $\mathcal{A}_{ \{x\}\cup\{y\} }[f]$ with the determinant of the matrix $\widetilde{ \mathcal{C}}$. It is given in terms of the determinant of a $(M+L)\times (M+L)$ matrix $\mathcal{G}$,
 \begin{equation}
    \mathcal{A}_{ \{x\}\cup\{y\} }[f] \cdot  \det_{M+L} \widetilde{\mathcal{C}}
  =\frac{ \det_{M+L}\mathcal{G} }{ \widehat{V}(x_1,\ldots,x_M,y_1,\ldots,y_L) },
 \end{equation}
 which can be expressed as the following block matrix:
 \begin{equation}
 \mathcal{G}
 =\begin{pmatrix}
 \mathcal{G}^{(1,1)}&\mathcal{G}^{(1,2)}&\mathcal{G}^{(1,3)}\\
 \mathcal{G}^{(2,1)}&\mathcal{G}^{(2,2)}&\mathcal{G}^{(2,3)}
 \end{pmatrix}.
 \end{equation}
 In this expression, the blocks $ \mathcal{G}^{(1,1)}$ and $\mathcal{G}^{(1,2)}$ are a $M\times M$ diagonal matrices:
\begin{align}
  \mathcal{G}^{(1,1)}_{i,k}
  &= \!\sum_{j=1}^{M+L}\widetilde{\mathcal{C}}_{j,k}
      \!\!\sum_{\epsilon\in\{+,-\}} \!\!
      f(\epsilon x_i)\,
     \left(x_i+\epsilon\frac{\eta}{2}\right)^{\! 2(j-1)}
     \nonumber\\
  &= \delta_{i,k}\  f(- x_i)\     X^{(+)}\Big(x_i-\frac{\eta}{2}\Big)\ X_i^{(-)}\Big(x_i-\frac{\eta}{2}\Big),
\end{align}
and
\begin{align}
  \mathcal{G}^{(1,2)}_{i,k}
  &= \!\sum_{j=1}^{M+L}\widetilde{\mathcal{C}}_{j,k+M}
      \!\!\sum_{\epsilon\in\{+,-\}} \!\!
      f(\epsilon x_i)\,
     \left(x_i+\epsilon\frac{\eta}{2}\right)^{\! 2(j-1)}
     \nonumber\\
  &= \delta_{i,k}\  f( x_i)\
     X_i^{(+)}\Big(x_i+\frac{\eta}{2}\Big)\ X^{(-)}\Big(x_i+\frac{\eta}{2}\Big),
\end{align}
for $1\le i,k\le M$.
The $M\times S$ block $\mathcal{G}^{(1,3)}$ vanishes:
\begin{align}
  \mathcal{G}^{(1,3)}_{i,k}
  &= \!\sum_{j=1}^{M+L}\widetilde{\mathcal{C}}_{j,k+2M}
     \!\!\sum_{\epsilon\in\{+,-\}} \!\!
      f(\epsilon x_i)\,     \left(x_i+\epsilon\frac{\eta}{2}\right)^{\! 2(j-1)}
     \nonumber\\
  &= 0,
  \qquad\qquad \forall \, i\in\{1,\ldots,M\},\ \forall\, k\in\{1,\ldots,S\}.
\end{align}
The $L\times M$ blocks $\mathcal{G}^{(2,1)}$ and $\mathcal{G}^{(2,2)}$ have the respective following forms:
\begin{align}
   \mathcal{G}^{(2,1)}_{i,k}
     &=\!\sum_{j=1}^{M+L}\widetilde{\mathcal{C}}_{j,k}
     \!\!\sum_{\epsilon\in\{+,-\}} \!\!
      f(\epsilon y_i)\,     \left(y_i+\epsilon\frac{\eta}{2}\right)^{\! 2(j-1)}
     \nonumber\\
    &=
     \!\sum_{\epsilon\in\{+,-\}}\!\!
      f(\epsilon y_i)\
     \frac{X^{(+)}(y_i+\epsilon\frac{\eta}{2})\ X^{(-)}(y_i+\epsilon\frac{\eta}{2})}
             {(y_i+\epsilon\frac{\eta}{2})^2-(x_k-\frac{\eta}{2})^2},
\end{align}
and
\begin{align}
   \mathcal{G}^{(2,2)}_{i,k}
     &=\!\sum_{j=1}^{M+L}\widetilde{\mathcal{C}}_{j,M+k}
     \!\!\sum_{\epsilon\in\{+,-\}} \!\!
      f(\epsilon y_i)\,     \left(y_i+\epsilon\frac{\eta}{2}\right)^{\! 2(j-1)}
     \nonumber\\
    &=
     \!\sum_{\epsilon\in\{+,-\}}\!\!
      f(\epsilon y_i)\
     \frac{X^{(+)}(y_i+\epsilon\frac{\eta}{2})\ X^{(-)}(y_i+\epsilon\frac{\eta}{2})}
             {(y_i+\epsilon\frac{\eta}{2})^2-(x_k+\frac{\eta}{2})^2},
\end{align}
for $1\le i\le L$ and $1\le k\le M$.
Finally, the $L\times S$ block  $\mathcal{G}^{(2,3)}$ is given by
\begin{align}
     \mathcal{G}^{(2,3)}_{i,k}
     &=\!\sum_{j=1}^{M+L}\widetilde{\mathcal{C}}_{j,2M+k}
     \!\!\sum_{\epsilon\in\{+,-\}} \!\!
      f(\epsilon y_i)\,     \left(y_i+\epsilon\frac{\eta}{2}\right)^{\! 2(j-1)}
     \nonumber\\
    &=
     \!\sum_{\epsilon\in\{+,-\}}\!\!
      f(\epsilon y_i)\      X^{(+)}\Big(y_i+\epsilon\frac{\eta}{2}\Big)\ X^{(-)}\Big(y_i+\epsilon\frac{\eta}{2}\Big)\
     W_k\Big(y_i+\epsilon\frac{\eta}{2}\Big)
     \nonumber\\
    &=\sum_{j=1}^S \mathcal{C}_{j,k}^W   
     \!\!\sum_{\epsilon\in\{+,-\}} \!\!\!
     f(\epsilon y_i)\,     X^{(+)}\Big(y_i+\epsilon\frac{\eta}{2}\Big)\, X^{(-)}\Big(y_i+\epsilon\frac{\eta}{2}\Big)
      \left(y_i+\epsilon\frac{\eta}{2}\right)^{\! 2(j-1)},
\end{align}
where $\mathcal{C}^W$ is the $S\times S$ matrix with determinant $\widehat{V}(w_S,\ldots,w_1)$ defined from the set of variables $\{w_1,\ldots,w_S\}$ by the relations
\begin{equation}
W_k(\lambda)=\sul_{j=1}^S \mathcal{C}^W_{j,k}\, \lambda^{j-1}.
\end{equation}
Hence
\begin{equation}\label{det-G-1}
    \det_{M+L}\mathcal{G}
    =\det_{M+L}\begin{pmatrix}
 \mathcal{G}^{(1,1)}&\mathcal{G}^{(1,2)}& 0\\
 \mathcal{G}^{(2,1)}&\mathcal{G}^{(2,2)}&\widetilde{\mathcal{G}}^{(2,3)}
       \end{pmatrix}
       \cdot \widehat{V}(w_S,\ldots,w_1),
\end{equation}
where the elements of the $L\times S$ block $\widetilde{\mathcal{G}}^{(2,3)}$ are
\begin{equation}
 \widetilde{ \mathcal{G}}^{(2,3)}_{i,k}=\!\!\sum_{\epsilon\in\{+,-\}} \!\!
      f(\epsilon y_i)\,
        X^{(+)}\Big(y_i+\epsilon\frac{\eta}{2}\Big)\, X^{(-)}\Big(y_i+\epsilon\frac{\eta}{2}\Big)
     \left(y_i+\epsilon\frac{\eta}{2}\right)^{\! 2(k-1)}.
\end{equation}
Finally, the remaining determinant in \eqref{det-G-1} can be computed by blocks using the fact that $\mathcal{G}^{(1,1)}$ is an invertible matrix and we obtain
\begin{equation}
  \det_{M+L}\begin{pmatrix}
 \mathcal{G}^{(1,1)}&\mathcal{G}^{(1,2)}& 0\\
 \mathcal{G}^{(2,1)}&\mathcal{G}^{(2,2)}&\widetilde{\mathcal{G}}^{(2,3)}
       \end{pmatrix}
  =\det_M \mathcal{G}^{(1,1)}\cdot \det_L\widetilde{\mathcal{G}},
\end{equation}
where
\begin{multline}
   \big[\, \widetilde{\mathcal{G}}\, \big]_{i,k}
   =\!\!\sum_{\epsilon\in\{+,-\}} \!\!
      f(\epsilon y_i)\ 
     X(y_i)\, X(y_i+\epsilon\eta)
     \\
     \times
     \begin{cases}
      {\displaystyle
      \, \left[ \frac{1}{(y_i+\epsilon\frac{\eta}{2})^2-(x_k+\frac{\eta}{2})^2}
     -\frac{\big(\mathcal{G}^{(1,1)}_{k,k}\big)^{-1}\, \mathcal{G}^{(1,2)}_{k,k}}{(y_i+\epsilon\frac{\eta}{2})^2-(x_k-\frac{\eta}{2})^2}\right]
     }
     \quad &\text{if } k\le M,
     \vspace{2mm}\\
    \, \left(y_i+\epsilon\frac{\eta}{2}\right)^{\! 2(k-M-1)}
     \qquad &\text{if } k >M.
     \end{cases}
\end{multline}
Here we have used notably the fact that $X^{(+)}(y_i+\epsilon\frac{\eta}{2})\ X^{(-)}(y_i+\epsilon\frac{\eta}{2})=X(y_i)\, X(y_i+\epsilon\eta)$ and that $\mathcal{G}^{(1,1)}$ and $\mathcal{G}^{(1,2)}$ are diagonal matrices.
Using their explicit form, we conclude the proof of Identity~\ref{prop-id-3}.



\end{document}